%sergio: based on Armando's v12 of 11:07 May 15
% ============================================================================
\documentclass[11pt,a4paper]{article}
\usepackage{amsmath,latexsym,amssymb,color,amsthm}
\usepackage{fullpage}
\usepackage{ifthen,graphics,epsfig}
\usepackage{xspace}
\bibliographystyle{plain}
\usepackage{times}
\usepackage[english]{babel}
\usepackage{url,hyperref}
\usepackage{wrapfig}

%==========================================================================
\usepackage{float}
\newfloat{algorithm}{thp}{lop}
\floatname{algorithm}{Algorithm}
% =========================================================================
\bibliographystyle{plain}

%% =========================================================================
\newlength{\reduceunderfig}
\setlength{\reduceunderfig}{.2cm}
\usepackage[lmargin=1.5cm,rmargin=2cm,tmargin=2cm,bmargin=2cm]{geometry}
\renewcommand{\baselinestretch}{0.96}
\renewcommand{\paragraph}[1]{\vspace{0.2cm}\noindent \textbf{#1}~~}
\let\bibitemOld\bibitem
\renewcommand{\bibitem}[1]{\vspace{-0.2cm} \bibitemOld{#1}}
\let\subsectionOld\subsection
\renewcommand{\subsection}[1]{\vspace{-0.2cm}\subsectionOld{#1}\vspace{-0.2cm}}
\let\subsubsectionOld\subsubsection
\renewcommand{\subsubsection}[1]{\vspace{-0.3cm}\subsubsectionOld{#1}\vspace{-0.15cm}}
%% =========================================================================

\newcommand{\Xomit}[1]{}
%%%%%
%==========================================================================
\begin{document}
%-----------------------for square-----------------------------------------
\newlength {\squarewidth}
\renewenvironment {square}
{
\setlength {\squarewidth} {\linewidth}
\addtolength {\squarewidth} {-12pt}
\renewcommand{\baselinestretch}{0.75} \footnotesize
\begin {center}
\begin {tabular} {|c|} \hline
\begin {minipage} {\squarewidth}
\medskip
}{
%\medskip
\end {minipage}
\\ \hline
\end{tabular}
\end{center}
}

\newcommand{\m}[1]{\ensuremath{\mathcal{#1}}\xspace}
\newcommand{\ID}{\mbox{\rm ID}}
\newcommand{\id}{\mbox{\rm id}}
\newcommand{\val}{\mbox{\sl val}}
\newcommand{\setofval}{\mbox{\sl set}}
\renewcommand{\dim}{\mbox{\sl dim}}
\newcommand{\carrier}{\mbox{\sl carrier}}
\newcommand{\complex}{\mbox{\sl complex}}
\newcommand{\skel}{\mbox{\rm skel}}

% =========================================================================
\newcommand{\OP}{\widehat{OP}}
\newcommand{\MEM}{\mathit{MEM}}
%------------------------------------------------------------------------
%\def\vval{{{\#}_{v}}}

% =========================================================================
%-------- macros for algorithm ---------------------------------------
\newtheorem{definition}{Definition}
\newtheorem{theorem}{Theorem}
\newtheorem{lemma}{Lemma}
\newtheorem{corollary}{Corollary}
\newtheorem{claim}{Claim}
\newtheorem{property}{Property}
\newtheorem{remark}{Remark}

\newcommand{\toto}{xxx}
\newenvironment{proofT}{\noindent{\bf
Proof }} {\hspace*{\fill}$\Box_{Theorem~\ref{\toto}}$\par\vspace{3mm}}
\newenvironment{proofL}{\noindent{\bf
Proof }} {\hspace*{\fill}$\Box_{Lemma~\ref{\toto}}$\par\vspace{3mm}}
\newenvironment{proofC}{\noindent{\bf
Proof }} {\hspace*{\fill}$\Box_{Corollary~\ref{\toto}}$\par\vspace{3mm}}
\newenvironment{proofP}{\noindent{\bf
Proof }} {\hspace*{\fill}$\Box_{Property~\ref{\toto}}$\par\vspace{3mm}}
\newenvironment{proofCl}{\noindent{\bf
Proof }} {\hspace*{\fill}$\Box_{Claim~\ref{\toto}}$\par\vspace{3mm}}

%=========================================================================
\newenvironment{lemma-repeat}[1]{\begin{trivlist}
\item[\hspace{\labelsep}{\bf\noindent Lemma~\ref{#1} }]}%
{\end{trivlist}}

\newenvironment{theorem-repeat}[1]{\begin{trivlist}
\item[\hspace{\labelsep}{\bf\noindent Theorem~\ref{#1} }]}%
{\end{trivlist}}

\newenvironment{claim-repeat}[1]{\begin{trivlist}
\item[\hspace{\labelsep}{\bf\noindent Claim~\ref{#1} }]}%
{\end{trivlist}}
%=========================================================================

\newcounter{linecounter}
\newcommand{\linenumbering}{\ifthenelse{\value{linecounter}<10}{(0\arabic{linecounter})}{(\arabic{linecounter})}}
\renewcommand{\line}[1]{\refstepcounter{linecounter}\label{#1}\linenumbering}
\newcommand{\resetline}[1]{\setcounter{linecounter}{0}#1}
\renewcommand{\thelinecounter}{\ifnum \value{linecounter} > 9\else 0\fi \arabic{linecounter}}

%
%for comments among authors
%\newcommand{\citeComment}{\textcolor{red}{\textbf{[?]}}}
\newcommand{\comment}[1]{\textcolor{red}{\textbf{#1}}}
%\newcommand{\comment2}[2]{\marginpar{\textcolor{blue}{#1}}\textcolor{red}{#2}}

% ============================================================================
\title{\bf Specifying Concurrent Problems: \\ Beyond Linearizability}

%\title{\bf Beyond Linearizability:\\
%           Consistency Conditions  Concurrenrt Objects}

\author{Armando Casta\~{n}eda$^{\dag}$~~
        Sergio Rajsbaum$^{\dag}$~~
        Michel Raynal$^{\star,\circ}$\\~\\
$^{\dag}$   Instituto de Matem\'aticas, UNAM, M\'exico D.F, 04510, M\'exico\\
$^{\star}$  Institut Universitaire de France\\
$^{\circ}$  IRISA, Universit\'e de Rennes 35042 Rennes Cedex, France \\
{\small {\tt armando.castaneda@im.unam.mx~~
           rajsbaum@im.unam.mx~~ 
           raynal@irisa.fr}}
}
\date{}
\maketitle

%============================================================================
\begin{abstract}
Tasks and objects are two predominant ways of specifying distributed problems. 
A \emph{task}  is specified by an input/output relation, defining for each 
set of processes that may run concurrently, and each assignment of inputs to 
the processes in the set, the valid outputs of the processes. 
An \emph{object} is specified by an automaton describing the outputs 
the object may produce when it is accessed sequentially. Thus,
tasks explicitly state what may happen only when sets of processes
 run concurrently, while objects only specify what happens when processes 
access the object sequentially. Each one requires its own \emph{implementation} notion, to tell when an execution satisfies the specification.
For objects  \emph{linearizability}  is commonly used,  a very elegant and useful consistency condition.  
%The implementation of an  an object is linearizable if any of its  executions can be 
%transformed in a sequential one such that (1) it respects the real-time 
%order of invocation and responses and (2) the sequential execution is valid
%according to the object's automaton. 
For tasks  implementation notions are less explored.
%One of the versions used is that an algorithm implements a task if,
% in every execution where a set of processes participate (run to completion, 
%and the other crash from the beginning), input and outputs satisfy 
%the task specification.

These two orthogonal approaches are central, the former in distributed 
computability, and the later in concurrent programming, yet they have 
not been unified.
Sequential specifications are very convenient, especially important is
the {\it locality} property of linearizability, which states that one can build 
systems in a modular way, considering object implementations in isolation.
However, many important distributed computing problems, including some well-known
tasks,  have no sequential specification. 
Also, tasks are one-shot problems with a semantics that is not fully
understood (as we argue here), and with no clear locality property,
while objects can be invoked in general several times by the same process.

The paper introduces the notion of \emph{interval-sequential} object.
The corresponding implementation notion of  \emph{interval-linearizability}  generalizes
linearizability, and allows to associate states
along the interval of  execution of an operation.
Interval-linearizability allows to specify any task,
however, there are sequential one-shot objects that cannot be expressed as tasks,
under the simplest interpretation of a task. It also shows that
a natural extension of the notion of a task  is expressive enough to specify
any interval-sequential object.

Thus, on the one hand,
interval-sequential linearizability explains in more detail the semantics of
a task, gives a more precise implementation notion, and brings
a locality property to tasks. On the other hand, tasks provide a static
specification for automata-based formalisms.
\end{abstract}
~\\
\noindent
{\bf Keywords}:
asynchronous system, 
concurrent object, 
distributed task,
linearizability, 
object composability, 
%Process crash, 
sequential specification.

%========================================================================
\thispagestyle{empty}
\newpage
\setcounter{page}{1}
%========================================================================
\section{Introduction}
\label{sec:introduction}
~

\vspace{-0.7cm}
\paragraph{Concurrent objects and linearizability} %distributed tasks}
Distributed computer scientists excel at thinking concurrently, 
and building large distributed programs that work
under difficult conditions with highly asynchronous processes
that may fail. Yet,  they evade thinking
about \emph{concurrent} problem specifications. 
A central paradigm is that of a shared object 
that processes may access concurrently~\cite{HS08,R13,T06}, but the object
is specified in terms of a sequential specification, i.e., 
an automaton describing the outputs the object produces only when it is
accessed sequentially.  Thus, a concurrent algorithm  seeks to emulate an 
allowed sequential behavior. 
 
There are various ways of defining what it means for
an algorithm to \emph{implement} an object, namely, that it satisfies  its
 sequential specification. 
 One of the most popular consistency conditions is 
\emph{linearizability}~\cite{HW90}, 
(see surveys~\cite{DFK13,RS97}).
Given a sequential  specification of an object, 
an algorithm \emph{implements} the object
if every execution can be transformed to a sequential one such that 
(1) it respects the real-time order of invocation and responses and 
(2) the sequential execution is recognized by the automaton specifying 
the object.
It is then said that the corresponding object implementation  is
 \emph{linearizable}.
 Thus, an execution is linearizable if, for each operation call, it is possible to 
 find a unique point in the interval of real-time defined by the invocation and response of the operation,
 and these \emph{linearization points} induce a valid sequential execution.
Linearizability is very popular to design components of large systems
because it is \emph{local}, namely,  one can consider
linearizable object implementations in isolation and \emph{compose} them
without sacrificing linearizability of the whole system~\cite{FVC03}.
Also, linearizability is a \emph{non-blocking} property, which means that
a pending invocation (of a total operation) is never required to wait 
for another  pending invocation to complete.
%any pending invocation has a correct response.
Textbooks such as~\cite{AW04,HS08,R13,T06} include more detailed discussions
 of linearizability. 

Linearizability has various desirable properties, 
additionally to being local and non-blocking:  it allows talking about the state of an object,
interactions among operations is captured by side-effects on object states; 
documentation size of an object is linear in the number of operations; 
new operations can be added without changing descriptions of  old operations.
 However, as we argue here, linearizability is sometimes too restrictive.
%(and other sequential specifications
%such as  {atomicity}~\cite{L86-a,LH89,M86},  or { sequential consistency}~\cite{L79})
First,  there are problems which have no sequential specifications 
(more on this below).
Second, some problems are more naturally and succinctly defined 
in term of concurrent behaviors.
Third, as is well known, the specification of a problem should be
as general as possible, to allow maximum flexibility to both  programmers and 
program executions.
%Defining a concurrent problem
%in its more ample form,  promotes trying to fully explote the parallel resources of the distributed system,
%reducing the damage caused by slow or even crashed processes,
%and even considering more speculative optimization techniques e.g.~\cite{GKL12}. 
%Indeed, conventional implementation techniques motivated the definition of linearizability, because they  rely on critical sections. 

\paragraph{Distributed tasks}
Another predominant way of specifying a one-shot distributed problem,
 especially in
distributed computability,  is through 
the notion of a \emph{task}~\cite{MW87}.
Several tasks have been intensively studied in distributed computability,
 leading
to an understanding of their relative power~\cite{HRR13}, to
the design of  simulations between models~\cite{BGLR}, and
 to the development of a deep connection between distributed 
computing and topology~\cite{HKRbook}.
Formally, a {task} is specified by an input/output relation, defining for each 
set of processes that may run concurrently, and each assignment of inputs to 
the processes in the set, the valid outputs of the processes. 
Implementation notions for tasks are less explored, and they are not 
as elegant as linearizability.
In practice,   task and implementation are  usually described operationally, somewhat informally.
One of the versions widely used is that an algorithm \emph{implements}
 a task if, 
in every execution where a set of processes participate 
(run to completion, and the other crash from the beginning), 
input and outputs satisfy the task specification.
%However, only an informal operational notion of implementation is used  e.g.~\cite{AE14}.
%[[[\textbf{Armando} Do the authors of \cite{AE14} might get offended?]]]

 A main difference between tasks and objects is how they model the 
concurrency that naturally arises in distributed systems: whiles tasks 
explicitly state what might happen for several (but no all) 
concurrency patterns, objects only specify what happens when 
processes access the object sequentially.

It is remarkable that these two approaches have
largely remained independent\footnote{Also both approaches were proposed 
the same year, 1987, and  both are seminal to their respective research 
areas~\cite{HW87,MW87}.}, 
while the main distributed computing
paradigm, \emph{consensus}, is central to both.
Neiger~\cite{N94} noticed this and proposed a generalization of linearizability called
 \emph{set-linearizability}. He discussed that there are {tasks}, 
%like \emph{set agreement}~\cite{C93} 
like \emph{immediate snapshot}~\cite{BG93},
with no natural specification as sequential objects. 
In this task there is a single
operation ${\sf Immediate\_snapshot}()$, such that
 a snapshot of the shared memory occurs immediately after a write.
If one wants to model immediate snapshot as an object, 
the resulting object implements test-and-set, which is contradictory because
there are read/write algorithms solving the immediate snapshot task and it is well-known that
there are no read/write linearizable implementations of test-and-set.
Thus, it is meaningless to ask if there is a linearizable implementation 
of  immediate snapshot because there is no natural  sequential specification of it.
Therefore, Neiger proposed the notion of a
\emph{set-sequential} object, that allows a set of processes
to access an object simultaneously. Then, one
can define an immediate snapshot set-sequential object,
and there are \emph{set-linearizables} implementations. 

\paragraph{Contributions}
We propose the notion of
an \emph{interval-sequential} concurrent object,
a framework in which an object is specified by an automaton that can
 express any concurrency pattern of 
overlapping invocations of operations, that might occur in an execution 
(although one is not forced to describe all of them). The automaton
is a direct generalization of the automaton of a sequential object, except that
transitions are labeled with sets of invocations and responses,
allowing operations to span several consecutive transitions.
The corresponding implementation notion of  \emph{interval-linearizability}  generalizes
linearizability and set-linearizability, and allows to associate states
along the interval of  execution of an operation.
While  linearizing an execution requires  finding linearization \emph{points}, 
in interval-linearizability one needs to identify a linearization \emph{interval} for each operation (the  intervals might overlap).
Remarkably, this general notion  remains local and non-blocking.
We show that most important tasks  (including \emph{set agreement}~\cite{C93}) have no  specification  neither as a sequential objects nor as a set-sequential objects,
 but they can be naturally expressed as interval-sequential objects.

Establishing the relationship between tasks and (sequential, set-sequential
and interval-sequential) automata-based specifications is subtle,
because tasks admit several natural  interpretations. 
Interval-linearizability is a framework that allows to specify any task,
however, there are sequential one-shot objects that cannot be expressed as tasks,
under the simplest interpretation of a task.
Hence, interval-sequential objects have strictly more power to specify one-shot problems than tasks.
However, a natural extension of the notion of a task  has 
the same expressive power to specify  one-shot concurrent problems, hence strictly
more than sequential and set-sequential objects. 
See Figure~\ref{fig:categories}.
Interval-linearizability goes beyond  unifying sequentially specified objects and tasks, it sheds new light on both of them.
On the one hand,
interval-sequential linearizability provides an explicit operational semantics
to a task (whose semantics, as we argue here, is not well understood), gives a more precise implementation notion, and brings a locality property to tasks. 
On the other hand, tasks provide a static
specification for automata-based formalisms such as sequential, set-sequential
and interval-sequential objects.

\begin{wrapfigure}{r}{-0.3\textwidth}%   \centering
 \centering
 \vspace{-10mm}
 \includegraphics[width=.44\textwidth]{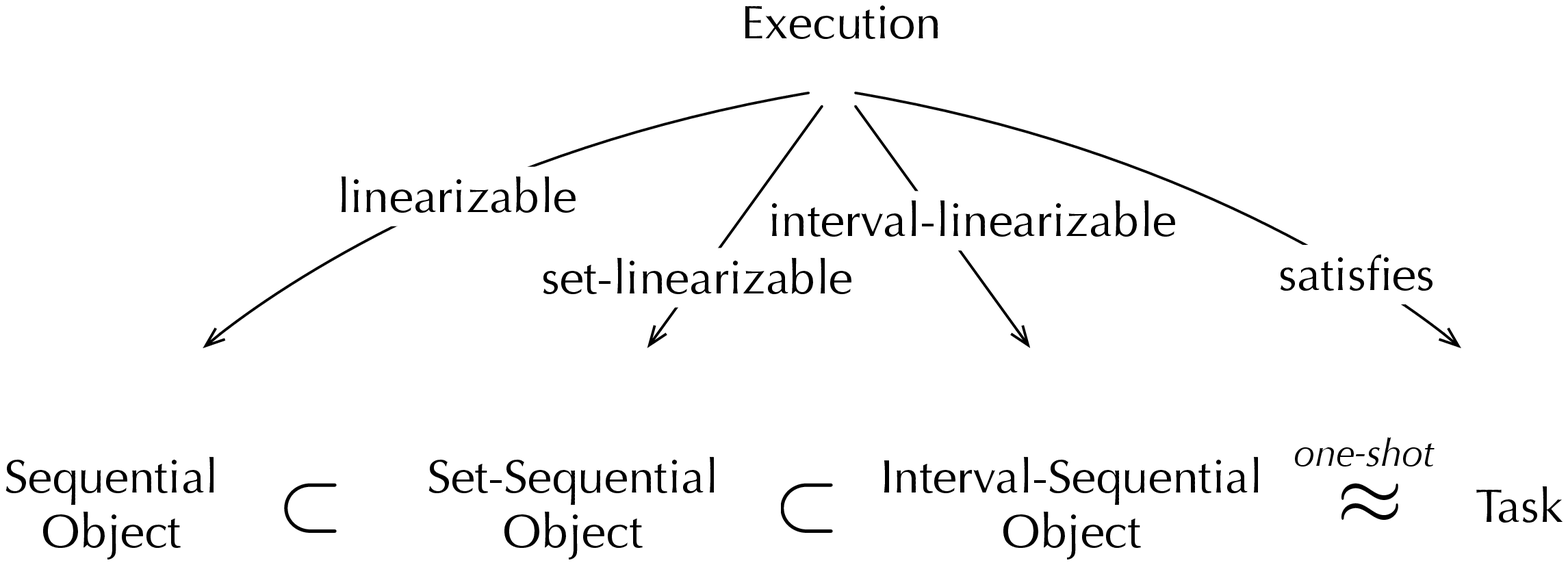}
 \caption{Objects and consistency conditions. 
 The equivalence is between \emph{refined} tasks and one-shot interval-sequential objects.}
 \vspace{-5mm}
 \label{fig:categories}
\end{wrapfigure}
%On the other hand, tasks provide a static
%specification for automata-based formalisms such as sequential, set-sequential
%and interval-sequential objects.

\paragraph{Related work}
Many consistency conditions have been proposed to define the correct 
behavior of sequentially specified objects, that guarantee that all 
the processes see the same sequence  of operations applied to the object. 
Among the most notable are
{\it atomicity}~\cite{L86-a,LH89,M86},  
{\it sequential consistency}~\cite{L79},  
and {\it linearizability}~\cite{HW90}. 
(See surveys~\cite{DFK13,RS97}, and textbooks such 
as~\cite{AW04,HS08,R13,R13-b})\footnote{Weaker consistency 
conditions such as {\it causal consistency}~\cite{ANBHK95}, 
{\it lazy release consistency}~\cite{KCZ92}, or 
{\it eventual consistency}~\cite{W09} are not addressed here.}.   
%Comparisons of linearizability and sequential consistency can be 
%found in~\cite{AW94,R13-b}.   
An extension of linearizability suited to relativistic
distributed systems is presented in~\cite{GG14}. 
 {\it Normality} consistency~\cite{GR99} can be seen
as an extension of linearizability to the  case where an operation
can involve more than one object.  

\Xomit{%%%%%%%%%%%%%%%%%%%%%%%%%%%%%%%%%%%%%%%%%%%%%%%%%%%%%%%%%%
{\it Causal consistency}~\cite{ANBHK95} considers objects whose 
correct behavior can be described by a partial order on their operations, 
that respects the basic causal order introduced in~\cite{L78}. 
It is different from ours...

There are also hybrid consistency conditions based on a combination 
of a ``strong'' condition (such as the ones described in the previous item) 
and  causal consistency are investigated in~\cite{F95,FRT15,JFC08}.

Application-aware consistency conditions, which weaken  causal consistency, 
are investigated in~\cite{TBV15}. 
The last category contains consistency conditions ``weaker'' than 
causal consistency. {\it Lazy release consistency}~\cite{KCZ92}, or 
{\it eventual consistency}~\cite{W09} are examples of such conditions. 
[[[should we remove it or else discuss the relation to our work]]]
} %%%%%%%%%%%%%%%%%%%%%%%%%%%%%%%%%%%%%%%%%%%%%%%%%\Xomit{%%%%

Neiger proposed unifying sequential objects and tasks, and defined
 set-linearizability~\cite{N94}. In the automaton specifying a set-sequential object, 
 transitions between states  involve more than one operation; these operations are allowed to occur concurrently
and their results can be concurrency-dependent. Thus, linearizability corresponds
to the case when the transitions always involve a single operation.
Later on it was again observered that for some concurrent objects it is impossible to provide a sequential specification, and  similar notion, but based on histories,
was proposed~\cite{HR14} (no properties were proved).
Transforming  the question of wait-free read/write solvability of a one-shot 
sequential object, into the question of solvability of a task was suggested in~\cite{Gafni2014}.
The extension of tasks we propose here is reminiscent to the construction in~\cite{Gafni2014}.
%Unfortunately, Neiger did not pursue further this line of research, and
%the sole publication remains the one-page Brief Announcement in PODC94.

Higher dimensional automata are used to model execution
of concurrent operations, and are the most expressive
model among other common operations~\cite{G2006}. They can
model transitions which consists of sets of operations,
and hence are related to set-linearizability, but do not naturally
model interval-linearizability, and other concerns of concurrent objects.
%{G2006}: Besides history preserving bisimulation equivalence, I also define interleaving bisimulation equivalence [18,10] on higher dimensional automata, and thus on the other models, a
There is work on  partial order semantics of programs, including
more flexible notions of linearizability, relating  two
arbitrary sets of histories~\cite{FORY2010}. 
%a reviewer says:
%Filipovic et al.'s original paper did not state
%a compositionality result. However, it establishes that linearizability implies
%observational refinement, which usually entails compositionality; see, e.g.,
%[Gotsman et al., DISC'12, Section 3].

\paragraph{Roadmap}
The paper is composed of~\ref{sec:conclusion} sections. 
It considers that the basic definitions related to linearizability are known. 
First, Section~\ref{sec:limit-of-lin} uses a simple example to
illustrate the  limitations of both linearizability and set-linearizability. 
Then, Section~\ref{sec:interval-seq-object} introduces the notion of an 
interval-sequential concurrent object, which makes it possible to 
specify the correct concurrent patterns, without restricting them to be 
sequential patterns. Section~\ref{sec:interval-linearizability} defines 
interval-linearizability and shows it is local and non-blocking. 
Then, Section~\ref{sec:completeness} 
compares the ability of tasks and interval-sequential objects to specify one-shot problems. 
%is a brief introduction to 
%the computation model and linearizability. 
%Section~\ref{sec:simple-example} presents then a simple example that
%will be used to motivate extensions to linearizability. 
%Then, the next two sections present such extensions, namely, 
%Section~\ref{sec:set-linearizability}  presents set-linearizability, 
%and Section~\ref{sec:interval-linearizability} introduces 
%interval-linearizability. 
Finally, Section~\ref{sec:conclusion} concludes the paper.

%========================================================================
\section{Limitations of linearizability and set-linearizability}
\label{sec:limit-of-lin}
Here we discuss in more detail  limitations of sequential and set-sequential specifications (linearizability  and set-linearizability).
As a running example we use write-snapshot,
a natural task that is implementable from read/write registers
and has no natural specification as a sequential or set-sequential object.
Many other tasks have the same problems. Appendix~\ref{app:addDiscExam} presents other
examples and additional details.

%-----------------------------------------------------------------------
\subsection{The write-snapshot task}
\label{write-snapTask}
\paragraph{Definition and implementation of  write-snapshot}
Sometimes we work with objects with two operations, but that
are intended to be used as one. For instance, a snapshot object~\cite{AADGMS93} 
has  operations ${\sf write}()$ (sometimes called update) and 
${\sf snapshot}()$.
This object has a sequential specification and there are linearizable 
read/write algorithms implementing it (see, e.g.,~\cite{AW04,HS08,R13,T06}).
But many times, a snapshot object is used
in a canonical way, namely,  each time a process invokes  ${\sf write}()$, 
immediately after it  always invokes  ${\sf snapshot}()$. 
Indeed, one would like to think of such an object as providing a single 
operation,  ${\sf write\_snapshot}()$, invoked with a value $x$
to be deposited in the object, and when the operation returns, it gives back
to the invoking process  a snapshot of the contents of the object.
It turns out that this write-snapshot object has neither a  natural 
sequential nor a set-sequential specification. However, it can be specified 
as a task and actually is implementable from read/write registers.

In the \emph{write-snapshot} task, each process $p_i$ starts with a private 
input $v_i$ and outputs a set $set_i$ satisfying the following:
%To appendix:
%\footnote{Let us observe that a write-snapshot task is ``weaker'' 
%than the ``immediate snapshot'' object introduced in~\cite{BG93}. 
%Such an object is a write-snapshot  which additionally 
%satisfies the following immediacy property: 
%$\forall~i,j:~ [(\langle j,- \rangle set_i)\wedge
%(\langle i,- \rangle set_j)] \Rightarrow  (set_i=set_j)$.}. 

\begin{itemize}
%\vspace{-0.2cm}
\item Self-inclusion: 
$\langle i,v_i\rangle \in set_i$. 
%\vspace{-0.2cm}
\item Containment:
$\forall~i,j:~ (set_i\subseteq set_j)\vee  (set_j\subseteq set_i)$. 
\end{itemize}

Note that the specification of write-snapshot is highly concurrent:
it only states what processes might decide when they run until completion, 
regardless of the specific interleaving pattern of invocations and responses.
A simple write-snapshot algorithm based on read/write registers,
 is in Figure~\ref{fig:algorithm-write-snapshot} below.

The \emph{immediate snapshot} task~\cite{BG93} is defined
by adding an Immediacy requirement to the Self-inclusion and Containment
requirements of the   write-snapshot task.
\begin{itemize}
\item
Immediacy:
$\forall~i,j:~ [(\langle j,v_j \rangle \in set_i)\wedge
(\langle i,v_i \rangle \in set_j)] \Rightarrow  (set_i=set_j)$.
\end{itemize}

Figure~\ref{fig:algorithm-write-snapshot} contains an algorithm that
implements write-snapshot (same idea of the well-known algorithm of~\cite{AADGMS93}).
The internal representation of write-snapshot is 
made up of an array of single-writer multi-reader atomic registers 
$\MEM[1..n]$, initialized to $[\bot,\cdots,\bot]$. 
In the following, to simplify the presentation we suppose that the value 
written by $p_i$ is $i$, and the pair $\langle i,v_i\rangle$ 
is consequently denoted $i$. 
When a process $p_i$ invokes  ${\sf write\_snapshot}(i)$, it first 
writes its value $i$ in $\MEM[i]$ (line~\ref{WS-01}).
Then $p_i$ issues repeated  classical ``double collects'' until it obtains 
two successive read of the full array $\MEM$, 
which provide it with the same set of  non-$\bot$ values
(lines~\ref{WS-02}-\ref{WS-05}). When such a successful double collect 
occurs, $p_i$ returns the content of its last read of the array $\MEM$
(line~\ref{WS-06}).  Let us recall that the reading of the $n$ array entries 
are done asynchronously and in an arbitrary order. 
In Appendix~\ref{sec:proof-write-snapshot}, it is shown that this algorithm
implements the write-snapshot task.

%========================================================================
\begin{figure}[ht]
\centering{
\fbox{
\begin{minipage}[t]{150mm}
\footnotesize
\renewcommand{\baselinestretch}{2.5}
\resetline
\begin{tabbing}
aaaaa\=aa\=aaa\=aaa\=aaaaa\=aaaaa\=aaaaaaaaaaaaaa\=aaaaa\=\kill %~\\

{\bf operation} ${\sf write\_snapshot}(i)$ {\bf is}   ~~~\% issued by $p_i$ \\

\line{WS-01} \> $\MEM[i] \leftarrow i$;\\

\line{WS-02} \> $new_i \leftarrow 
     \cup_{1\leq j\leq n} \{\MEM[j] \mbox{ such that } \MEM[j]\neq\bot\}$;\\

\line{WS-03} \> {\bf repeat} \= $old_i \leftarrow new_i$;\\

\line{WS-04} \> \>
$new_i \leftarrow  
     \cup_{1\leq j\leq n} \{\MEM[j] \mbox{ such that } \MEM[j]\neq\bot\}$\\

\line{WS-05} \> {\bf until} $(old_i=new_i)$  {\bf end repeat};\\

\line{WS-06} \> ${\sf return}(new_i)$.

\end{tabbing}
\normalsize
\end{minipage}
}
\caption{A write-snapshot algorithm} 
\label{fig:algorithm-write-snapshot}
}
\end{figure}
%=======================================================================

\paragraph{Can the write-snapshot task be specified as a sequential object?}
Suppose there is a deterministic sequential specification of write-snapshot. 
Since the write-snapshot task is implementable from read/write registers,
one expects that there is a linearizable algorithm $A$
implementing the write-snapshot task from read/write registers.
But $A$ is linearizable, hence any of its executions can be seen
as if all invocations occurred one after the other, in some order.
Thus, always there is a first invocation,
which must output the set containing only its input value.
Clearly, using $A$ as a building block, one can trivially solve test-and-set.
This  contradicts  the fact that test-and-set cannot be implemented 
from read/write registers. The contradiction comes from the fact that,
 in a deterministic  sequential specification of write-snapshot, the values in the output set of a process
can only contain input values of operations that happened before.
Such a specification is actually modelling  
a proper subset of all possible relations between inputs and outputs,
of the distributed problem we wanted to model at first.
This phenomenon is more evident when we consider the execution in
Figure~\ref{fig:simple-counterEx}, which can be produced by the write-snapshot 
algorithm in Figure~\ref{fig:algorithm-write-snapshot} in the Appendix.

\begin{figure*}[th]
\centering{
%\hspace{-0.2cm}
\scalebox{0.4}{\input{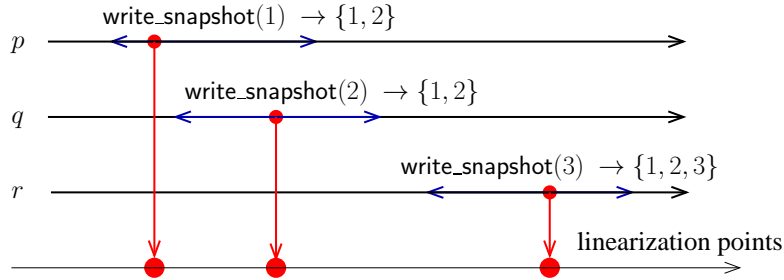}}
\caption{A linearizable write-snapshot execution that predicts the future}
\label{fig:simple-counterEx}
}
\end{figure*}

Consider a non-deterministic sequential specification of write-snapshot (the automaton is in
Appendix~\ref{sec:proof-write-snapshot}).
When linearizing the execution in Figure~\ref{fig:simple-counterEx},
one has to put either the invocation of $p$ or $q$ first,
in either case the resulting sequential execution seems to say that the first process predicted the future 
and knew that $q$ will invoke the task.
The linearization points in the figure describe a possible sequential
ordering of operations.
These anomalous future-predicting sequential specifications result 
in linearizations points without the intended meaning of ``as if the
operation was atomically executed at that point."

\vspace{0.2cm}
\begin{figure*}[th]
\centering{
\scalebox{0.4}{\input{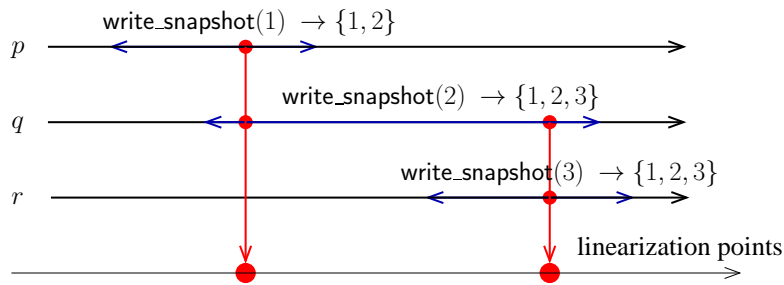}}
\caption{A write-snapshot execution that is not set-linearizable}
\label{fig:simple-counterEx2}
}
\end{figure*}

\paragraph{Why set-linearizability is not enough}
Neiger noted the problems with the execution in 
Figure~\ref{fig:simple-counterEx} discussed above, in the
context of the immediate snapshot task.
He proposed in~\cite{N94} the idea that
a specification should allow to express that sets of operations that
can be concurrent. He called this notion {\it set-linearizability}.
In set-linearizability, an execution accepted by a \emph{set-sequential} 
automaton is  a sequence of non-empty sets with operations, 
and each set denotes operations that are executed concurrently.
In this way, in the execution in Figure~\ref{fig:simple-counterEx}, the
operations of $p$ and $q$ would be set-linearized together, and then 
the operation of $r$ would be set-linearized alone at the end.
While set-linearizability is sufficient to model the immediate-snapshot
task, it is not enough for specifying most other tasks.

Consider the write-snapshot task.
In set-linearizability, in the execution in Figure~\ref{fig:simple-counterEx2} 
(which can be produced by the write-snapshot algorithm, but is not
a legal immediate snapshot execution),
one has to decide if the operation of $q$ goes together with the one 
of $p$ or $r$. In either case,  in the resulting execution 
a process seems to predict a future operation.
In this case the problem is that there are operations that are
affected by several operations that are not concurrent 
(in Figure~\ref{fig:simple-counterEx2},  $q$ is affected by both 
$p$ and $r$, whose operations are not concurrent). 
This cannot be expressed as a set-sequential execution.
Hence, to succinctly express this type of behavior, we need a more 
flexible framework in which it is possible to express that an operation 
happens in an interval of time that can be affected by several operations.

%-------------------------------------------------------------------------
\subsection{Additional examples of tasks with no sequential specification and a potential solution}
\label{sec:addExampNoSeq}
As we shall see, most tasks are problematic for dealing with them
through linearizability,  and have  no deterministic  sequential
specifications. Some have been studied in the past,
such as the following, discussed in more detail in Appendix~\ref{app:addExampNoSeq}.
\begin{itemize}
\vspace{-0.2cm}
\item 
\emph{adopt-commit}~\cite{G98} is a one-shot shared-memory object
useful  to implement round-based protocols for set-agreement and consensus.
Given an input $u$ to the object,  the result is an output of the form $(commit,v)$ or 
$(adopt,v)$, where $commit/adopt$  is a decision  that indicates whether the process should decide value $v$ immediately or adopt it as its preferred value in later rounds of the protocol.
\vspace{-0.2cm}
\item 
\emph{conflict detection}~\cite{AE14} has been shown to be equivalent to the 
{adopt-commit}. Roughly, if at least two different values are proposed concurrently
at least one process outputs true.
\vspace{-0.2cm}
\item 
 \emph{safe-consensus}~\cite{AGL09}, a weakening of consensus, where the agreement condition
of consensus is retained, but the validity condition  becomes: 
if the first process to invoke it returns before any other process invokes it, 
then it outputs its input; otherwise the consensus output can be arbitrary,
 not even the input of any process.
%See Appendix~\ref{app:safe-cons}.
\vspace{-0.2cm}
\item
 \emph{immediate snapshot}~\cite{BG93},  
which plays an important role in distributed 
computability~\cite{AR02,BG93,SZ00}. A process can write a value
to the shared memory using this operation, and gets back a snapshot
of the shared memory, such that the snapshot occurs immediately after the write.
\vspace{-0.2cm}
\item
\emph{$k$-set agreement}~\cite{C93},
where processes agree on at most $k$ of their input values. 
\vspace{-0.2cm}
\item
\emph{Exchanger}~\cite{HR14},
is a Java object that serves as a synchronization point at which 
threads can pair up and atomically swap elements.
\end{itemize}
%-----------------------------------------------------------------------
\paragraph{Splitting an operation in two}
To deal with these problematic tasks, 
one is tempted to separate an operation into two
operations, {\sf set} and {\sf get}. The first communicates  
the input value of a process, while the second produces an output value to a process.
%an invocation and a response, to specify the effect of concurrent invocations. 
For instance, $k$-set agreement is easily transformed into an object 
with a sequential
specification, simply by accessing it through  {\sf set} to  
deposit a value into the object and {\sf get} that returns one
of the values in the object.
In fact, every  task can be represented as a sequential object by
splitting the operation of the task in two operations (proof in Appendix~\ref{app:splitting}).

Separating an operation into a proposal operation and a returning operation
 has several problems. First, the program is forced to 
produce two operations, and wait for two responses. There is a 
consequent loss of clarity in the code of the program, in addition to a 
loss in performance, incurred by  a two-round trip delay. 
Also, the intended meaning
of linearization points is lost; an operation is now linearized at \emph{two}
linearization points.
Furthermore, the resulting object may  provably \emph{not} be the same;
a phenomenon that has been observed several times in the context of iterated
 models (e.g., in~\cite{CR14,GR10,RRT08}) 
is that the power of the object can be increased, if one is allowed to 
invoke another object in between the two operations.
 Further discussion of this issue is in Appendix~\ref{app:splitting}.
%
%Hagit:
%[kosa 149]
%28,112,113, weak
%

%========================================================================
\section{Concurrent Objects}
\label{sec:interval-seq-object}
This section defines the notion of an \emph{interval-sequential} concurrent 
object,  which allows to specify behaviors of all the valid 
concurrent operation patterns. These objects include as special
cases sequential and set-sequential objects. 
To this end, the section also describes the  underlying computation model.

\subsection{System model}
\label{sec:sysModel}
The {system}  consists of  $n$ asynchronous 
sequential processes,  $P=\{ p_1,\ldots,p_n\}$,
which communicate through a set
%finite number %sergio: finite?
of concurrent objects,  $OBS$.
Each consistency condition  specifies the behaviour of
an object differently, for now we only need to define its interface,
which is common to all conditions.
The presentation follows~\cite{CHJT04,HW90,R13}.

%${\sf op}_1()$,  ..., ${\sf op}_k()$.
%Let $P$ denote the set of the $n$  process identities. 
Given a set  $OP$ of  operations  offered  by the objects of the system to
the processes $P$, 
let $Inv$ be the set of all invocations to operations that can be issued 
by a process in a system, and $Res$ be the set of all responses to the 
invocations in $Inv$. There are functions 
\begin{equation} \label{eq1}
\begin{split}
id: Inv &  \rightarrow P \\
op: Inv &  \rightarrow OP\\
op: Res &  \rightarrow OP \\
res: Res &  \rightarrow Inv\\
obj: OP &  \rightarrow OBS
\end{split}
\end{equation}
where $id(in)$ tells which process invoked $in\in Inv$,
$op(in)$ tells which operation was invoked, 
$op(r)$ tells which operation was responded,
$res(r)$ tells which invocation corresponds to  $r\in Res$,
and $obj(oper)$ indicates the object that offers operation $oper$ .
There is an induced function $id: Res \rightarrow P$ defined by 
 $id(r) = id(res(r))$. Also,
 induced functions 
 $obj: Inv \rightarrow OBS$ defined by  $obj(in) = obj(op(in))$,
and
 $obj: Res \rightarrow OBS$ defined by  $obj(r) = obj(op(r))$.
The set of operations of an object $X$, $OP(X)$, consists of all operations
$oper$, with $obj(oper)=X$. Similarly, $Inv(X)$ and $Res(X)$ are
resp. the set of invocations and responses of $X$.

A \emph{process} is a deterministic automaton that interacts with 
the objects in $OBS$.
It  produces a sequence of steps, 
where a \emph{step} is an invocation of an object's operation, or reacting to
an object's response (including  local processing). 
Consider the set of all operations $OP$ of objects in $OBS$,
and all the corresponding possible invocations $Inv$ and responses $Res$.
A process $p$ is an automaton $(\Sigma,\nu,\tau)$,  with states $\Sigma$ and
functions $\nu,\tau$ that describe the interaction of the process
with the objects. Often there is also a set of initial states 
$\Sigma_0\subseteq\Sigma$.
Intuitively, if $p$ is in state $\sigma$ and $\nu(\sigma)=(op,X)$
then in its next step $p$ will apply operation $op$ to object $X$.
Based on its current state, $X$ will return a response $r$ to $p$ and will
enter a new state, in accordance to its transition relation. Finally,
$p$ will enter state $\tau(\sigma,r)$ as a result of the response it
received from $X$.
\begin{wrapfigure}{r}{0.3\textwidth}%   \centering
 \centering
 \vspace{-2mm}
 \includegraphics[width=.43\textwidth]{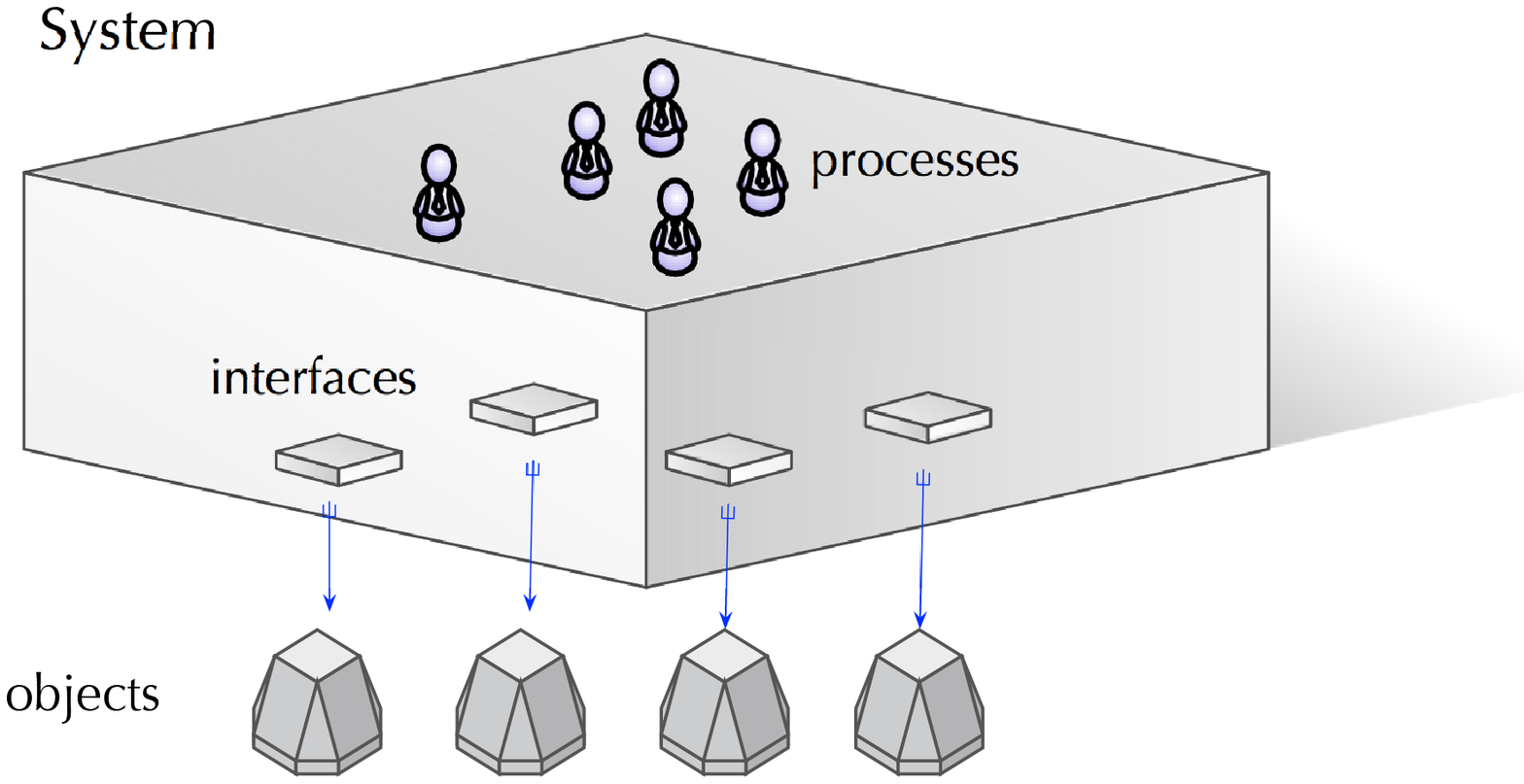}
 %\caption{Consistency conditions and the corresponding objects}
 \vspace{-5mm}
 \label{fig:system}
\end{wrapfigure}

Finally, a \emph{system} consists of a set of processes, $P$, a set 
of objects $OBS$ so that each $p\in P$ uses a subset of $OBS$,
together with an initial state for each of the objects.

A \emph{configuration} is a tuple consisting of the state of each process
and each object, and a configuration is \emph{initial}
if each process and each object is in an initial state.
An \emph{execution} of the system is modelled by a sequence of events $H$ 
arranged in a total order $\widehat{H}=(H,<_H~)$,
where each event is an invocation $in\in Inv$ or a 
response  $r\in Res$, that can be produced following the
process automata, interacting with the objects.
Namely, an execution starts, given any initial configuration,
by having any process  invoke an operation, according to its transition relation.
In general, once a configuration is reached, the next event
can be a response from an object to an operation of a process
or an invocation of an operation by a process whose last invocation
has been responded.
Thus, an execution is well-formed, in the sense that it consists
of an interleaving of invocations and responses to operations,
where a processes invokes an operation only when its last invocation
has been responded.

%\begin{figure*}
%\begin{center}
%	\epsfxsize=6.1in
%	\epsfbox{fig-categoriesSystem.eps}%
%%	\epsfbox{fig-categories.eps}%
%    \end{center}
%\caption{A consistency notion is needed to say if  an execution of the system satisfies an object specification, such as sequential, set-sequential, interval-sequential or task specification. 
%See Remarks~\ref{remark:seqAndSetSeq} and \ref{remark:seqAndSetLin}} 
%\label{fig:categoriesSystem}
%\end{figure*}

\subsection{The notion of an {\it Interval-sequential} object}
\label{sec:intSeqObj}
To generalize  the usual notion
of a sequential object e.g.~\cite{CHJT04,HW90} 
(recalled in Appendix~\ref{sec:linearizability}), instead
of considering sequences of invocations and responses, we consider sequences of
\emph{sets} of invocations and responses. 
An \emph{invoking concurrency class}  $C\subseteq 2^{Inv}$,
is a non-empty subset of $Inv$ 
%such that  (1) only invocations of $X$ appear in $C$ and
such that  $C$ contains at most one invocation by the
same process.
A \emph{responding concurrency class} $C$,    $C\subseteq 2^{Res}$, is defined similarly.

\paragraph{Interval-sequential execution}
An \emph{interval-sequential execution} $h$ is an alternating sequence
of invoking and responding concurrency classes, starting in an invoking class,
 $h=I_0,R_0,I_1,R_1,\ldots,I_m,R_m$,
where  the following conditions are satisfied
\begin{enumerate}
%\vspace{-0.2cm}
\item 
For each $I_i\in h$, any two invocations $in_1,in_2\in I_i$ are by different
processes, $id(in_1)\neq id(in_2)$. Similarly,  for $R_i\in h$ if 
$r_1,r_2\in R_i$ then $id(r_1)\neq id(r_2)$, 
%\vspace{-0.2cm}
\item 
Let $r\in R_i$ for some $R_i\in h$.
Then there is $in\in I_j$ for some $j\leq i$, such that $res(r)=in$
and furthermore, there is no other $in'$ with $id(in)=id(in')$ with
$in'\in I_{j'}$, $j<j'\leq i$.
\end{enumerate}
It follows that  an execution $h$ consists of matching
% well-formed
invocations and responses,  perhaps with some pending invocations
with no response.

\paragraph{Interval-sequential object}
An \emph{interval-sequential} object $X$ is a (not necessarily finite)
Mealy state machine $(Q,2^{Inv(X)},2^{Res(X)},\delta)$  whose output values $R$
are responding concurrency classes $R$ of $X$, $R\subseteq  2^{Res(X)}$,
are determined both by its current state $s\in Q$ and the current
input $I\in 2^{Inv(X)}$, where $I$ is an invoking concurrency class of $X$.
There is a set of  \emph{initial states} $Q_0$ of $X$,  
$Q_0\subseteq Q$.
The transition relation $\delta\subseteq Q\times 2^{inv(X)}\times 2^{Res(X)}\times Q$ 
specifies both, the output of the automaton and its next state.
If $X$ is in state $q$ and it receives as input a set of invocations $I$,
then, if  $(R,q') \in \delta(q,I)$, the meaning is that $X$ may return
the non-empty set of responses $R$ and move to state $q'$. 
We stress that always both $I$ and $R$ are non-empty sets.

 \paragraph{Interval-sequential execution of an object}
Consider an initial state $q_0\in Q_0$ of $X$ and a
sequence of inputs  $I_0,I_1,\ldots I_m$.
Then a sequence of outputs that $X$ may produce is $R_0,R_1,\ldots R_m$,
where $(R_i,q_{i+1})\in \delta(q_i,I_i)$.
Then the \emph{interval-sequential execution of $X$} starting in $q_0$ is
 $q_0,I_0,R_0,q_1,I_1,R_1,\ldots,q_m,I_m,R_m$. However, we
require  that the object's response at a state
uniquely determines the new state, i.e. we assume if $\delta(q,I_i)$ contains 
$(R_i,q_{i+1})$ and $(R_i,q'_{i+1})$ then $q_{i+1}=q'_{i+1}$.
 Then we may denote the interval-sequential execution of $X$,  starting in $q_0$ by
 $h=I_0,R_0,I_1,R_1,\ldots,I_m,R_m$, because the 
 sequence of states $q_0,q_1,\ldots,q_m$ is
uniquely determined by $q_0$, and by the sequences of inputs and responses.
When we omit mentioning $q_0$ we assume there is some initial state in $Q_0$
that can produce $h$.

Notice that $X$ may be non-deterministic, in a given state $q_i$ with 
input $I_i$
it may move to more than one state and return more than one response.
Also, sometimes it is convenient to require that the object is \emph{total}, meaning that,
for every singleton set $I \in 2^{Inv}$ and every state $q$
in which the invocation $inv$ in $I$ is not pending,
there is an $(R,q') \in  \delta(q,I)$ in which there is a response to $inv$ in $R$.

%\paragraph{Well-formed executions}
Our definition of  interval-sequential execution is motivated by the fact
that we are interested in \emph{well-formed} executions
$h=I_0,R_0,I_1,R_1,\ldots,I_m,R_m$. Informally, the processes
should behave well, in the sense that a process does not invoke
a new operation before its last invocation received a response.
Also, the object should behave well, in the sense that it should not
 return a response to an operation that  is not  pending.

The \emph{interval-sequential specification} of $X$,  $ISSpec(X)$, 
is the set of all its interval-sequential executions.

%\begin{remark}[Interpretation of interval-sequential execution]
%An interval-sequential execution should
%not be interpreted as if events necessarily happen exactly at the same time,
%whenever they are in the same concurrency class. 
%%Events in the same class can be reordered in any way, precisely because they are concurrent.
%An interval-sequential execution actually represents the set of all
%executions that can be obtained by serializing (consecutively) the  events in each of its concurrency classes.
%\end{remark}

\paragraph{Representation of interval-sequential executions}
In general, we will be thinking of an interval-sequential execution $h$ 
as an alternating sequence
of invoking and responding concurrency classes starting with an invoking class,
$h=I_0,R_0,I_1,R_1,\ldots,I_m,R_m$.
However, it is sometimes convenient to think of an execution as a 
a total order $\widehat S = (S, \stackrel{S}{\longrightarrow} )$ on a
subset $S \subseteq CC(X)$, where $CC(X)$, 
is the set with all invoking and responding concurrency classes of $X$; namely,
$h=I_0  \stackrel{S}{\longrightarrow} R_0 \stackrel{S}{\longrightarrow}I_1 \stackrel{S}{\longrightarrow}R_1 \stackrel{S}{\longrightarrow}\cdots \stackrel{S}{\longrightarrow} I_m \stackrel{S}{\longrightarrow}R_m$.

 In addition, the execution $h=I_0,R_0,I_1,R_1,\ldots,I_m,R_m$
can be represented by a table, with a column for each element in
the sequence $h$, and a row for each process. A member  $in\in I_j$ 
invoked by $p_k$ (resp. a response $r\in R_j$ to $p_k$) 
is placed in the $k$'th row, at the $2j$-th column (resp. $2j+1$-th column).  
Thus, a transition of the automaton will correspond to two consecutive
columns, $I_j,R_j$. See Figure~\ref{fig:fig-validity-wAutomata},
and several more examples  in the figures below.
% in Figure~\ref{interval-executionValidity}).

Interval-sequential objects include as particular cases
set-sequential and sequential objects, as illustrated in Figure~\ref{fig:categories}.

\begin{remark}[Sequential  and Set-sequential objects]
\label{remark:seqAndSetSeq}
Let  $X$ be an interval-sequential object, $(Q,2^{Inv(X)},2^{Res(X)},\delta)$.
Suppose for all states $q$ and all $I$,
if  $\delta(q,I)=(R,q')$, then $|R|=|I|$, and additionally 
each $r\in R$ is a response to one $in\in I$. Then $X$ is a \emph{set-sequential} object.
If in addition,  $|I|=|R|=1$,
 then $X$ is a sequential object in the usual sense.
\end{remark}

%------------------------------------------------------------------
\subsection{Examples: Validity and validity with abort}
\label{sec:valObj}
 Consider an object $X$ with a single operation ${\sf validity}(x)$,
that can be invoked by each process, with a \emph{proposed} input parameter $x$,
and a very simple specification: an operation returns a value that 
has been proposed. This problem is easily specified as a task,
see Appendix~\ref{app:validity}.
Indeed, many tasks include  
this property, such as consensus, set-agreement, etc.
As an interval-sequential object, it is formally specified 
by an automaton,
where each state $q$ is labeled with two values, $q.vals$ is the set
of values that have been proposed so far, and $q.pend$ is the set of
processes with pending invocations. The initial state $q_0$
has $q_0.vals=\emptyset$ and $q_0.pend=\emptyset$.
If $in$ is an invocation to the object, let $val(in)$ be the proposed value,
and if $r$ is a response from the object, let $val(r)$ be the responded value.
For a set of invocations $I$ (resp. responses $R$) $vals(I)$ denotes
the proposed values in $I$ (resp. $vals(R)$).
The transition relation  $\delta(q,I)$ contains all pairs $(R,q')$ such that:
\begin{itemize}
%\vspace{-0.2cm}
\item 
If $r\in R$ then $id(r)\in q.pend$ or there is an $in\in I$ with $id(in)=id(r)$,
%\vspace{-0.2cm}
\item If $r\in R$ then $val(r)\in q.vals$ or there is an $in\in I$ with 
$val(in)=val(r)$, and
%\vspace{-0.2cm}
\item 
$q'.vals= q.val\cup vals(I)$ and
 $q'.pend= (q.pend \cup ids(I))\setminus ids(R)$.
\end{itemize}
On the right of Figure~\ref{fig:fig-validity-wAutomata}
there is part of a validity object automaton.
On the left of Figure~\ref{fig:fig-validity-wAutomata} is illustrated
an interval-sequential execution with the vertical red double-dot lines:
$I_0,R_0,I_1,R_1$, where 
$I_0=\{p.{\sf validity}(1),q.{\sf validity}(2)\}$,
$R_0=\{p.{\sf resp}(2)\}$,
$I_1=\{r.{\sf validity}(3)\}$,
$R_1=\{q.{sf resp}(3),r.{\sf resp}(1)\}$.

 The interval-linearizability consistency notion described in Section~\ref{sec:interval-linearizability} will formally define how a general execution (blue double-arrows in the figure) can
 be represented by an interval-sequential  execution (red double-dot lines), and hence tell if 
it satisfies the validity object specification.
Notice that the execution in Figure~\ref{fig:fig-validity-wAutomata} shows that the validity object has no specification 
 neither as a sequential nor as a set-sequential object, for 
 reasons similar to those  discussed in Section~\ref{write-snapTask} about
 Figure~\ref{fig:simple-counterEx2}.

\begin{figure*}
\begin{center}
	\epsfxsize=6.0in
%	\epsfbox[0 0 850 850]{}%
	\epsfbox{}%
    \end{center}
\caption{An execution of a validity object, and the corresponding
part of an interval-sequential automata} 
\label{fig:fig-validity-wAutomata}
\end{figure*}

%%%
%\begin{figure*}[th]
%\centering{
%\hspace{0.0cm}
%\scalebox{0.35}{\input{fig-intAutomata.pdf}}
%\caption{Part of an interval-sequential automata}
%\label{fig:fig-intAutomata}
%}
%\end{figure*}
%%%

\paragraph{Augmenting the validity object with an ${\sf abort}()$ operation}
As an illustration of the expressiveness of an interval-sequential automaton,
let us add an operation denoted ${\sf abort}()$  to the validity object,
to design a \emph{validity $k$-abort object.}
Since the validity object is not set-linearizable, neither is  the validity
with abort object.
Intuitively, a process can invoke ${\sf abort}()$ to ``block" the object,
but this might happen only if there are at least $k$ concurrent abort operations.
The operation  ${\sf abort}()$ 
returns either $aborted$ or $notAborted$, to indicate its result.
If all the concurrent ${\sf abort}()$  operations return $aborted$, then
any operation happening together or after them,
returns $aborted$ as well.
%Intuitively, if at least $k$ processes invoke 
%concurrently the operation ${\sf abort}()$, then any subsequent invocation 
%of ${\sf validity}()$ may return $aborted$. Differently, sequential invocations 
%of ${\sf abort}()$ have no effect on the value output by ${\sf validity}()$.
Hence, if only one process invokes ${\sf abort}()$  then
the object behaves as a $Validity$ object. 
How do we formally argue that the execution in Figure~\ref{fig:fig-validityAbort} is correct? Interval-Linearizability is 
a correctness implementation notion  that serves this purpose, defined next.
In Appendix~\ref{app:safe-cons-and-abort}, the validity object is formally defined.

% in Section~\ref{sec:interval-linearizability},
\begin{figure*}
\begin{center}
	\epsfxsize=4.0in
%	\epsfbox[0 0 850 850]{fig-intAutomata.eps}%
	\epsfbox{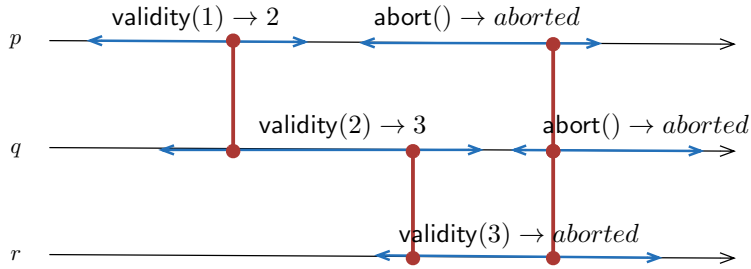}%
    \end{center}
\caption{An execution of a Validity-Abort object (1)}
\label{fig:fig-validityAbort}
\end{figure*}

%In Appendix~\ref{app:safe-cons-and-abort} this object is formally defined.
%Other  examples of interval-linearizable objects formally defined in the appendix are
%  safe-consensus~\cite{AGL09,CR14}
%  and conflict detection~\cite{AE14}.
% \textbf{Armando: This is not in the appendix.}

%========================================================================
\section{Interval-Linearizability}
\label{sec:interval-linearizability}

We first define interval-linearizability and then
prove it is  local  and non-blocking.

\subsection{The notion of interval-linearizability}
%Let $Inv$ be the set of all invocation to operations that can be issued by 
%a process in a system, and $Res$ be the set of all responses to the 
%invocations in $Inv$.

\paragraph{Interval-sequential execution of the system}
Consider a subset $S \subseteq CC$ of  the concurrency classes of the objects $OBS$ in the system and an interval-sequential execution
 $\widehat S = (S, \stackrel{S}{\longrightarrow})$,
defining an alternating sequence of invoking and responding
concurrency classes, starting with an invoking class.
For an object $X$, the \emph{projection of $\widehat S$ at $X$}, 
$\widehat S |_X = (S_X, \stackrel{S_X}{\longrightarrow} )$, 
is defined as follows:
(1) for every $C \in S$ with at least one invocation or response on $X$, $S_X$ contains a concurrency class $C'$, consisting of the  (non-empty) subset
of $C$ of all invocations or responses of $X$, and
(2) for every $C', C'' \in S_X$, $C' \stackrel{S_X}{\longrightarrow} C''$
if and only if there are $T', T'' \in S$ such that 
$C' \subseteq T'$, $C'' \subseteq T''$ and 
$T' \stackrel{S}{\longrightarrow} T''$.

We say that $\widehat S = (S, \stackrel{S}{\longrightarrow} )$
is an \emph{interval-sequential execution of the system}
if $\widehat S |_X$ is an interval-sequential execution of $X$ for every  $X\in OBS$.
That is, if $\widehat S |_X\in ISSpec(X)$,
the {interval-sequential specification} of $X$, for every $X\in OBS$.
Let $\widehat S = (S, \stackrel{S}{\longrightarrow})$ be an 
interval-sequential execution.
For a process $p$, the \emph{projection of $\widehat S$ at $p$},
$\widehat S |_{p} = (S_{p}, \stackrel{S_{p}}{\longrightarrow})$,
is defined as follows:
(1) for every $C \in S$ with an invocation or response by $p$, $S_{p}$ contains 
contains a class $C$ with the  invocation or 
response by $p$
(there is at most one event by $p$ in $C$), and
(2) for every $a, b \in S_{p}$, $a \stackrel{S_{p}}{\longrightarrow} b$
if and only if there are $T', T'' \in S$ such that 
$a \in T'$, $b \in T''$ and $T' \stackrel{S}{\longrightarrow} T''$.

%Notice that if  $\widehat S$ is an {interval-sequential execution} 
%then
%(1) the concurrency classes alternate between invoking and responding,
%starting with an invoking one,
%(2) for every response in a concurrency class, there is corresponding 
%unmatched 
%invocation in an earlier concurrency class, and
%%(and in between,
%%there are no invocations nor responses of the same process),  and
%(3) in each invoking concurrency classes, there is at most one invocation by
%each process,  
%(4) an invocation belongs to at most one concurrency class,[[[???] 
%and (5) the specification of each object is respected.

\paragraph{Interval-linearizability} 
Recall  that an execution of the system is a sequence of 
invocations and responses (Section~\ref{sec:sysModel}). 
An invocation in an execution $E$ is \emph{pending} if it 
has no matching response, otherwise it is \emph{complete}. 
An \emph{extension} of an execution $E$ is obtained by appending 
zero or more responses to pending invocations.

An \emph{operation call} in $E$ is a pair consisting of an 
invocation and  its matching response.
 Let $comp(E)$ be the sequence obtained 
from $E$ by removing its pending invocations. 
The order in which invocation and responses 
in $E$  happened,
induces the following partial order: 
$\widehat{OP} = (OP, \stackrel{op}{\longrightarrow})$
where $OP$ is the set with all operation calls in $E$, and 
for each pair ${\sf op}_1, {\sf op}_2 \in OP$,
 ${\sf op}_1 \stackrel{op}{\longrightarrow} {\sf op}_2$ 
if and only if $term({\sf op}_1) < init({\sf op}_2)$ in $E$,
namely, the response of $op_1$ appears before the invocation
of $op_2$. Given two operation $op_1$ and $op_2$, 
$op_1$ \emph{precedes} $op_2$ 
if ${\sf op}_1 \stackrel{op}{\longrightarrow} {\sf op}_2$,
and they are \emph{concurrent} 
if ${\sf op}_1 \stackrel{op}{\nrightarrow} {\sf op}_2$
and ${\sf op}_2 \stackrel{op}{\nrightarrow} {\sf op}_1$.

Consider an execution of the system $E$ and its associated partial order 
$\widehat{OP} = (OP, \stackrel{op}{\longrightarrow})$,
and let $\widehat S = (S, \stackrel{S}{\longrightarrow} )$ be 
an interval-sequential execution.
We say that an operation $a \in OP$ \emph{appears} in a concurrency 
class $S' \in S$
if its invocation or response is in $S'$. Abusing  notation, 
we write $a \in S'$.
We say that $\stackrel{S}{\longrightarrow}$ 
\emph{respects} $\stackrel{op}{\longrightarrow}$,
also written as $\stackrel{op}{\longrightarrow} \subseteq 
\stackrel{S}{\longrightarrow}$,
if for every $a,b \in OP$ such that $a \stackrel{op}{\longrightarrow} b$,
for every $T', T'' \in S$ with $a \in T'$ and $b \in T''$,
it holds that $T' \stackrel{S}{\longrightarrow} T''$.

\begin{definition}[Interval-linearizability]
An execution $E$ is \emph{interval-linearizable} if there is an extension 
$\overline E$
of $E$ and an interval-sequential execution 
$\widehat{S} = (S, \stackrel{S}{\longrightarrow} )$ such that
\begin{enumerate}
%\vspace{-0.2cm}
\item for every process $p$, $comp(\overline E)|_p = \widehat{S}|_p$,
%\vspace{-0.2cm}
\item for every object $X$, $\widehat{S}|_X \in ISS(X)$ and
%\vspace{-0.2cm}
\item $\stackrel{S}{\longrightarrow}$ respects 
$\stackrel{op}{\longrightarrow}$, where
$\widehat{OP} = (OP, \stackrel{op}{\longrightarrow})$ 
is the partial order associated to $comp(\overline E)$.
\end{enumerate}
We say that $\widehat{S} = (S, \stackrel{S}{\longrightarrow} )$ is 
an interval-linearization of $E$.
\end{definition}

\begin{remark}[Linearizability and set-linearizability]
\label{remark:seqAndSetLin}
When we restrict to interval-sequential executions in which 
for every invocation there is a response to it in the very next 
concurrency class,
then interval-linearizability boils down to set-linearizability.
If in addition we demand that every concurrency class contains
only one element, then we have linearizability.
See Figure~\ref{fig:categories}.
\end{remark}

We can now complete the example of the validity object.
In Figure~\ref{interval-executionValidity}
there is an interval linearization of the execution
in Figure~\ref{fig:fig-validity-wAutomata}.
Similarly, for the validity with abort object,
in Figure~\ref{interval-executionValidityAbort}
there is an interval linearization of the execution
in Figure~\ref{fig:fig-validityAbort}.

\begin{figure*}[th]
\centering{
\hspace{0.0cm}
$\begin{array}{c|c|c|c|c|}
 \cline{2-5}
   & {init} & {term} & {init} & {term}   \\
 \cline{2-5}
 p &  {\sf validity}(1)	&   resp(2)	&  {\,}	& {\,}		 \\
 \cline{2-5}
q &  {\sf validity}(2)  		&   {\,}	&  & resp(3)    \\
 \cline{2-5}
r &   {\,} 		&   {\,}	&  {\sf validity}(3) & resp(1)	 \\
 \cline{2-5}
\end{array}
$
\caption{An execution of a  Validity  object}
\label{interval-executionValidity}
}
\end{figure*}

\begin{figure*}[th]
\centering{
\hspace{0.0cm}
$\begin{array}{c|c|c|c|c|c|c|}
 \cline{2-7}
   & {init} & {term} & {init} & {term} & {init} & {term}  \\
 \cline{2-7}
 p &  {\sf validity}(1)	&   resp(2)	&  {\,}	& {\,}		&  {\sf  abort}()	& resp(aborted) \\
 \cline{2-7}
q &   {\sf validity}(2)  		&   {\,}	&  & resp(3) & {\,}	 {\sf abort}()	& resp(aborted)   \\
 \cline{2-7}
r &   {\,} 		&   {\,}	&  {\sf  validity}(3) & 	 & {\,}		& resp(aborted) 	  \\
 \cline{2-7}
\end{array}
$
\caption{An execution of a  Validity-Abort  object (2)}
\label{interval-executionValidityAbort}
}
\end{figure*}

%----------------------------------
\subsection{An interval-sequential implementation}

Once we have formally defined the notion of interval-linearizability,
we can show that the write-snapshot algorithm in Section~\ref{write-snapTask}
is interval-linearizable. 

\paragraph{The write-snapshot interval-sequential object}
Here is a formal definition of this task, using an interval-sequential object
based on the validity object of Section~\ref{sec:valObj}.
The  write-snapshot object $X$ has a single operation ${\sf write\_snapshot}(x)$
that can be invoked by each process, with a \emph{proposed} input parameter $x$,
and  returns a set. In the interval-sequential automata
each state $q$ is labeled with two values, $q.vals$ is the set
of id-values that have been proposed so far, and $q.pend$ is the set of
processes with pending invocations. The initial state $q_0$
has $q_0.vals=\emptyset$ and $q_0.pend=\emptyset$.
If $in$ is an invocation to the object, let $val(in)$ be the proposed value,
and $(id(in),val(in)$ be the \emph{proposed id-value pair.}
If $r$ is a response from the object, let $val(r)$ be the responded id-value pair.
For a set of invocations $I$ (resp. responses $R$) $vals(I)$ denotes
the proposed id-value pairs in $I$ (resp. $vals(R)$).
The transition relation  $\delta(q,I)$ contains all pairs $(R,q')$ such that:
\begin{itemize}
%\vspace{-0.2cm}
\item 
If $r\in R$ then $id(r)\in q.pend$ or there is an $in\in I$ with $id(in)=id(r)$,
%\vspace{-0.2cm}
\item If $r\in R$ then $val(r)= q.val\cup vals(I)$ 
%\vspace{-0.2cm}
\item 
$q'.vals= q.val\cup vals(I)$ and
 $q'.pend= (q.pend \cup ids(I))\setminus ids(R)$.
\end{itemize}

An example of an execution an the transitions through the automata
is in Figure~\ref{fig:fig-snapshot-exec}.

\begin{figure*}
\begin{center}
	\epsfxsize=5.50in
%	\epsfbox[0 0 850 850]{fig-intAutomata.eps}%
	\epsfbox{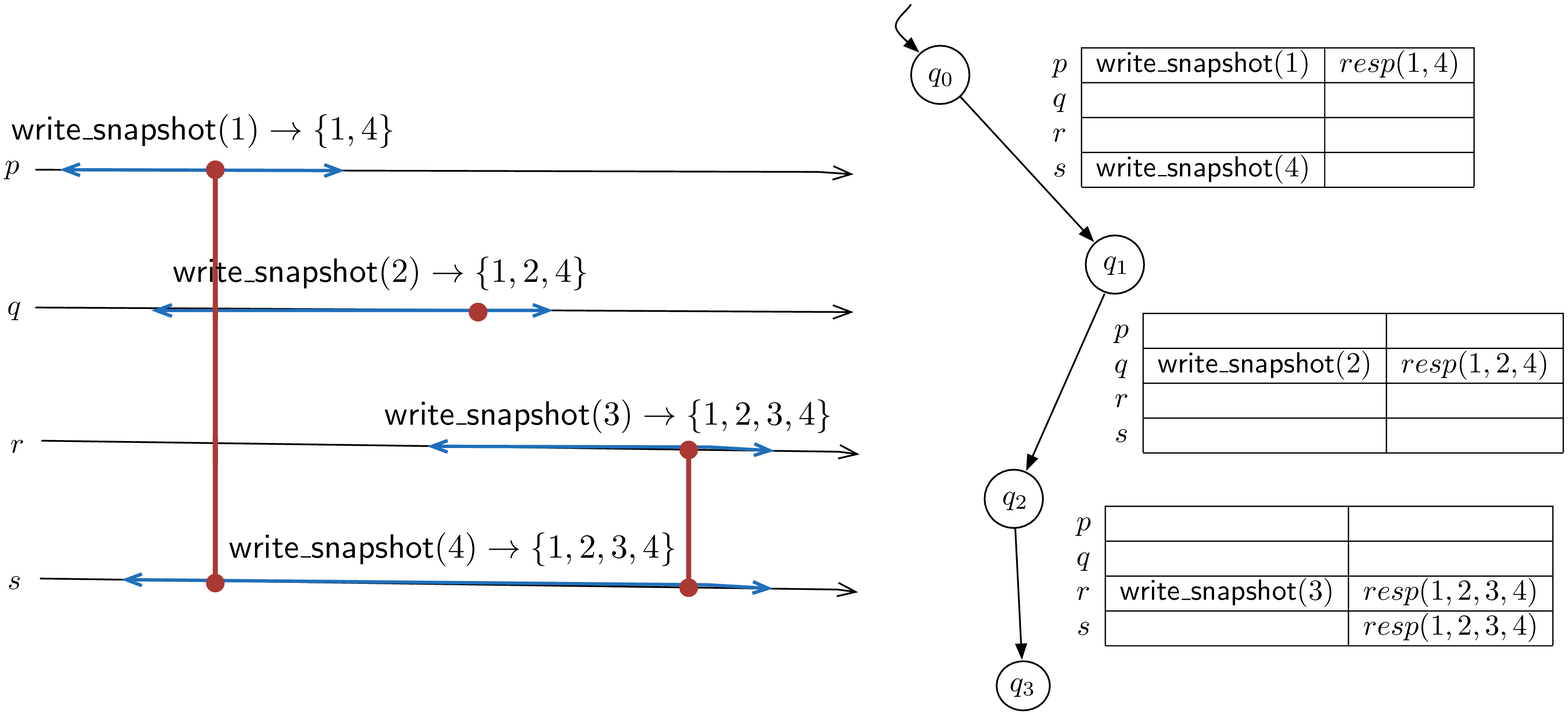}%
    \end{center}
\caption{An execution  of the write-snapshot  task.}
\label{fig:fig-snapshot-exec}
\end{figure*}

\paragraph{The write-snapshot algorithm is interval-linearizable}
The specification of a write-snapshot object
contains every interval-sequential execution satisfying the self-containment and containment properties
(Appendix~\ref{sec:proof-write-snapshot} contains a correctness proof
in the usual style, without interval-linearizability), 
thus, to show that an execution
of the algorithm is interval-linearizable, we need to transform it into a interval-sequential
execution that satisfy the real-time order of invocations and responses.

%%%Sergio: we should not mix into the theorem which is about interval linearizability, termination properties, much less define inside the theorem a notion of "wait-free interval-linearizable"
As with linearizability, interval-linearizability specifies a safety property,
it is not about liveness.
Thus, before showing that
the  algorithm of Figure~{\em\ref{fig:algorithm-write-snapshot}}
is interval-linearizable, we recall the usual 
termination arguments for this style of snapshot algorithm.
The invocation of  ${\sf write\_snapshot}()$  by any 
process $p_i$ terminates, because,
as the number of processes is fixed (equal to $n$), and a 
process invokes  ${\sf write\_snapshot}()$ at most once, it follows that a 
process can  execute at  most $(n-1)$ double collects where each time it
sees new values.

\begin{theorem}
\label{theo:write-snapshot-lin}
The write-snapshot algorithm of Figure~{\em\ref{fig:algorithm-write-snapshot}} is
 interval-linearizable.
%{\em (Proof in Appendix~\ref{sec:proof-write-snapshot}.)}
\end{theorem}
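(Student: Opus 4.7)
The plan is to construct an explicit interval-linearization $\widehat S$ of every execution $E$ of the algorithm by exploiting the real-time order of the writes at line WS-01 together with the double-collect structure. Let $P$ denote the set of operations whose process reached WS-01 in $E$, let $W_i$ denote the real time of that write, and let $\pi$ be the permutation of $P$ that sorts these operations by increasing $W_i$. The heart of the proof is the following structural lemma: for every completed $op_i$ with output $set_i$, and every $l \in P$ with $l \notin set_i$, one has $W_l > W_k$ for every $k \in set_i$. Indeed, in the successful iteration the same set is read twice: every $k \in set_i$ has $W_k$ strictly before $p_i$'s read of $\MEM[k]$ in the earlier of the two matching collects, whereas every unseen $l$ has $W_l$ strictly after $p_i$'s read of $\MEM[l]$ in the later collect, and the two collects are totally ordered in real time. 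Consequently $set_i$ equals the length-$\ell_i$ prefix $\{\pi(1), \ldots, \pi(\ell_i)\}$ of $\pi$, where $\ell_i := |set_i|$, and the outputs of any two completed operations are automatically $\subseteq$-comparable.

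Next, I would extend $E$ to $\overline E$ by appending, to every pending operation in $P$, a response equal to the $\pi$-prefix ending at its own position; operations that never reached WS-01 stay pending and are removed by $comp$. Setting $\ell_i := |set_i|$ for operations completed in $E$ and $\ell_i :=$ ($i$'s position in $\pi$) for the newly completed ones, let $F = \{\ell_i : op_i \in P\} = \{f_1 < \cdots < f_s\}$ with $f_0 := 0$, and define
\[\widehat S = I_1, R_1, I_2, R_2, \ldots, I_s, R_s\]
where $I_k = \{\mathrm{inv}(op_{\pi(j)}) : f_{k-1} < j \leq f_k\}$ and $R_k = \{\mathrm{resp}(op_j) : \ell_j = f_k\}$. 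Both classes are non-empty by construction, and $f_s = |P|$ (because the $\pi$-last operation in $P$ is either completed, forcing $|P| \in F$ via self-inclusion, or pending and thus assigned $\ell = |P|$), so every invocation in $\overline E$ is placed.

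The three conditions of interval-linearizability are then verified by induction on $k$. Before the $k$-th transition the automaton state satisfies $q.vals = \{\pi(1), \ldots, \pi(f_{k-1})\}$, so after reading $I_k$ the accumulated set equals $\{\pi(1), \ldots, \pi(f_k)\}$, which by the structural lemma equals $set_j$ for every $op_j$ with $\ell_j = f_k$, matching the automaton's requirement $val(r) = q.vals \cup vals(I_k)$; the id-condition holds because $j \in set_j$ forces $j$'s $\pi$-position to be $\leq \ell_j = f_k$, placing $\mathrm{inv}(op_j)$ in some $I_{k'}$ with $k' \leq k$. Per-process projection matches $comp(\overline E)|_p$ by construction. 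For real-time preservation, if $op_i$ completes before $op_j$ invokes in $E$, then $W_j$ exceeds $op_i$'s return time and hence exceeds $W_k$ for every $k \in set_i$, so $j$'s $\pi$-position strictly exceeds $\ell_i = f_{\tau_i}$ and $\mathrm{inv}(op_j)$ is forced into a transition with index strictly greater than $\tau_i$.

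The principal difficulty is the structural lemma: it is precisely the place where the specific double-collect semantics of the algorithm buys us the prefix shape on outputs, after which the construction of $\widehat S$ is essentially forced by the data. The only bookkeeping subtlety — avoiding empty responding classes while still referencing every value that appears in some output — is handled uniformly by completing each pending-but-written operation in $\overline E$ with its own prefix snapshot.
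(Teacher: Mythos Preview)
Your proof is correct and follows a route closely related to, but organized differently from, the paper's.

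The paper anchors the construction on a \emph{snapshot time} $\tau_i$ for each completed operation: a moment between the two matching collects at which the non-$\bot$ contents of $\MEM$ equal $set_i$. Ordering these $\tau_i$'s yields the chain $set_{x'_1}\subset\cdots\subset set_{x'_p}$ directly, and the invoking/responding classes are read off as the successive differences; all pending-but-written operations are then swept into one final responding class with the full value set. You instead anchor on the \emph{write times} $W_i$, and your structural lemma---every completed output is an initial segment of the write order $\pi$---is a sharper statement than the paper proves in the body (the paper defers containment to Theorem~\ref{theo:write-snapshot} in the appendix and only uses it implicitly here). Once that lemma is in hand, both constructions produce the same invoking classes $I_k=\{\pi(f_{k-1}+1),\dots,\pi(f_k)\}$ and the same responding classes for completed operations.

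What your organization buys: the prefix characterization makes the real-time check a one-line consequence (position in $\pi$ exceeds $\ell_i$), and pending operations are handled uniformly rather than as a special trailing class. What the paper's $\tau_i$ argument buys: it is the classical double-collect linearization idiom, so readers familiar with snapshot proofs recognize it immediately, and it avoids introducing the permutation $\pi$ as an auxiliary object. Substantively the two arguments are interchangeable.
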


\begin{proofT}
The proof is very similar to the usual linearizability proof 
for the  obstruction-free implementation of a snapshot object
(we follow~\cite{R13} (Sect.~8.2.1)), except that now \emph{two} points have
to be identified, one for the invocation of an operation and one for the response.
%Consider any execution $E$ of the algorithm.
%Let
% \begin{equation}
% \label{eq:seqExec}
% write_{i_1},write_{i_2},\ldots,write_{i_m},
% \end{equation}
%$m\leq n$, be the projection from $E$ of all the  write operations 
%(line~\ref{WS-01}) by  the
%processes, in the order as they appear in $E$.

Consider any execution $E$ and let $p_i$ be any process that terminates. As it returns a value $set_i$ 
(line~\ref{WS-06}), we have 
$set_i=old_i=new_i$ where $new_i$ corresponds to the last
asynchronous read of $\MEM[1..n]$ by $p_i$,  and $old_i$ corresponds to
the previous asynchronous read of $\MEM[1..n]$. 
Let $\tau[old_i]$ the time at which terminates the read  of $\MEM[1..n]$
returning $old_i$,  and  $\tau[new_i]$ the time at which starts the read  of 
$\MEM[1..n]$ returning $new_i$.
As $old_i=new_i$, it follows that there is a time $\tau_i$, 
such that  $\tau[old_i] \leq \tau_i \leq \tau[new_i]$ and, due to the 
termination predicate of line~\ref{WS-05}, the set of 
non-$\bot$ values of $\MEM[1..n]$ at time   $\tau_i$ is equal to $set_i$. 

For any process $p_i$ that terminates with $set_i$, we pick a time $\tau_i$ as described above.
Let $\bar{\tau}=\tau_{x_1}\leq \tau_{x_2}\leq \cdots\leq \tau_{x_{m}}$ be the ordered
sequence of chosen times, assuming the number of processes that terminate is $m$
($m\leq n$).
Clearly if $\tau_i = \tau_j$, then $set_i = set_j$, but it is possible
that $set_i = set_j$, with $\tau_i < \tau_j$, in case there is no write
in between $\tau_i$ and  $\tau_j$. Thus, for each  longest subsequence
of times in $\bar{\tau}$ with the same set $set_i$, we pick as representative,
the first time in the subsequence, and consider the following subsequence $\bar{\tau}'$
of $\bar{\tau}$, 
where $p$ ($1\leq p\leq m$) is the number of different sets returned by the processes.
The subsequence is 
$\bar{\tau}'=\tau_{x'_1}< \tau_{x'_2}< \cdots< \tau_{x'_p}$,
where the sets $set_{x'_1},set_{x'_2},\ldots,set_{x'_p}$ are all different.

For each subindex $x'_i$ in $\bar{\tau}'$,  consider the set that is output $set_{x'_i}$.
Let $A_{x'_i}$ be the set of processes in the execution that output  $set_{x'_i}$.
%These are the processes  
%with time $\tau_{x'_i}$ or larger;  if $i<m'$  those with time
%in the interval $[\tau_{x'_i},\tau_{x'_{i+1}} )$.
Using these sets and the sequence of times above, we define an interval-sequential execution as follows.
The interval-sequential execution
 $\widehat S = (S, \stackrel{S}{\longrightarrow})$ consists 
of an alternating sequence of invoking and responding
concurrency classes. The first  invoking concurrency class $I_1$ has  all invocations of processes in $set_{x'_1}$,
then $R_1$, the responding concurrency class with all responses by processes in $A_{x'_1}$,
followed by $I_2$, the con currency class with all invocations in $set_{x'_2} \setminus set_{x'_1}$, and 
the responding class with all responses by processes in $A_{x'_2}$, and so on.
For an example, see 
the interval sequential execution in the right of Figure~\ref{fig:fig-snapshot-exec} in Appendix~\ref{sec:proof-write-snapshot}.

If there are pending invocation in $\widehat S$ we just add a responding class in which there is a response
to each of them and they output all values written in the execution.
Observe that $\widehat S$ respects the real-time order of the invocations and responses of $E$
because if the response of $p_i$ precedes the invocation of $p_j$ then
$set_i$ cannot contain $p_j$ and then $\tau_i < \tau_j$, which implies that 
the invocation of $p_j$ in $\widehat S$ happens after the invocation of $p_i$.
Thus, the algorithm is interval-linearizable.

%For an example, see 
%the interval sequential execution in the right of Figure~\ref{fig:fig-snapshot-exec} in Appendix~\ref{sec:proof-write-snapshot}.
%The interval-sequential execution
% $\widehat S = (S, \stackrel{S}{\longrightarrow})$,
%of alternating sequence of invoking and responding
%concurrency classes, starting with an invoking class, consists of
%$S_0$ all the invocations of $write\_ snapshot$ by processes
%that write  values  contained in the first $set_1$,
%then $S_1$ consists of all the responses to processes that output
%precisely this first set $set_1$, then $S_2$ consists of all
%the processes  whose values appear in $set_2$,
%and $S_3$ of all the responses to processes that output
%the second set $set_2$, and so on.
\renewcommand{\toto}{theo:write-snapshot-lin}
\end{proofT}

%------------------------------------------------------------------------
\subsection{Interval-linearizability is composable and non-blocking}
Even though interval-linearizability is much more general than linearizability
it  retains some of its benefits. Proofs are  in Appendix~\ref{app:completeness}.

\begin{theorem}[Locality of interval-linearizability]
An execution E is interval-linearizable 
if and only if $E|_X$ is interval-linearizable,
for every object $X$.
\end{theorem}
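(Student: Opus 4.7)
The forward direction is immediate: if $\widehat S$ interval-linearizes $E$, then for each object $X$ the restriction $\widehat S|_X$ is, by definition of projection, an interval-sequential execution of $X$, and it inherits from $\widehat S$ the process-consistency and the real-time-respecting properties, hence interval-linearizes $E|_X$. The substantive direction follows the template of Herlihy and Wing's locality argument for linearizability, adapted from single linearization points to concurrency classes.

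For the backward direction, for each object $X$ fix an extension $\overline E_X$ of $E|_X$ and an interval-linearization $\widehat S_X=(S_X,\stackrel{S_X}{\longrightarrow})$ of it. Combine the $\overline E_X$'s into a single extension $\overline E$ of $E$ by adding, for each object and each of its pending invocations, the matching response (if any) added by the corresponding $\overline E_X$. On the disjoint union $\mathcal{C}=\bigsqcup_X S_X$ of per-object concurrency classes, define a relation $\to$ by: $C\to C'$ iff either (a) $C,C'\in S_X$ for some $X$ and $C\stackrel{S_X}{\longrightarrow} C'$ (a \emph{per-object} edge), or (b) there exist an event $a\in C$ and an event $b\in C'$ such that the operation containing $a$ precedes the operation containing $b$ in the real-time order $\stackrel{op}{\longrightarrow}$ of $comp(\overline E)$ (a \emph{real-time} edge).

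The crux is to show $\to$ is acyclic. A putative cycle can be assumed to alternate per-object and real-time edges: per-object edges on the same object compose within $\stackrel{S_X}{\longrightarrow}$, real-time edges compose by transitivity of $\stackrel{op}{\longrightarrow}$, and any real-time edge whose endpoints lie in the same $S_X$ is already realized by a per-object edge, because $\widehat S_X$ respects the real-time order of $E|_X$. From this alternation one derives a contradiction in the Herlihy--Wing style: each real-time edge $C\to C'$ means that the response of some operation represented in $C$ strictly precedes, in real time, the invocation of some operation represented in $C'$, so traversing the cycle forces a strict decrease of real-time endpoints, contradicting the totality of real time on invocations and responses.

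Given acyclicity, extend $\to$ to a linear order $\prec$ on $\mathcal{C}$ and read off a global interval-sequential execution $\widehat S$ by walking through the classes in $\prec$-order and coalescing each maximal run of same-type (all-invoking or all-responding) classes into a single global class. The resulting sequence is alternating, because two $\prec$-consecutive invoking (respectively, responding) classes from the same $S_X$ are separated in $\prec$ by the intermediate responding (respectively, invoking) class of $\widehat S_X$. It remains to verify: (i) for every object $X$, $\widehat S|_X=\widehat S_X$, which holds because coalescing never merges two distinct classes of $S_X$; (ii) for every process $p$, $\widehat S|_p=comp(\overline E)|_p$, which follows from well-formedness together with the fact that the events of $p$ are totally ordered by $\stackrel{op}{\longrightarrow}$ and $\prec$ extends this order; and (iii) $\prec$ respects $\stackrel{op}{\longrightarrow}$ by construction. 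I expect the main obstacles to be the acyclicity argument and the careful bookkeeping required for (i), namely ensuring that coalescing across objects never accidentally merges classes that must remain separate inside some $S_X$.
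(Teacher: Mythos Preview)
Your approach mirrors the paper's: take the disjoint union of the per-object concurrency classes, impose a relation combining the per-object orders with cross-object real-time constraints, prove it acyclic, extend to a total order, and coalesce consecutive same-type classes into a global interval-sequential execution. Steps (i)--(iii) and the coalescing argument are essentially as in the paper.

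The gap is in the acyclicity argument. First, your claim that ``real-time edges compose by transitivity of $\stackrel{op}{\longrightarrow}$'' is not justified: if $C \to C'$ witnesses $op_1 \stackrel{op}{\longrightarrow} op_2$ and $C' \to C''$ witnesses $op_3 \stackrel{op}{\longrightarrow} op_4$, then $op_2$ and $op_3$ both have an event in $C'$, but nothing forces them to be $\stackrel{op}{\longrightarrow}$-comparable, so $op_1 \stackrel{op}{\longrightarrow} op_4$ need not hold. The paper sidesteps this by restricting cross-object edges to go only from a \emph{responding} class to an \emph{invoking} class; then two such edges can never be consecutive, since the target of one is invoking while the source of the next must be responding. (Your broader edge set has the same transitive closure, so it is in fact acyclic, but your normalization step as written does not establish this.)

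Second, the ``strict decrease of real-time endpoints'' you invoke is the Herlihy--Wing argument for ordinary linearizability, where each operation collapses to a single point. In interval-linearizability an operation's events span several classes, so there is no timestamp one can attach to a concurrency class that is monotone along per-object chains. The paper instead argues by shortening: in a segment $S_1 \stackrel{op}{\to} S_2 \stackrel{S_X}{\to}\cdots\stackrel{S_X}{\to} S_t \stackrel{op}{\to} S_{t+1}$ with witnessing operations $a\stackrel{op}{\to} b$ and $c\stackrel{op}{\to} d$, a four-case analysis on how $b$ and $c$ relate in $comp(\overline E)$ (equal, $b\stackrel{op}{\to}c$, $c\stackrel{op}{\to}b$, or concurrent) shows $a\stackrel{op}{\to} d$, so the segment can be replaced by a single edge $S_1\to S_{t+1}$. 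Iterating reduces any cycle to one lying inside a single $\stackrel{S_X}{\longrightarrow}$, which is impossible.
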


\begin{proof}
We prove  that if each $E|_X$ is interval-linearizable for every $X$,
then $E$ is interval-linearizable (the other direction is trivial).
Consider an interval-linearization 
$\widehat S|_X = (S_X, \stackrel{S_X}{\longrightarrow} )$ of $E|_X$ .
Let $R_X$ be the responses appended to $E_X$ to get $\widehat S |_X$
and let $\overline E$ be the extension of $E$ obtained by appending 
the responses in the sets $R_X$ in some order. 
Let $\widehat{OP} = (OP, \stackrel{op}{\longrightarrow})$ be 
the partial order associated to $comp(\overline E)$.

We define the following relation 
$\widehat S = (S, \stackrel{S}{\longrightarrow} )$. 
The set $S$ is the union of all $S_X$, namely, the
union of all concurrency classes in the linearizations of all objects.
The relation $\stackrel{S}{\longrightarrow}$ is defined as follows:
\begin{enumerate}
%\vspace{-0.2cm}
\item For every object $X$, 
$\stackrel{S_X}{\longrightarrow} \subseteq \stackrel{S}{\longrightarrow}$.
%\vspace{-0.2cm}
\item For every pair of  distinct objects $X$ and $Y$, 
for every $a \in OP|_X$ and $b \in OP|_Y$ such that 
$a \stackrel{op}{\longrightarrow} b$ and
$a \in S'$ and $b \in S''$,
for a responding class $S' \in S$ and an invoking class $S'' \in S$,
we define $S' \stackrel{S}{\longrightarrow} S''$.
\end{enumerate}

%We stress that the relation $\stackrel{S}{\longrightarrow}$  might not be transitive.

\begin{claim}
\label{claim-acyclic}
The relation $\stackrel{S}{\longrightarrow}$ is acyclic.
\end{claim}

Although
 $\stackrel{S}{\longrightarrow}$ is acyclic,  it might not be transitive. 
Consider the transitive closure $\stackrel{\overline S}{\longrightarrow}$
of $\stackrel{S}{\longrightarrow}$ . One can easily show that 
$\stackrel{\overline S}{\longrightarrow}$
is acyclic, hence it is a partial order over $S$.
It is well-known that a partial order can be extended to a total order.
Let $\widehat{S}^* = (S, \stackrel{S^*}{\longrightarrow})$ 
a total order obtained from $\stackrel{\overline S}{\longrightarrow}$.
It could be that in $\widehat{S}^*$ concurrency classes do not alternate
 between invoking and responding, however, the first concurrency class
 certainly is an invoking one.
To get an interval-sequential execution, we merge consecutive invoking 
classes and 
responding classes in $\widehat{S}^*$ (namely, we take the union of such 
a sequence) 
and adjust $\stackrel{S^*}{\longrightarrow})$
accordingly. Let $\widehat{T}^* = (T, \stackrel{T^*}{\longrightarrow})$ be the 
resulting interval-sequential execution. We claim that $\widehat{T}^*$ is an
interval-sequential linearization of $E$.

By the definition of $\widehat S = (S, \stackrel{S}{\longrightarrow} )$ above,
we have that for every object $X$, $\widehat{T}^*|_X \in ISS(X)$.
From the assumption that each $\widehat S|_X$ respects 
the real time order in $comp(\overline E|_X)$,
and by the definition of $\widehat S$,
it follows that $\stackrel{T^*}{\longrightarrow}$ respects 
the real time order in $comp(\overline E)$,
namely, $\stackrel{T^*}{\longrightarrow}$ respects 
$\stackrel{op}{\longrightarrow}$.
That and the definition of $\widehat S = (S, \stackrel{S}{\longrightarrow} )$ 
also imply that
for every process $p$, $comp(\overline E)|_p = \widehat{T}^*_p$,
This completes the proof of the lemma.
\end{proof}

When we consider the specification $ISS(X)$ of and interval-sequential
 object with total operation $opName$,
for every $S \in ISS(X)$ and every invocation $\{inv(opName)\}$ to $opName$,
the interval-sequential execution 
$S \, \cdot \, \{inv(opName)\} \, \cdot \, S'$ belongs to $ISS(X)$,
for some responding concurrency class containing a matching response 
to $\{inv(opName)\}$.

\begin{theorem}
\label{theo-non-blocking}
Let $E$ be an interval-linearizable execution in which there is a 
pending invocation $inv(op)$ 
of a total operation. Then, there is a response $res(op)$ such 
that $E \cdot res(op)$ is interval-linearizable.
\end{theorem}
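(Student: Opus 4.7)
The plan is to take the extension $\overline E$ and interval-sequential execution $\widehat S$ witnessing that $E$ is interval-linearizable, and from them build a response $res(op)$ together with a new witness $(\overline{E}', \widehat{S}')$ for $E \cdot res(op)$. I would split into two cases depending on whether $op$ is already complete in $\overline E$.

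If $res(op)$ already appears among the responses appended to $E$ in $\overline E$, I would simply reorder those appended responses to place $res(op)$ first, obtaining $\overline{E}'$, and keep $\widehat S$ unchanged. The key point is that reordering responses in the appended tail does not affect the induced partial order $\stackrel{op}{\longrightarrow}$: for any complete operations $a$ and $b$, the invocation $init(b)$ lies in $E$, and either $term(a)$ is also in $E$ (where order is preserved) or $term(a)$ lies in the tail, in which case $a$ never precedes $b$ in either ordering. Per-process sequences are also preserved, because each process can contribute at most one response to the tail.

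The substantive case is when $op$ is still pending in $\overline E$. Let $X = obj(op)$ and $p = id(op)$. I would first observe that the terminal state of $\widehat{S}|_X$ has no pending invocations on $X$: any such would correspond to an invocation of some process $p'$ appearing in $\widehat{S}|_{p'} = comp(\overline E)|_{p'}$ without its matching response, contradicting the definition of $comp$. Hence $\widehat{S}|_X$ ends with a responding class (or is empty). Since $op$ is total and its invocation is not pending in this state, totality provides a transition on input $\{inv(op)\}$ whose response set contains $res(op)$; with no prior pending invocations on $X$ and a singleton input, that response set must be exactly $\{res(op)\}$. I would then set $\widehat{S}' = \widehat{S} \cdot \{inv(op)\} \cdot \{res(op)\}$ and define $\overline{E}'$ as $E \cdot res(op)$ followed by the remaining responses appended in $\overline E$.

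The three interval-linearizability conditions for $(\overline{E}', \widehat{S}')$ then follow: per-process projections gain $inv(op)$ and $res(op)$ at $p$ and are unchanged elsewhere; the $X$-projection of $\widehat{S}'$ lies in $ISS(X)$ by construction, and other object projections are unchanged; and since every invocation in $\overline{E}'$ lies inside $E$ while $res(op)$ is the very first appended response, $op$ precedes no other operation, while any $a$ with $a \stackrel{op'}{\longrightarrow} op$ has $term(a)$ strictly before $inv(op)$ in $E$, so all classes of $a$ sit before the newly appended $op$-classes at the end of $\widehat{S}'$. The main obstacle is the preliminary observation that $\widehat{S}|_X$ has no pending invocations in its terminal state; this is what both enables totality to yield the clean singleton $\{res(op)\}$ and preserves the alternating structure required of $\widehat{S}'|_X$.
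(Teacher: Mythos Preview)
Your proof is correct and follows essentially the same two-case split as the paper: either the pending invocation already received a response in the witnessing extension/linearization, or one uses totality to append $\{inv(op)\}\cdot\{res(op)\}$ at the end of $\widehat S$. The paper's own argument is considerably terser and does not spell out the verification of the three interval-linearizability conditions, nor the preliminary observation that $\widehat S|_X$ has no pending invocations at its terminal state; your treatment of these points is more careful but not a different approach.
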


%=======================================================================
\section{Tasks and their relationship with automata-based specifications}
\label{sec:completeness} 
%Given a problem style of specification and a consistency condition, there
%is a predicate, $c(O,E)$ that is true if and only if an execution of the system, $E$, 
%is consistent with respect to an object $O$.
%For linearizability, $linearizable(O,E)$ is true if $E$ is linearizable with respect to 
%a sequentially specified object $O$,
%and similarly are defined $set-linearizable(O,E)$, $interval-linearizable(O,E)$
%and $task(O,E)$. 
%We write $c\leq c'$ where $c$ denotes a pair of consistency condition and 
%object specification style, if the following holds.
%For every object $O$ in $c$ there is an object $O'$ in $c'$ such that,
%if $E$ is an execution consistent with respect to $c$, then $E$ is consistent
%with respect to $c'$.
A task is a static way of specifying a \emph{one-shot} concurrent problem, namely,
a problem with one operation that can be invoked once by each process.
Here we study the relationship between this static way of defining a problem, 
 and the automata-based ways of specifying a problem that we have been considering.
Proofs and additional details are in Appendix~\ref{app:completeness}.

Roughly,  a task $(\m{I},\m{O},\Delta)$ consists of a set of input assignments
$\m{I}$, and a set of output assignments $\m{O}$, which
are defined in terms of sets called \emph{simplexes} of the form
$s=\{(\id_1,x_1),\dots,(\id_k,x_k)\}$. A singleton simplex is a \emph{vertex}.
A simplex  $s$ is used to denote the input values,
or output values in an execution, where  $x_i$ 
denotes the value of the process with identity $\id_i$, either an input
value, or an output value. 
Both $\m{I}$ and $\m{O}$ are \emph{complexes}, which means they
are closed under containment.
There is an input/output relation $\Delta$,
specifying for each input simplex $s\in\m{I}$, a subcomplex of $\m{O}$
consisting of a set of output 
simplexes $\Delta(s)\subseteq\m{O}$ that may be produced with input $s$.
If $s,s'$ are two simplexes in $\m{I}$ with $s'\subset s$,
 then $\Delta(s')\subset\Delta(s)$.
Formal definitions   are in Appendix~\ref{app:tasks}.

\paragraph{When does an execution satisfy a task?}
A task is usually specified informally, in the style of Section~\ref{sec:addExampNoSeq}.
E.g., for the $k$-set agreement task one would say that each process proposes
a value, and decides a value, such that (validity) 
 a decided value has been proposed,
and (agreement) at most $k$ different values are decided. A formal definition
of when an execution satisfies a task is derived next.
A task $T$ has only one operation, ${\sf task}()$, which  process $\id_i$
may call with  value $x_i$, if $(\id_i,x_i)$ is a vertex of $\m{I}$.
The operation ${\sf task}(x_i)$ may return  $y_i$ to the process,
if $(\id_i,y_i)$ is a vertex of $\m{O}$.
Let $E$ be an execution where each process calls ${\sf task}()$  once.
Then, $\sigma_{E}$ denotes the simplex
containing all input vertices in $E$, namely, if in $E$ there is an invocation
of ${\sf task}(x_i)$ by process $\id_i$ then $(\id_i,x_i)$ is in  $\sigma_{E}$.
Similarly, 
$\tau_{E}$ denotes the simplex  containing all output vertices  in $E$,
namely, $(\id_i,y_i)$ is in $\tau_{E}$ iff there is a response $y_i$ to a process $\id_i$ in $E$.
We say that
$E$ \emph{satisfies}  task $T = \langle I, O, \Delta \rangle$
if for every prefix $E'$ of $E$, it holds that $\tau_{E'} \in \Delta(\sigma_{E'})$.
It is necessary to  consider 
all prefixes of an execution, to  
 prevent anomalous executions that globally
seem correct, but in a prefix a
process predicts future invocations, as in the execution of the 
$validity$ task in Figure~\ref{fig-validityMislead}.\footnote{This prefix requirement 
has been implicitly considered in the past by stating that
an algorithm solves a task if any of its executions agree with the specification of the task.}

\begin{figure*}
\begin{center}
	\epsfxsize=4.50in
%	\epsfbox[0 0 850 850]{fig-intAutomata.eps}%
	\epsfbox{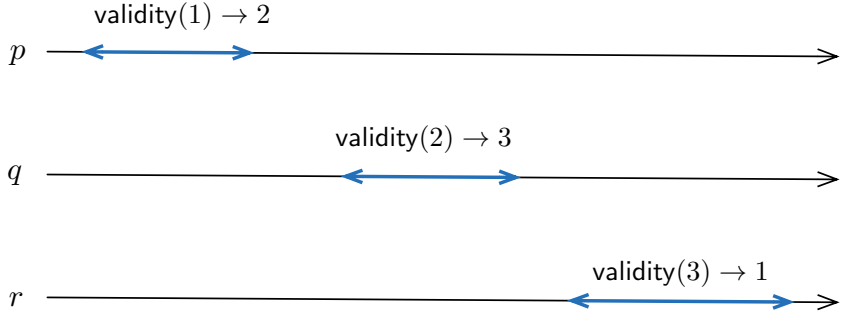}%
    \end{center}
\caption{An execution that does not satisfy the $validity$ task.}
\label{fig-validityMislead}
\end{figure*}

\paragraph{From tasks to interval-sequential objects}
A task is a very compact way of specifying a distributed problem that
is capable of describing allowed behaviours for certain concurrency patterns,
and indeed it is hard to understand what exactly is the problem being specified.
The following theorem (with its proof) provides an automata-based 
representation of a task,  explaining which outputs may be
produced in each execution, as permitted by $\Delta$.

\begin{figure*}
\begin{center}
	\epsfxsize=6.5in
%	\epsfbox[0 0 850 850]{fig-intAutomata.eps}%
	\epsfbox{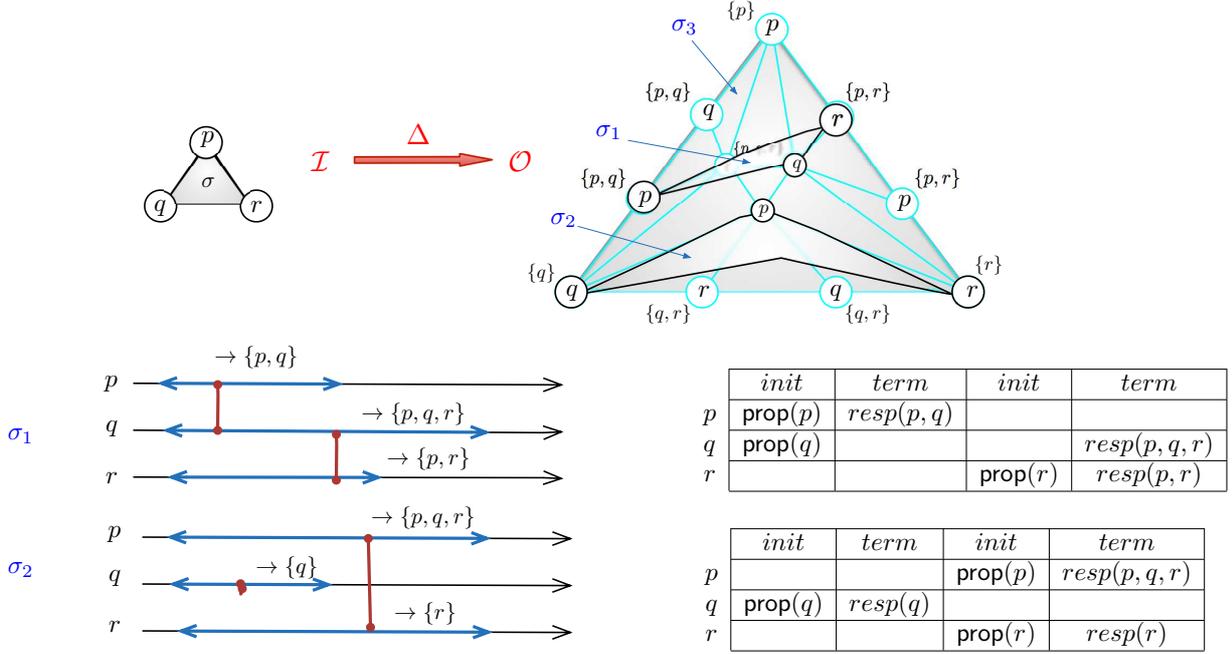}%
    \end{center}
\caption{Two  special output simplexes $\sigma_1,\sigma_2$, and 
interval-linearizations of two executions with corresponding outputs}
\label{fig-simplexInterp}
\end{figure*}

\begin{theorem}
\label{theo-from-tasks-to-objects}
For every task $T$, there is an interval-sequential object $O_T$ such that 
an execution $E$ satisfies $T$ if and only if 
it is interval-linearizable with respect to $O_T$.
\end{theorem}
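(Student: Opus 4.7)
The plan is to construct $O_T$ as an automaton whose reachable states encode the input/output simplexes accumulated so far, with transitions that enforce the task constraint $\Delta$ at every step, and then verify both directions by simplicial bookkeeping. Concretely, let the state set be $Q = \{(\sigma,\tau) : \sigma\in\mathcal{I},\ \tau\in\Delta(\sigma),\ \text{ids}(\tau)\subseteq\text{ids}(\sigma)\}$, with initial state $(\emptyset,\emptyset)$. Intuitively, $\sigma$ records the invocations placed so far, $\tau$ the responses produced so far, and $\text{ids}(\sigma)\setminus\text{ids}(\tau)$ the pending set. A transition $((\sigma,\tau), I)\to (R, (\sigma',\tau'))$ belongs to $\delta$ exactly when (i) every invocation in $I$ is by a process not already in $\text{ids}(\sigma)$, (ii) every response in $R$ is directed to a pending process in $(\text{ids}(\sigma)\cup\text{ids}(I))\setminus\text{ids}(\tau)$, (iii) $\sigma' = \sigma\cup\text{vals}(I)$ and $\tau' = \tau\cup\text{vals}(R)$, and (iv) $\tau'\in\Delta(\sigma')$. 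Every reachable state thus certifies the task invariant by construction.

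For the forward direction, given $E$ satisfying $T$, I would first extend $E$ to $\overline E$ by appending, for each pending invocation, a response whose value is chosen so that the enlarged output simplex stays inside $\Delta(\sigma_E)$ (this is routine when $\Delta(\sigma_E)$ contains a simplex saturating $\text{ids}(\sigma_E)$, which holds for all tasks considered in the paper). I would then split $\overline E$ into its maximal consecutive runs of invocation events and response events, obtaining concurrency classes $I_0, R_0, \ldots, I_m, R_m$. The state reached after processing $I_iR_i$ is exactly $(\sigma_{E'_i}, \tau_{E'_i})$, where $E'_i$ is the prefix of $\overline E$ ending with $R_i$; since $\overline E$ satisfies $T$, we have $\tau_{E'_i}\in\Delta(\sigma_{E'_i})$, so the transition is legal in $O_T$. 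Real-time order between operations is preserved because whenever a response precedes an invocation in $\overline E$ the maximal-run grouping places the response strictly earlier in $\widehat S$ than the invocation, and the per-process projection matches $comp(\overline E)|_p$ by construction.

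For the backward direction, let $\widehat S = I_0, R_0, \ldots, I_m, R_m$ be an interval-linearization of $E$, write $\sigma_i, \tau_i$ for the simplexes accumulated through class $R_i$, and take any prefix $E'$ of $E$. If $\tau_{E'}=\emptyset$ then $\emptyset\in\Delta(\sigma_{E'})$ trivially; otherwise pick a process $p$ whose response lies in $\tau_{E'}$, and let $b^* = \max\{b : R_b\cap\tau_{E'}\neq\emptyset\}$, so that $p$'s response sits in $R_{b^*}$. The key step is to show $\sigma_{b^*}\subseteq\sigma_{E'}$: for any process $q$ whose invocation is in some $I_a$ with $a\leq b^*$, we cannot have $p\stackrel{op}{\longrightarrow} q$ in $E$ (else real-time respect would force $R_{b^*}$ strictly before $I_a$, giving $b^*<a$, a contradiction), so either $q\stackrel{op}{\longrightarrow} p$ (and then $q$'s response, hence invocation, lies in $E'$) or $q$ and $p$ are concurrent in $E$ (and then $q$'s invocation precedes $p$'s response in $E$, while $p$'s response is in $E'$, so $q$'s invocation is too). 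Once this containment is in hand, $\tau_{E'}\subseteq\tau_{b^*}\in\Delta(\sigma_{b^*})$ combined with closure of $\Delta(\sigma_{b^*})$ under containment yields $\tau_{E'}\in\Delta(\sigma_{b^*})$, and monotonicity of $\Delta$ together with $\sigma_{b^*}\subseteq\sigma_{E'}$ gives $\tau_{E'}\in\Delta(\sigma_{E'})$.

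The main obstacle is precisely this containment $\sigma_{b^*}\subseteq\sigma_{E'}$: it is where real-time respect (a condition on operations rather than individual events) must be combined with the way concurrency classes bundle invocations of several processes, and it requires careful case analysis on the relative position of $p$ and $q$ in $E$. A secondary, bookkeeping obstacle is the treatment of pending invocations in the forward direction, resolved by first completing $E$ to $\overline E$ using the subcomplex structure of $\Delta$, after which the maximal-run grouping works cleanly.
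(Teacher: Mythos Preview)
Your proposal is correct and follows essentially the same route as the paper: states of $O_T$ are pairs $(\sigma,\tau)$ recording accumulated inputs/outputs, the forward direction completes pending invocations and groups $\overline E$ into alternating invocation/response blocks, and the backward direction shows $\tau_{E'}\in\Delta(\sigma_{E'})$ by locating in the linearization the last responding class touching $\tau_{E'}$ and comparing its accumulated input simplex to $\sigma_{E'}$. The only presentational differences are that the paper defines the transitions of $O_T$ indirectly via its \textsf{Sequences} procedure rather than your explicit local constraint $\tau'\in\Delta(\sigma')$, and that your case analysis for $\sigma_{b^*}\subseteq\sigma_{E'}$ spells out what the paper asserts in one line.
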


%
%\begin{lemma}
%\label{theo-from-objects-to-tasks}
%For every one-shot interval-sequential object $O$ with a single total operation,
%there is a task $T_O$ such that any execution $E$ is interval-linearizable with respect to $O$
%if and only if $E$ satisfies $T_O$.
%\end{lemma}

To give an intuition of the insights in the proofs of this theorem, 
consider the immediate snapshot task (Figure~\ref{fig:fig-is-task}).
A simple case is the output simplex $\sigma_4$ in the center of the output 
complex,
where the three processes output $\{p,q,r\}$. It is simple, because this
simplex does not intersect the boundary. Thus, it can be produced as output 
only when all three operations are concurrent.
More interesting is output simplex $\sigma_3$, where they also may 
run concurrently,
but in addition, the same outputs may be returned in a fully 
sequential execution,
because $\sigma_3$ intersects both the 0-dimensional and the 1-dimensional 
boundary of the output complex. In fact $\sigma_3$ can also be produced
if $p,q$ are concurrent, and later comes $r$, because 2 vertices of $\sigma_3$
are  in $\Delta(p,q)$.
Now, consider the two more awkward output simplexes 
$\sigma_1,\sigma_2$ in $\Delta(\sigma)$
added to the immediate-snapshot output complex
 in Figure~\ref{fig-simplexInterp}, where
 $\sigma_1=\{(p,\{p,q\}), (q, \{p,q,r\})  ,(r, \{p,r\})\}$, and
 $\sigma_2=\{(p,\{p,q,r\}), (q, \{q\})  ,(r, \{r\})\}$.
 At the bottom of the figure, two executions and 
their interval-linearizations are shown, though there are more 
executions that are interval-linearizable 
and can produce $\sigma_1$ and $\sigma_2$.
%The point is that an output vertex is part of several simplexes.
Consider $\sigma_2$, which is in $\Delta(\sigma)$. Simplex $\sigma_2$ 
has a face, $\{q\}$,
in $\Delta(q)$, and another face, $\{r\}$ in $\Delta(r)$. This specifies
a different behavior from the output simplex in the center, than does not
intersect with the boundary. 
Since $\Delta(\{q\})=\{q\}$, it is OK for $q$ to return
$\{q\}$ when it invokes and returns before the others invoke.
Now, since 
$\{\{p,q,r\},q,r\} \in \Delta(\{p,q,r\})$ then 
 it is OK for $r$ to return $\{r\}$ after everybody has invoked. 
Similarly, since 
$\{\{p,q,r\},q,r\} \in \Delta(\{p,q,r\})$,  $p$ can return $\{\{p,q,r\},q,r\}$.
The main observation here is that the structure of the mapping $\Delta$ encodes
the interval-sequential executions that can produce the outputs in a given 
output simplex.
In the example, $\Delta$ precludes the possibility that in a sequential 
execution the processes outputs the values in $\sigma_1$, since $\Delta$
 specifies no process can decide without seeing anyone else.

\paragraph{From one-shot interval-sequential objects to tasks} 
The converse of Theorem~\ref{theo-from-tasks-to-objects} is not true. 
Lemma~\ref{lemm-queueNoTask} shows that even some sequential objects,
such as queues, cannot be represented as a task.
Also, recall that  there are tasks with no set-sequential specification.
Thus, both tasks and set-sequential objects are 
 interval-sequential objects, but they are incomparable.

\begin{lemma}
\label{lemm-queueNoTask}
There is a sequential one-shot object $O$ such that there is no
task $T_O$, satisfying that an execution $E$ is linearizable with respect to $O$
if and only if $E$ satisfies $T_O$ (for every $E$).
\end{lemma}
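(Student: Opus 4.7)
The plan is to take $O$ to be a one-shot FIFO queue with three participants $p, q, r$, where $p$ invokes $\mathrm{enq}(a)$, $q$ invokes $\mathrm{enq}(b)$, and $r$ invokes $\mathrm{deq}$ (each process calls exactly one operation). I will then assume, for contradiction, that a task $T_O = (\m{I}, \m{O}, \Delta)$ exists satisfying the equivalence in the statement, and exhibit an execution that satisfies $T_O$ but is not linearizable with respect to $O$.

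First I would pin down what $\Delta$ must contain on the full input simplex $\sigma = \{(p, \mathrm{enq}(a)), (q, \mathrm{enq}(b)), (r, \mathrm{deq})\}$. Consider the execution $E_2$ in which all three operations are fully concurrent and $r$ returns $b$; it is linearizable via the order $q.\mathrm{enq}(b)$, $p.\mathrm{enq}(a)$, $r.\mathrm{deq}$. By the assumed equivalence, $E_2$ satisfies $T_O$, so applying the definition to its last prefix gives $\tau_{E_2} = \{(p, \mathrm{ack}), (q, \mathrm{ack}), (r, b)\} \in \Delta(\sigma)$. Symmetrically, the all-concurrent execution with $r$ returning $a$ forces $\{(p, \mathrm{ack}), (q, \mathrm{ack}), (r, a)\} \in \Delta(\sigma)$. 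Moreover, by considering prefixes of these two linearizable executions one obtains, for every sub-simplex $\sigma' \subseteq \sigma$ and every ``physically plausible'' face $\tau' \in \m{O}$ over the processes in $\sigma'$, that $\tau' \in \Delta(\sigma')$.

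Next I would construct the key non-linearizable execution $E_3$: $p$ invokes $\mathrm{enq}(a)$ and receives $\mathrm{ack}$, then $q$ invokes $\mathrm{enq}(b)$ and receives $\mathrm{ack}$, then $r$ invokes $\mathrm{deq}$ and receives $b$. This $E_3$ is not linearizable with respect to $O$, since the real-time order forces $p.\mathrm{enq}(a)$ before $q.\mathrm{enq}(b)$, hence $r$ would have to return $a$. I would then walk through each prefix $E_3'$ of $E_3$ and verify $\tau_{E_3'} \in \Delta(\sigma_{E_3'})$: each such pair $(\sigma_{E_3'}, \tau_{E_3'})$ also arises as the $(\sigma, \tau)$ of some prefix of $E_2$ (or of the degenerate executions where only subsets of $\{p,q\}$ invoke), and therefore lies in $\Delta$ by the previous paragraph. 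In particular the final prefix contributes $\tau_{E_3} = \tau_{E_2} \in \Delta(\sigma)$. Hence $E_3$ satisfies $T_O$, contradicting the assumed equivalence.

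The main obstacle I expect is justifying that no intermediate prefix of $E_3$ can be used by $\Delta$ to reject it: one might hope that the real-time gap between $p$'s response and $q$'s invocation in $E_3$ is visible to the task. The point is that the task specification sees only the unordered sets $\sigma_{E_3'}$ and $\tau_{E_3'}$, not the relative timing of events within them; whenever $(\sigma_{E_3'}, \tau_{E_3'})$ matches some prefix of a linearizable execution, $\Delta$ is forced to accept it. Making this precise is what the proof hinges on, and it exposes the structural mismatch between tasks (static set-based relations) and linearizability for objects like a queue whose correctness truly depends on real-time order between non-overlapping operations.
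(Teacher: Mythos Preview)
Your proposal is correct and essentially the same as the paper's proof: both use a one-shot three-process queue (the paper writes $enq(1), enq(2), deq$ in place of your $enq(a), enq(b), deq$), force the output simplex $\{(p,\mathrm{ack}),(q,\mathrm{ack}),(r,b)\}$ into $\Delta(\sigma)$ via a linearizable execution where $q$'s enqueue effectively comes first, and then exhibit the sequential execution $p;q;r$ with $r$ returning $b$ as one that passes every prefix test for $T_O$ yet is not linearizable. The only cosmetic difference is that the paper uses a \emph{sequential} witness $\alpha_2$ (order $q,p,r$) together with $\alpha_1$ (order $p,q,r$) to cover all needed prefixes directly, whereas you use a fully concurrent $E_2$ plus the degenerate sub-executions; both routes yield the same $(\sigma',\tau')$ pairs in $\Delta$ and the argument is otherwise identical.
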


We have stablished that tasks have strictly less expresive power than
interval-sequential one-shot objects, however, a slight modification of 
the notion of tasks allows to equate the power of both approaches for specifying distributed one-shot problems. 
Roughly speaking, tasks cannot model interval-sequential 
objects because they do not have a mechanism to encode the state of an object.
The extension we propose below allows to model states.

In a \emph{refined} task $T = \langle \m{I}, \m{O}, \Delta \rangle$,
 $\m{I}$ is defined as usual and each output vertex of $\m{O}$ has the form $(id_i. y_i, \sigma_i')$
where $id_i$ and $y_i$ are, as usual, the ID of a process and an output value,
and $\sigma_i'$ is an input simplex called the \emph{set-view} of $id_i$.
The properties of $\Delta$ are maintained and in addition it satisifies the following:
%The set-view in an output simplex is related to input simplexes as follows:
for every $\sigma \in \m{I}$, for every $(id_i, y_i, \sigma_i') \in \Delta(\sigma)$,
it holds that $\sigma_i' \subseteq \sigma$.
An execution $E$ \emph{satisfies} a refined task $T$ 
if for every prefix $E'$ of $E$, it holds that $\Delta(\sigma_{E'})$ contains
the simplex
$\{ (id_i, y_i, \sigma_{i \, E''}) :  (id_i, y_i) \in \tau_{E'} \, \wedge  
\hbox{ $E''$ (which defines ${\sigma_i}_{E''}$) is the shortest prefix of $E'$ containing the response $(id_i,y_i)$} \}$.

We stress that, for each input simplex $\sigma$, 
for each output vertex $(id_i, y_i, \sigma_i) \in \Delta(\sigma)$, $\sigma_i$
is a way to model distinct output vertexes in $\Delta(\sigma)$ 
whose output values (in $(id_i, y_i)$) are the same,
then a process that outputs that vertex does not actually output $\sigma_i$.
In fact, the set-view of a process $id_i$ corresponds to the set of invocations
that precede the response $(id_i, y_i)$ to its invocation in a given execution
(intuitively, the invocations that a process ``sees'' while computing its output value ).
Set-views are the tool to encode the state of an object.
Also observe that if $E$ satisfies a refined task $T$,
then the set-views behave like snapshots:
1) a process itself (formally, its invocation) appears in its set-view and
2) all set-view are ordered by containment (since we assume $E$ is well-formed).
%Then, this is also modeling the order of invocations and responses 
%in the executions that are interval-linearizable with respect to $O$
%(intuitively, the invocations that a process ``sees'' while computing its output value in $res$).

As already mentioned, interval-sequential objects and refined tasks
have the same ability to specify distributed one-shot problems, 
as the following theorems show. The proof of Theorem~\ref{theo-from-ref-tasks-to-objects} is essentially the same
as the proof of Theorem~\ref{theo-from-tasks-to-objects}.

\begin{theorem}
\label{theo-from-objects-to-tasks}
For every one-shot interval-sequential object $O$ with a single total operation,
there is a refined task $T_O$ such that any execution $E$ is interval-linearizable with respect to $O$
if and only if $E$ satisfies $T_O$.
\end{theorem}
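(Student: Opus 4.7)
The plan is to define $T_O$ by reading off, from the Mealy automaton of $O$, which output labelings are reachable under invocations drawn from a given input simplex. I would take $\m{I}$ to be the complex of input simplexes with vertices $(id_i, x_i)$ for $x_i$ a valid input of $id_i$ to $O$, and let $\m{O}$ consist of vertices $(id_i, y_i, \sigma_i')$ with $\sigma_i' \in \m{I}$ and $(id_i, x_i) \in \sigma_i'$. The crucial definition is: a simplex $\tau = \{(id_i, y_i, \sigma_i') : i \in J\}$ lies in $\Delta(\sigma)$ iff there exists an interval-sequential execution $h = I_0, R_0, \ldots, I_m, R_m \in ISS(O)$ whose invocations are contained in $\sigma$ and include each $id_i$'s invocation for $i \in J$, such that $id_i$ responds with $y_i$ in some class $R_{k_i}$ and its linearization set-view $V_i = I_0 \cup \cdots \cup I_{k_i}$ satisfies $V_i \subseteq \sigma_i' \subseteq \sigma$. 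Allowing $\sigma_i'$ to range freely over the interval $[V_i, \sigma]$, rather than forcing $\sigma_i' = V_i$, is the pivotal freedom that lets $T_O$ absorb the gap between what a process \emph{could have seen in real time} and what the linearization \emph{credits it as having seen}; this is exactly the slack missing from the non-refined task framework, per Lemma~\ref{lemm-queueNoTask}. Checking that $\m{I}$ and $\m{O}$ are complexes, that $\sigma_i' \subseteq \sigma$, and that $\sigma' \subseteq \sigma$ yields $\Delta(\sigma') \subseteq \Delta(\sigma)$ is immediate, since a witness $h$ for $\sigma'$ witnesses the same $\tau$ for $\sigma$.

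For the forward direction, fix an interval-linearization $\widehat{S} = I_0, R_0, \ldots$ of some extension $\overline{E}$ of $E$. For each prefix $E'$ of $E$, let $J = ids(\tau_{E'})$, write $R_{k_i}$ for the class carrying $id_i$'s response in $\widehat{S}$, and set $k^* = \max_{i \in J} k_i$. I propose $h = \widehat{S}_{\le k^*}$ as the witness. Two observations drive the argument. First, every invocation in $h$ lies in $\sigma_{E'}$: if some $id_j$'s invocation appeared in $I_0 \cup \cdots \cup I_{k^*}$ but not in $E'$, then $id_j$ would have invoked only after $E'$ ended, so for every $i \in J$ we would have $op_i \stackrel{op}{\longrightarrow} op_j$, and $\widehat{S}$'s respect for $\stackrel{op}{\longrightarrow}$ would push $id_j$'s invocation past every $R_{k_i}$, hence past $R_{k^*}$, a contradiction. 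Second, $V_i \subseteq \sigma_{iE''}$ for each $i \in J$: an invocation in $V_i$ but outside $\sigma_{iE''}$ would again give $op_i \stackrel{op}{\longrightarrow} op_j$ and force $inv_j$ past $R_{k_i}$. Taken together these show that $h$ witnesses $\{(id_i, y_i, \sigma_{iE''}) : i \in J\} \in \Delta(\sigma_{E'})$, so $E$ satisfies $T_O$.

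For the reverse direction, I would apply the satisfaction hypothesis at the full prefix $E' = E$ to extract $h_E \in ISS(O)$ realising $\{(id_i, y_i, \sigma_{iE''}) : i \in J\}$ with $V_i \subseteq \sigma_{iE''}$ for every responded process. Using totality of the operation, I extend $h_E$ to $h_E^*$ by appending responses to any invocation of $h_E$ still pending, and extend $E$ to $\overline{E}$ by appending those same responses at the end in an order matching $h_E^*$. Setting $\widehat{S} := h_E^*$, we automatically get $\widehat{S}|_O \in ISS(O)$ and $comp(\overline{E})|_p = \widehat{S}|_p$ for every process; respect for $\stackrel{op}{\longrightarrow}$ follows from the same set-view inequality, since $op_i \stackrel{op}{\longrightarrow} op_j$ puts $id_j$'s invocation outside $\sigma_{iE''}$, hence outside $V_i$, hence after $R_{k_i}$ in $h_E$, while operations completed only in the extension sit at the end of both $\overline{E}$ and $h_E^*$. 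The main obstacle, which I would handle with care, is the bookkeeping for pending invocations: because $\Delta$ only forces $h_E$'s invocations to be contained in $\sigma_E$, some pending invocations of $E$ may or may not appear in $h_E$, and I must argue that a witness can always be chosen whose pending invocations are exactly the ones we promote into $\overline{E}$, so that per-process projections remain consistent.
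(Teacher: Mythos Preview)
Your construction of $\Delta$ is genuinely different from the paper's: the paper defines $\Delta(\sigma)$ by ranging over interval-linearizable \emph{executions} $E$ with $\sigma_E=\sigma$ and recording their real-time set-views (via the {\sf Sequences} function), so that the reverse direction becomes essentially tautological, whereas you go straight to witnesses $h\in ISS(O)$ and introduce the slack $V_i\subseteq\sigma_i'\subseteq\sigma$ to bridge linearization views and real-time views. The forward direction of your argument is clean and correct.

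The reverse direction, however, has a real gap, precisely at the point you flag as ``the main obstacle.'' You write that, using totality, you extend $h_E$ to $h_E^{*}$ by appending responses to any still-pending invocation of $h_E$. But totality, as defined in the paper, only guarantees that a \emph{fresh} singleton invocation receives a response; it says nothing about an invocation that is already pending in the current state. Concretely, take a two-process object with $\delta(q_0,\{inv_p,inv_q\})=\{(\{res_q(A)\},q_1)\}$, with $q_1$ a dead end for $p$, and with $\delta(q_0,\{inv_q\})$ returning $res_q(B)$ (and singleton transitions for $p$ as needed so the operation is total). For $E=inv_p,\,inv_q,\,res_q(A)$ with $p$ pending, the only witness $h_E$ giving $q$ output $A$ has $inv_p$ pending and cannot be completed; yet $h_E$ satisfies your membership condition for $\Delta(\{p,q\})$, so $E$ satisfies your $T_O$. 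On the other hand $E$ is \emph{not} interval-linearizable: any $\widehat S$ containing $q$'s response $A$ must pass through $q_1$, where $p$ can never respond, and any $\widehat S$ omitting $p$ forces $q$'s response to be $B$. So your $T_O$ is too permissive and the biconditional fails.

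The fix is to tighten the witness condition: require $h$ to be a \emph{prefix of} some $h^{*}\in ISS(O)$ with no pending invocations (equivalently, restrict to completable $h$). Your forward-direction witness $\widehat S_{\le k^{*}}$ already has this property (it is a prefix of $\widehat S$), and with it the reverse direction goes through by taking $\widehat S:=h_E^{*}$. The paper sidesteps the whole issue by baking linearizable executions into the definition of $\Delta$, which is less intrinsic to $O$ but makes the backward implication immediate and the purity of $\Delta(\sigma)$ easy (extend the defining execution using totality on fresh invocations only).
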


\begin{theorem}
\label{theo-from-ref-tasks-to-objects}
For every refined task $T$, there is an interval-sequential object $O_T$ such that 
an execution $E$ satisfies $T$ if and only if 
it is interval-linearizable with respect to $O_T$.
\end{theorem}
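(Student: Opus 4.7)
The plan is to mirror the construction used in the proof of Theorem~\ref{theo-from-tasks-to-objects}, extending it so that the state of the automaton also tracks set-views. Concretely, I would define $O_T = (Q, 2^{Inv}, 2^{Res}, \delta)$ where each state $q$ is a triple $(\sigma_q, P_q, \tau_q)$ recording: the input simplex $\sigma_q$ built from all invocations received so far, the set $P_q$ of currently pending processes, and the partial output simplex $\tau_q$ consisting of the responses already produced paired with their set-views. The initial state is $(\emptyset, \emptyset, \emptyset)$. Because $\Delta$ is closed under taking subsimplexes and $\m{I}, \m{O}$ are complexes, the monotonic growth of $\sigma_q$ and $\tau_q$ along any run is well-defined.

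For the transition relation, given $q = (\sigma_q, P_q, \tau_q)$ and an invoking concurrency class $I$ (containing no invocation by a process already in $P_q$ or already represented in $\sigma_q$), set $\sigma' = \sigma_q \cup \{(id(in), val(in)) : in \in I\}$. Then $(R, q') \in \delta(q, I)$ exactly when (i) every $r \in R$ matches a pending invocation in $P_q \cup ids(I)$, (ii) the enlarged partial output $\tau' = \tau_q \cup \{(id(r), val(r), \sigma') : r \in R\}$ belongs to $\Delta(\sigma')$, where each \emph{newly} produced response is tagged with the current input simplex $\sigma'$ as its set-view, and (iii) $q' = (\sigma', (P_q \cup ids(I)) \setminus ids(R), \tau')$. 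This is the refined-task analogue of the condition $\tau' \in \Delta(\sigma')$ used in the original proof; the only real change is carrying the set-view along with each response.

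For the ``$E$ satisfies $T$ $\Rightarrow$ interval-linearizable'' direction, I would partition the events of $E$ (after possibly extending it with responses to pending invocations consistent with $\Delta$) into alternating invoking/responding concurrency classes that respect real-time order, exactly as in the proof of Theorem~\ref{theo-from-tasks-to-objects}. The definition of satisfaction, applied to every prefix $E'$, guarantees that $\tau_{E'} \in \Delta(\sigma_{E'})$, which ensures each candidate transition of $O_T$ is legal; crucially, since satisfaction requires the set-view of each response $(id_i, y_i)$ to be $\sigma_{i\,E''}$ for the shortest prefix $E''$ containing that response, the set-view demanded at the transition where the response is emitted is precisely the current $\sigma'$ in the automaton. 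For the converse direction, an interval-sequential execution of $O_T$ linearizing $E$ walks through a sequence of states whose accumulated partial outputs lie in $\Delta$ of the accumulated inputs; reading this off at each prefix $E'$ and forgetting the set-view coordinates shows $\tau_{E'} \in \Delta(\sigma_{E'})$, i.e., $E$ satisfies $T$.

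The main obstacle I expect is the set-view bookkeeping. Unlike the plain-task case, the witness must attach to every response the exact input simplex that the process ``sees'' in $E$, and this timing must coincide with when the automaton emits the response. The subtlety is that an interval-linearization is free to place a response in any concurrency class after the invocation, so one must argue that grouping concurrent events in \emph{real-time order} yields set-views matching the shortest-prefix prescription required by refined-task satisfaction; conversely, when extracting a refined task witness from a run of $O_T$, one must check that the set-view stored in $\tau_q$ when the response was created is the one demanded by the ``shortest prefix containing the response'' condition. Handling this correctly, together with the freedom to extend $E$ by responses to pending invocations, is where the bulk of the careful verification lies.
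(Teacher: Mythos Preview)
Your overall plan matches the paper's: the paper simply states that the proof of Theorem~\ref{theo-from-ref-tasks-to-objects} is ``essentially the same'' as that of Theorem~\ref{theo-from-tasks-to-objects}, and your construction of $O_T$ is precisely the natural refined-task analogue of the automaton built there, with the state augmented to carry the partial output simplex including set-views. The forward direction (satisfaction $\Rightarrow$ interval-linearizable) works as you describe, because the {\sf sequences}-style partition of $E$ into alternating invoking/responding classes assigns to each response exactly the input simplex of the shortest prefix of $E$ containing it; this is Claim~\ref{claim-sequence}(6), and it is why the set-view recorded by your automaton at the moment the response is emitted coincides with the set-view demanded by refined-task satisfaction.

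The converse direction as you sketch it, however, does not go through. Your sentence ``reading this off at each prefix $E'$ and forgetting the set-view coordinates shows $\tau_{E'}\in\Delta(\sigma_{E'})$'' is wrong for refined tasks: satisfaction requires the \emph{specific} simplex $\{(id_i,y_i,\sigma_{i\,E''})\}$, with set-views determined by the shortest prefixes of $E$, to lie in $\Delta(\sigma_{E'})$; you cannot discard the third coordinate. The deeper problem---which you correctly flag as the main obstacle but do not resolve---is that an \emph{arbitrary} interval-linearization $\widehat S$ of $E$ may reorder concurrent operations, so the accumulated input simplex at the step where $\widehat S$ emits a response need not equal the set of invocations preceding that response in $E$. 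Concretely, if $E = inv_a,\,inv_b,\,resp_a,\,resp_b$ and $\widehat S$ linearizes $a$ entirely before $b$, then your automaton tags $resp_a$ with set-view $\{a\}$, whereas refined-task satisfaction of $E$ requires the vertex $(a,y_a,\{a,b\})$. These are distinct output vertices, and membership of one in $\Delta$ says nothing about the other. To close this gap you must either (i) argue, as in the paper's proof of Theorem~\ref{theo-from-tasks-to-objects}, via a carefully chosen prefix $\widehat S'$ of $\widehat S$ for each prefix $E'$ and show that the accumulated $\sigma_m$ in $\widehat S'$ actually equals $\sigma_{E'}$ (this is asserted in the paper's proof and is where the real-time--order hypothesis on $\widehat S$ is used), or (ii) show that the transitions you have put into $\delta$ are exactly those arising from executions satisfying $T$ via {\sf sequences}, so that any accepted run of $O_T$ already encodes the ``right'' set-views for \emph{some} execution and then argue this transfers to $E$.
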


%For a task $T$, let $E(T)$ denote the set containing all executions that satisfy $T$.
%For an  interval-sequential (resp. set-sequential, sequential) object $O$, let
% $E(O)$ denote the set containing all executions that are interval-linearizable
% (resp. set-linearizable, linearizable)
%with respect to $O$.
%Let $Tasks = \{ E(T) :  \hbox{ $T$ is a  task} \}$,
%$Interval-Sequential = \{ E(O) : \hbox{ $O$ is an interval-sequential object } \}$,
%$Set-Sequential = \{ E(O) : \hbox{ $O$ is an set-sequential object } \}$,
%$Sequential = \{ E(O) : \hbox{ $O$ is an linearizable object } \}$.
%
%
%
%We have proved:
%\begin{enumerate}
%\item $Tasks \subseteq Interval-Sequential$.
%%\item $Interval-Sequential \subseteq Refined$.
%\end{enumerate}
%%which implies that $Normal \subseteq Refined$. But this is not exactly what we want to show.
%%I am pretty sure that we can modify the proof of item (1) to show that $Refined \subseteq Interval-Sequential$,
%%and then we would have that $Refined = Interval-Sequential$, which is what we want to show.
%%We also would like to have a direct proof that $Normal \subseteq Refined$.
%%Moreover, we would like to show that the contention is strict. A direct proof should not be too hard.
%%The strict contention, maybe, could be proven using the small queue example we have been thinking on. 
%

%=========================================================================
\section{Conclusion}
\label{sec:conclusion}
We have proposed the notion of an {\it interval-sequential object}, specified
by a state machine similar to the ones used for sequentially specified
objects, except that transitions are labeled with sets of invocations
and responses, instead of operations, to represent operations
that span several consecutive transitions. Thus, in a
state an invocation might be pending. 
The corresponding consistency condition is {\it interval-linearizability}.
If an execution is interval-linearizable for an object  $X$, its
invocations and responses can be organized, respecting real-time, in a way 
that they can be executed through the automaton of $X$.
Thus, contrary to the the case of linearizability where to linearize an execution
one has to find unique linearization points, 
for interval-linearizability one needs to identify an interval of time for each operation, and the intervals might overlap.
We have shown that by going from linearizability to
interval-linearizability one does not sacrifice the properties of being 
local and non-blocking.

We have discovered that interval-sequential objects have strictly more
expressive power than tasks. Any algorithm that solves a given task
is interval-linearizable with respect to the interval-sequential object that corresponds to the task,
%This brings benefits of the static-style specification of tasks
%to the operational-style specification of objects.
however, there are one-shot objects that cannot be expresses as tasks.
We introduced the notion of \emph{refined tasks} and prove that interval-sequential objects
and refined tasks are just two different styles, equally expressive,  of specifying concurrent 
one-shot problems, the first operational, and the second static. 
This brings benefits from each style
to the other, and finally provides a common framework to think about
linearizability, set-linearizability, interval-linearizability, and tasks.

%We have discovered that interval-sequential objects and tasks are
%just two different styles, equally expressive,  of specifying concurrent 
%one-shot problems,
%the first operational, and the second static. 
%This brings benefits from each style
%to the other, and finally provides a common framework to think about
%linearizability, set-linearizability, interval-linearizability, and tasks.
%sergio:removed this, I am no longer sure it is true:
%In particular for tasks, we explain why if the task only considers
%input and output simplexes for all processes, then expressivness is lost.
%This definition of a task has been used in the past (e.g.~\cite{}).
%In that definition, $\Delta$ only specifies what happens when all processes
%decide, and the notion of satifiability states that any execution can be extended
%so that all outputs agree with $\Delta$. 

There are various directions interesting to pursue further.
In the domain of concurrent specifications, there is interest
in comparing the expressive power of several models of concurrency,
e.g.~\cite{G2006},
and as far as we know, no model similar to ours has been considered.
Higher dimensional automata~\cite{Pratt91},  the most expressive
model in~\cite{G2006}, seems related to set-linearizability.
Also, several papers explore  partial order semantics of programs.
More flexible notions of linearizability, relating  two
arbitrary sets of histories appear in~\cite{FORY2010}, but without stating
a compositionality result, and without an automata-based formalism.
 However it is worth exploring this direction further, as it establishes 
that linearizability implies
observational refinement, which usually entails compositionality 
(see, e.g.,~\cite{GMY12}). Also, it would be interesting to consider
that  in this semantics two events in a single trace can be related in three ways: definitely dependent, definitely concurrent or unrelated.
%pointed out by a DISC15 reviwer
%[Gotsman et al., DISC'12, Section 3].

Several versions of non-determinism
were explored in~\cite{CRR13}, which could be understood through
the notions in this paper.
Also, it would be interesting to consider multi-shot task versions that
correspond to interval-sequential objects, as well as the implications
of the locality property. 

As observed in~\cite{GKL12}, devising linearizable objects can be 
very difficult,
requiring complex algorithms to work correctly under general circumstances,
and often resulting in bad average-case behavior. 
Programmers thus optimize algorithms to handle common scenarios more 
efficiently.
The authors propose \emph{speculative linearizability} to
simplify the design of efficient yet robust linearizable protocols. 
It would be interesting to see if similar techniques can be used for 
interval-specifications of concurrent objects proposed here, 
and if our more generic composability proof
sheds light on the composability result of~\cite{GKL12}.

Often concurrent data structures shared  require linear 
worst case time to perform a single instance
of an operation in any non-blocking implementation~\cite{EHS12}, else,
they are not linearizable e.g.~\cite{HSW96}. Thus, concurrent specifications,
such as interval-linearizable objects open possibilities of sub-linear time
implementations.

Finally, Shavit~\cite{shavit11} summarizes beautifully the common knowledge 
state that
 ``it is infinitely easier and more intuitive for us humans to specify how 
abstract data structures behave in a sequential setting.
Thus, the standard approach to arguing the safety properties of a concurrent 
data structure is to specify the structure's properties sequentially, and find 
a way to map its concurrent executions to these `correct' sequential ones."
We hope  interval-linearizability opens the possibility of facilitating
reasoning about concurrent specifications, when no sequential specifications
are appropriate.

%=========================================================================

\newpage
\setcounter{page}{1}
\pagenumbering{roman}

%=========================================================================
\section*{Acknowledgments}
A. Casta\~neda was partially supported by a PAPIIT-UNAM research grant.
S. Rajsbaum was partially supported by a PAPIIT-UNAM, and a LAISLA 
Mexico-France research grant.
M. Raynal was partially supported by the  French  ANR  
project DISPLEXITY devoted to  computability and  complexity in distributed 
computing, and the Franco-German ANR project DISCMAT devoted to connections 
between mathematics and distributed computing.

%=========================================================================

%=========================================================================

\appendix
\section{Linearizability}
\label{sec:linearizability}

%\paragraph{Sequentially specified objects}
A \emph{sequential object} $O$ 
is a (not necessarily finite)
Mealy state machine $(Q,{Inv},{Res},\delta)$  whose output values 
are determined both by its current state $s\in Q$ and the current
input $I\in {Inv}$. 
If $O$ is in state $q$ and it receives as input an invocation $in\in Inv$
by process $p$,
then, if  $\delta(q,inv)=(r,q')$, the meaning is that $O$ may return
the  response $r$ to the invocation $inv$ by process $p$, and move to state $q'$.  
Notice that  the   response $r$ has to be to the invocation by $p$,
but there  may be several possible responses (if the object is non-deterministic).
Also, it is convenient to require that the object is \emph{total}, meaning that
for any state $q$, $\delta(q,I)\neq \emptyset$, for all $I\in  {Inv}$.

Considering any object 
defined by a sequential specification on total operations, 
linearizability~\cite{HW90} generalizes 
the notion of an atomic read/write object formalized in~\cite{L86-a,M86}, and 
encountered in virtual memory-based distributed systems~\cite{LH89}.

%------------------------------------------------------------------------
Intuitively, an execution is linearizable if it could have been produced 
by multiplexing the processes on a single processor. 
This definition considers complete histories. If the execution is partial, 
an associated complete execution can be  defined as follows. 
The local execution $\widehat{H}|i$ of  each process $p_i$ for which the  
last operation is pending (i.e., $p_i$ issued an invocation and there no 
matching response event),  is completed with a  response matching the
invocation event. Thus, it may be  possible to associate  different complete 
histories with a  given partial execution.

An execution  $E$ is linearizable if there is and extension $\overline E$ of $E$  
and a sequential execution 
$\widehat{S}$ such that:
\begin{itemize}
%\vspace{-0.2cm}
\item  $comp(\overline E)$ and $\widehat{S}$ are equivalent (no process can distinguish between 
 $comp(\overline E)$ and $\widehat{S}$). 
%\vspace{-0.2cm}
\item  $\widehat{S}$ is legal (the specification of each object is respected). 
%\vspace{-0.2cm}
\item The total order $\widehat{S}$ %=(OP,\stackrel{S}{\longrightarrow})$
respects the partial order $\OP$ associated to $comp(\overline E)$
 (any two operations ordered in $\OP$ are ordered the same way in  $\widehat{S}$). 
\end{itemize}

As shown in~\cite{HW90}, the linearizability  
consistency condition  has the ``composability'' property (called ``locality'' 
in~\cite{HW90}), which states that a computation  $E$ is linearizable 
if and only if,   for each of its objects $X$,  $E_|X$ is linearizable. 
%=========================================================================

\section{Additional details about the write-snapshot task}
%\section{Proof of Theorem~\ref{theo:write-snapshot}} 
\label{sec:proof-write-snapshot}

Recall that in the {write-snapshot task} 
the   ${\sf write}()$ and  ${\sf snapshot}()$ operations are merged to 
define a single operation denoted ${\sf write\_snapshot}()$. 
It satisfies the  self-inclusion and containment properties.
Notice that the \emph{immediate snapshot} task~\cite{BG93}
which motivated Neiger to propose set-linearizability~\cite{N94}
 is a write-snapshot  which additionally 
satisfies the following immediacy property: 
$\forall~i,j:~ [(\langle j,- \rangle set_i)\wedge
(\langle i,- \rangle set_j)] \Rightarrow  (set_i=set_j)$.

For completeness and comparison, we inlcude the following proof 
in the usual, somewhat informal style, of the correctness of the
write-snapshot algorithm.
To simplify the presentation we suppose that the value 
written by $p_i$ is $i$, and the pair $\langle i,v_i\rangle$ 
is consequently denoted $i$.

\begin{theorem}
\label{theo:write-snapshot}
The algorithm of Figure~{\em\ref{fig:algorithm-write-snapshot}}
wait-free implements write-snapshot.   
%{\em (Proof in Appendix~\ref{sec:proof-write-snapshot}.)}
\end{theorem}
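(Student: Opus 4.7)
The plan is to decompose the statement into three claims and prove each in turn: (i) every invocation of ${\sf write\_snapshot}()$ terminates in a finite number of the invoker's own steps (wait-freedom); (ii) the returned $set_i$ contains $i$ (self-inclusion); and (iii) any two returned sets are comparable under inclusion (containment). A single structural fact drives all three: each entry $\MEM[j]$ is a single-writer register that is written exactly once, at line~\ref{WS-01}, transitioning from $\bot$ to $j$ and staying there forever. Hence the collection of non-$\bot$ values present in $\MEM[1..n]$ is monotonically nondecreasing as the execution unfolds.

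For wait-freedom, I would bound the number of iterations of the \textbf{repeat} loop (lines~\ref{WS-03}--\ref{WS-05}). An iteration fails only when $new_i \neq old_i$, which by monotonicity means at least one additional non-$\bot$ entry appeared in $\MEM$ between the two collects. Since at most $n$ entries can ever become non-$\bot$, and $\MEM[i]$ is already non-$\bot$ by line~\ref{WS-01}, $p_i$ can observe new values in at most $n-1$ successive iterations; hence the loop terminates and $p_i$ reaches line~\ref{WS-06}, regardless of other processes' schedules.

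Self-inclusion is immediate: line~\ref{WS-01} precedes every collect performed by $p_i$, so $i \in new_i$ at every subsequent read, and therefore $i \in set_i$. The containment proof, which I view as the heart of the argument, uses a witness-time construction. For each terminating $p_i$, choose any moment $\tau_i$ lying between the end of the collect that produced $old_i$ and the start of the collect that produced $new_i$. Because $old_i = new_i$ and $\MEM$ only gains entries, no write takes effect during this interval, and the set of non-$\bot$ entries of $\MEM$ at time $\tau_i$ equals $set_i$. Given two terminating processes $p_i$ and $p_j$ with, say, $\tau_i \leq \tau_j$, monotonicity then yields $set_i \subseteq set_j$, which proves containment.

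The main obstacle I anticipate is making the notion of ``time'' $\tau$ precise in the asynchronous shared-memory model. I would handle this by fixing a sequential interleaving of the atomic read/write base steps consistent with the given execution and interpreting each $\tau$ as a position in this interleaving; the monotonicity property is then a one-line induction, and the interval containing $\tau_i$ is nonempty because between the completion of one full collect by $p_i$ and the beginning of the next, there is at least one well-defined cut. Once this bookkeeping is in place, the three pieces above compose to yield the theorem.
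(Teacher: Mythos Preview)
Your proposal is correct and follows essentially the same approach as the paper's proof: the same three-part decomposition (termination via the bound of at most $n-1$ failed double collects, self-inclusion from line~\ref{WS-01}, containment via a witness time $\tau_i$ between the two successful collects at which the contents of $\MEM$ equal $set_i$), with the same monotonicity argument driving everything. Your added remark about formalising ``time'' as a position in a fixed interleaving is a welcome clarification that the paper leaves implicit.
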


\begin{proofT}
Let us first show that  the invocation of  ${\sf write\_snapshot}()$  by any 
process $p_i$ terminates. As there is a bounded number of processes, and a 
process invokes  ${\sf write\_snapshot}()$ at most once, it follows that a 
process can be forced to execute at  most $(n-1)$ double collects, 
and the termination follows. 

The self-inclusion property follows immediately from line~\ref{WS-01}, 
and the fact that no value is ever withdrawn from the array $\MEM$.

To prove the containment property, let us consider two processes 
$p_i$ and $p_j$, which return $set_i$ and $set_j$, respectively. 
Let us first consider $p_i$. As it returns $set_i$, we have 
$set_i=old_i=new_i$ where $new_i$ corresponds to the last
asynchronous read of $\MEM[1..n]$ by $p_i$,  and $old_i$ corresponds to
the previous asynchronous read of $\MEM[1..n]$. 
Let $\tau[old_i]$ the time at which terminates the read  of $\MEM[1..n]$
returning $old_i$,  and  $\tau[new_i]$ the time at which starts the read  of 
$\MEM[1..n]$ returning $new_i$.
As $old_i=new_i$, it follows that there is a time $\tau_i$, 
such that  $\tau[old_i] \leq \tau_i \leq \tau[new_i]$ and, due to the 
termination predicate of line~\ref{WS-05}, the set of 
non-$\bot$ values of $\MEM[1..n]$ at time   $\tau_i$ is equal to $set_i$. 

The same applies to $p_j$, and there is consequently a time  $\tau_j$ 
at which the set of non-$\bot$ values of $\MEM[1..n]$ is equal to $set_j$. 
Finally, as (1) $\tau_i\leq \tau_j$ or  $\tau_i>  \tau_j$, 
and (b)  values written in $\MEM[1..n]$ are never withdrawn, it follows 
that we necessarily have $set_i\subseteq set_j$ or $set_j\subseteq set_i$.
\renewcommand{\toto}{theo:write-snapshot}
\end{proofT}

\paragraph{A finite state automaton describing the behavior of a write-snapshot object}
The non-deterministic automaton of Figure~\ref{fig:write-snapshot-automaton} 
describes in an abbreviated form
all the possible  behaviors of a write-snapshot object in a system 
of three processes $p$, $q$, and  $r$. To simplify
the figure, it is assumed that a process $p_i$ proposes $i$.
Each edge correspond to an invocation of  ${\sf write\_snapshot}()$, and  
the list of integers $L$ labeling a transition edge means that the 
corresponding invocation of  ${\sf write\_snapshot}()$ is by one of 
the processes $p_i$ such that $i\in L$.  
The value returned by the object is $\{ L\}$.
Thus, for the linearization of the execution in 
Figure~\ref{fig:simple-counterEx},
the path in the automaton goes through states 
$\emptyset,\{1,2\},\{1,2\},\{1,2,3\}$.

Any path starting from the initial empty state, and in which a process index 
appears at most once, defines an execution of the write-snapshot task
that does not predict the future.
Moreover if,  when it executes, a process proceeds from the automaton state
$s_1$ to the state $s_2$, the state $s_2$ defines the tuple of values
output by its invocation of ${\sf write\_snapshot}()$.

%%%
\begin{figure*}[th]
\centering{
\hspace{-0.2cm}
\scalebox{0.35}{\input{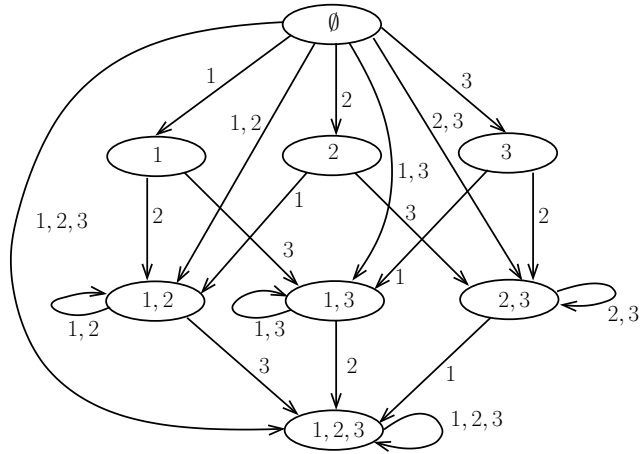}}
\caption{A non-deterministic automaton for a write-snapshot object}
\label{fig:write-snapshot-automaton}
}
\end{figure*}
%%%

\section{Additional discussion and examples of linearizability limitations}
\label{app:addDiscExam}

\subsection{Additional examples of tasks with no sequential specification}
\label{app:addExampNoSeq}
Several tasks have been identified 
that are problematic for dealing with
them through linearizability. The problem is that they
 do not have a natural sequential specification.
One may consider
linearizable implementations of restricted sequential specifications,
where if two operations occur concurrently, one is linearized before
the other. Thus, in every execution, always there is a first operation.  
In all cases we discuss below, such an implementation would provably
be of a more powerful object.
%(immediate snapshots  can be implemented
%wait-free using only read/write registers), that could simulate test-and-set.
%Below we comment some of them.

An \emph{adopt-commit} object~\cite{G98} is a one-shot shared-memory object
useful  to implement round-based protocols for set-agreement and consensus.
It supports a single operation, ${\sf adopt\_commit}()$.
 The  result  of  this  operation  is  an  output  of  the form  $(commit,v)$ 
or 
 $(adopt,v)$, where  the  second  component  is  a  value  from  this
set and the 1st component  indicates whether the process
should decide value $v$ immediately or adopt it as its preferred value 
in later rounds
of the protocol.  It has been shown to be equivalent to the 
\emph{conflict detection} object~\cite{AE14}, which  supports  a  single  
operation, ${\sf check}()$. 
It returns true or false, and has the following two properties:
In any execution that contains a ${\sf check}(v)$ operation and a 
${\sf check}(v')$ operation with $v\neq v'$, at least one of these 
operations returns true. In any execution in which all check
operations have the same input value, they all return false.
As observed in~\cite{AE14}  neither  adopt-commit  objects  nor  
conflict  detectors have  sequential specification.
A deterministic linearizable implementation of an adopt-commit 
object gives rise to a deterministic implementation of consensus, 
which does not exist. Similarly, the first check operation linearized 
in any execution of a conflict detector must return false and 
subsequent check operations  with  different  inputs  must  return true,  
which  can  be  used  to  implement  test-and-set,  for  which  no 
 deterministic implementation from registers exists.

In the \emph{safe-consensus}  problem of~\cite{AGL09}, the agreement condition
of consensus is retained, but the validity condition  is weakened as follows: 
if the first process to invoke it returns before any other process invokes it, 
then it outputs its input; otherwise the consensus output can be arbitrary,
 not even the input of any process.
There is no sequential specification of this problem, because in any 
sequential specification, the first process to be linearized would 
obtain its own proposed value. See Appendix~\ref{app:safe-cons-and-abort}.

Two examples that motivated Neiger are the following~\cite{N94}.
In the \emph{immediate snapshot} task~\cite{BG93},  there is a single
operation ${\sf Immediate\_snapshot}()$, such that
 a snapshot occurs immediately after a read. Such executions
play an important role in distributed computability~\cite{AR02,BG93,SZ00}.
There is no sequential specification of this task. One may consider
linearizable implementations of restricted immediate snapshot behavior,
where if two operations occur concurrently, one is linearized before
the other, and where the first operation does not return the value by the 
second. But such an implementation would provably
be of a more powerful object (immediate snapshots  can be implemented
wait-free using only read/write registers), that could simulate test-and-set.

The other prominent example  exhibited  in~\cite{N94} is the 
\emph{$k$-set agreement} task~\cite{C93},
where processes agree on at most $k$ of their input values. 
Any linearizable
implementation restricts the behavior of the specification, because some
process final value would have to be its own input value. This would be
an artifact imposed by linearizability. Moreover, there are implementations
of set agreement with executions where no process chooses its own initial value.

%-----------------------------------------------------------------------
\subsection{Splitting operations to model concurrency}
\label{app:splitting}
One is tempted to separate an operation into two, an invocation and a response,
to specify the effect of concurrent invocations. 
Consider two operations of an object,
${\sf op}_1()$ and ${\sf op}_2()$, such that each one is invoked with a 
parameter and can return a value. 
Suppose we want to specify how the object behaves
when both are invoked concurrently. We can separate each one into
two operations, ${\sf inv\_op}_i()$ and ${\sf resp\_op}_i()$. When 
a process wants to invoke ${\sf op}_i(x)$, instead it first
invokes ${\sf inv\_op}_i(x)$, and once the operation terminates, it invokes
${\sf resp\_op}_i()$, to get back the output parameter.  Then 
a sequential specification can define what the operation returns
when the history is ${\sf inv\_op}_1(x_1),{\sf inv\_op}_2(x_2),
{\sf resp\_op}_1(),{\sf resp\_op}_2()$.

$k$-Set agreement is easily transformed into an object with a sequential
specification, simply by accessing it through two different operations, 
one that deposits a value into the object and another that returns one
of the values in the object. Using a non-deterministic specification that
remembers which values the object has received so far, and which
ones have so far been returned, one captures the behavior that at most 
$k$ values are returned, and any of the proposed values can be returned.
This trick can be used in any task.
%For instance, in the safe-consensus problem, the invocation proposes a
%value, and the get-value receives a consensus value.

Separating an operation into a proposal operation and a returning operation
 has several problems. First, the program is forced to 
produce two operations, and wait for two responses. There is a 
consequent loss of clarity in the code of the program, in addition to a loss 
in performance, incurred by  a two-round trip delay. Also, the intended meaning
of linearization points is lost; an operation is now linearized at \emph{two}
linearization points.
Furthermore, the resulting object may  provably \emph{not} be the same.
A phenomenon that has been observed several times 
(see, e.g., in~\cite{CR14,GR10,RRT08}) 
is that the power of the object can be increased, if one is allowed to 
invoke another object in between the two operations. 
Consider a test-and-set object that returns either 0 or 1, and
the write-snapshot object. It is possible to solve consensus
among 2 processes with \emph{only one} snapshot object and 
one test-and-set object only if 
it  is allowed to invoke test-and-set
in between the write and the snapshot operation.
%[[[need to double check]]].
%XXXXXXXXXXXXXXXXXXXXXXXXXXXXXXX
%[[[Armando: this is a  problem for iterated models but maybe we do not 
%what to say  the ``iterated models'', so we need a nice way to explain 
%this avoiding ``iterated models''??]]]
Similarly, consider a safe-consensus object instead of the test-and-set object.
 If one is allowed to invoke in between the two operations of  
write-snapshot a safe-consensus object, then one can solve consensus
more efficiently~\cite{CR14}.

\paragraph{The object corresponding to a  task with two operations}
Let $T$ be a task $(\m{I},\m{O},\Delta)$.
 We will model $T$ as a sequential object $O_T$ in which each process can invoke two 
 operations, {\sf set} and {\sf get}, in that order. The idea is that {\sf set} communicates to $O_T$
 the input value of a process, while {\sf get} produces an output value to a process.
 Thus, the unique operation of $T$ is modelled with two operations.
 The resulting sequential object is non-deterministic.
 
 We define $O_T$.
 The set of invocations and responses  are the following:
 $$Inv(O_T) = \{ {\sf set}(p_i, in_i) \, | \, (p_i, in_i) \in {\cal I} \} \cup \{ {\sf get}(p_i) \, | \, p_i \in \Pi \}$$
$$Res(O_T) = \{ {\sf set}(p_i, in_i):OK \, | \, (p_i, in_i) \in {\cal I} \} \cup \{ {\sf get}(p_i): out_i \, | \, p_i \in \Pi \, \wedge \, (p_i, out_i) \in {\cal O}\}$$
 The set of states of $O_T$ is $Q = \{ (\sigma, \tau) | \sigma \in {\cal I} \, \wedge \, \tau \in \Delta(\sigma) \}$.
 Intuitively, a set $(\sigma, \tau)$ represents that the inputs and output $O_T$ knows at that state are $\sigma$ and $\tau$. 
 The initial state of is $(\emptyset, \emptyset)$. We define $\delta$ as follows.
 Let $(\sigma, \tau)$ and $(\sigma', \tau')$ be two states of $O_T$.
 Then,
 
 \begin{itemize}
 
 \item If $\tau = \tau'$, $\sigma \neq \sigma'$ and $\sigma' = \{\sigma \cup  (p_i, in_i) \}\in\m{I}$,
 then $\delta((\sigma, \tau), {\sf set}(p_i, in_i))$ contains the tuple 
 $({\sf set}(p_i, in_i):OK, (\sigma', \tau'))$.
 
 \item If $\sigma = \sigma'$, $\tau \neq \tau'$ and $\tau' =\{ \tau \cup  (p_i, out_i) \}\in\Delta(\sigma)$,
 then $\delta((\sigma, \tau), {\sf get}(p_i))$ contains the tuple 
 $({\sf get}(p_i):out_i, (\sigma', \tau'))$.
 \end{itemize}
 
Note that for every sequential execution $\widehat S$ of $O_T$,
 it holds that $\tau_{\widehat S} \in \Delta(\sigma_{\widehat S})$,
 where $\sigma_{\widehat S}$ is the input simplex containing every
 input vertex in $\widehat S$ and, similarly, $\tau_{\widehat S}$
 is the output simplex containing every output simplex in~$\widehat S$.
% \begin{lemma}[Splitting operations] 
% formal statement o
%\end{lemma}

%
%Hagit:
%[kosa 149]
%28,112,113, weak
%

\subsection{Validity and Safe-consensus objects}
\label{app:safe-cons-and-abort}
We first discuss the validity object with abort, and then the safe-consensus object.

\subsubsection{Validity with abort object}
\label{app:val-abort}

An interval-sequential object can be enriched with an abort
operation that takes effect only if a given number of processes
request an abort concurrently. Here we
describe the example of  Section~\ref{sec:valObj} in more detail,
that extends the validity object with an ${\sf abort}$ operation that should
be invoked concurrently by at least $k$ processes.
As soon as at least $k$ processes concurrently invoke ${\sf abort}$
the object will return from then on ${\sf aborted}$ to every operation.
Whenever less than  $k$ processes are concurrently invoking ${\sf abort}$,
the object may return ${\sf NotAborted}$ to any pending ${\sf abort}$.
An example appeared in Figure~\ref{fig:fig-validityAbort}, for $k=2$.
Another example is
in Figure~\ref{fig:fig-validityAbort1}, where it is shown that even though
there are two concurrent ${\sf abort}$  operations, they do not take effect
because they are not observed concurrently by the object.
This illustrates why this paper is only about safety properties, 
the concepts here cannot enforce liveness. There is no way of guaranteeing
that the object will abort even in an execution where all processes issue
 ${\sf abort}$ at the same time, because the operations may be 
 executed sequentially.
\begin{figure*}
\begin{center}
	\epsfxsize=4.0in
%	\epsfbox[0 0 850 850]{fig-intAutomata.eps}%
	\epsfbox{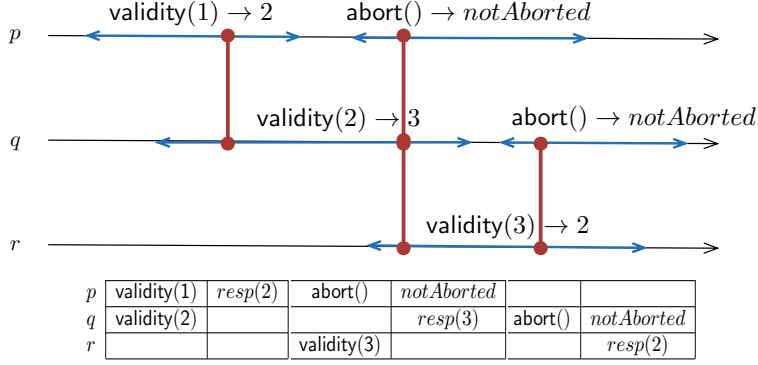}%
    \end{center}
\caption{An execution of a Validity-Abort object (3)}
\label{fig:fig-validityAbort1}
\end{figure*}

The $k$-\emph{validity-abort} object is formally specified as an interval-sequential object
by an automaton, that can be invoked by either  ${\sf propose}(v)$ 
or ${\sf abort}$, and it responds with either  ${\sf resp}(v)$
 or ${\sf aborted}$ or ${\sf NotAborted}$.
Each state $q$ is labeled with three values: $q.vals$ is the set
of values that have been proposed so far,  $q.pend$ is the set of
processes with pending invocations,
and $q.aborts$ is the set of processes with pending ${\sf abort}$. 
The initial state $q_0$
has $q_0.vals=\emptyset$, $q_0.pend=\emptyset$ and $q_0.aborts=\emptyset$.
If $in$ is an invocation to the object different from ${\sf abort}$, 
let $val(in)$ be the proposed value,
and if $r$ is a  response from the object, let $val(r)$ be the responded value.

For a set of invocations $I$ (resp. responses $R$) $vals(I)$ denotes
the proposed values in $I$ (resp. $vals(R)$). Also, $aborts(I)$
denotes the set of processes issuing an ${\sf reqAbort}$ in $I$,
and $notAborted(R)$ is the set of processes getting ${\sf notAborted}$ in $R$.

The transition relation  $\delta(q,I)$ contains all pairs $(R,q')$ such that:
\begin{enumerate}
%\vspace{-0.2cm}
\item 
If $r\in R$ then $id(r)\in q.pend$ or there is an $in\in I$ with $id(in)=id(r)$,
%\vspace{-0.2cm}
\item 
If $(r={\sf resp}(v)\in R$  or ${\sf notAborted}\in R)$ then ${\sf aborted}\not\in R$, 
\item
If $r={\sf resp}(v)\in R$ then
$val(r)=v\in q.vals$ or there is an $in\in I$ with 
$val(in)=val(r)$, 
 \item
 If ${\sf notAborted}\in R$, then
 $0< |q.aborts|+|aborts(I)| < k$
%\vspace{-0.2cm}
\item 
If $|q.aborts|+|aborts(I)| \geq k$ then
 ${\sf aborted}\in R$.
\item 
$q'.vals= q.val\cup vals(I)$,
 $q'.pend= (q.pend \cup ids(I))\setminus ids(R)$, and\\
 $q.aborts = (q.aborts\cup aborts(I)) \setminus notAborted(R) $
\end{enumerate}

\subsubsection{Safe-consensus}
\label{app:safe-cons}
Recall that  the \emph{safe-consensus}  problem of~\cite{AGL09}, 
is similar to consensus. The agreement condition
of consensus is retained, but the validity condition  is weakened as follows: 
if the first process to invoke it returns before any other process invokes it, 
then it outputs its input; otherwise the consensus output can be arbitrary,
 not even the input of any process.
As noticed in Section~\ref{app:addExampNoSeq}, 
there is no sequential specification of this problem.

See Figure~\ref{fig:fig-intAutomata} for part of the automata
corresponding to safe-consensus,
and examples of interval executions in Figure~\ref{fig:interval-execution}.

\vspace{0.4cm}
\begin{figure*}
\begin{center}
	\epsfxsize=5.5in
%	\epsfbox[0 0 850 850]{fig-intAutomata.eps}%
	\epsfbox{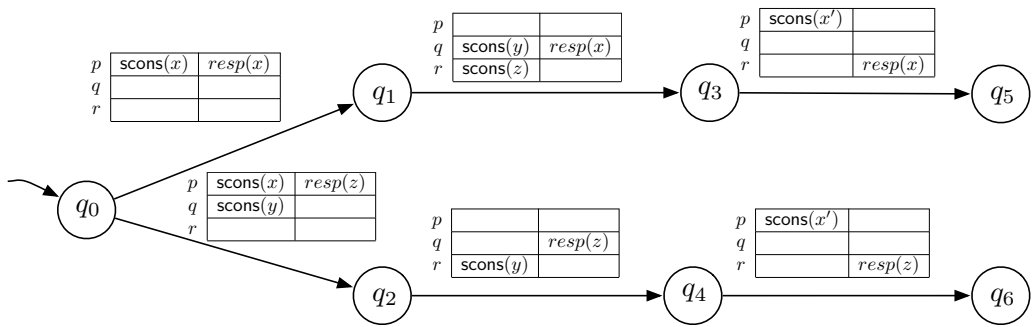}%
    \end{center}
\vspace{-1cm}
\caption{Part of an interval-sequential automaton of safe-consensus}
\label{fig:fig-intAutomata}
\end{figure*}

\vspace{1.5cm}
\begin{figure*}[th]
\centering{
\hspace{0.0cm}
Interval execution $\alpha_1$\\
$\begin{array}{c|c|c|c|c|c|c|}
 \cline{2-7}
   & {init} & {term} & {init} & {term} & {init} & {term}  \\
 \cline{2-7}
 p &  {\sf scons}(x)	&   resp(x)	&  {\,}	& {\,}		& {\,} {\sf scons}(x')	& {\,}  \\
 \cline{2-7}
q &  {\,}    		&   {\,}	& {\sf scons}(y) & {\,} resp(x) & {\,}		& {\,}   \\
 \cline{2-7}
r &   {\,} 		&   {\,}	&  {\sf scons}(z) & {\,}	 & {\,}		& resp(x)	  \\
 \cline{2-7}
\end{array}
$\\
$\,$\\
$\,$\\
Interval execution $\alpha_2$\\
$\begin{array}{c|c|c|c|c|c|c|}
 \cline{2-7}
   & {init} & {term} & {init} & {term} & {init} & {term}  \\
 \cline{2-7}
 p &  {\sf scons}(x)	&   resp(z)	&  {\,}	& {\,}		& {\,} {\sf scons}(x')	& {\,}  \\
 \cline{2-7}
q &  {\sf scons}(y)  		&   {\,}	& {\,}   & {\,} resp(z) & {\,}		& {\,}   \\
 \cline{2-7}
r &   {\,} 		&   {\,}	&  {\sf scons}(z) & {\,}	 & {\,}		& resp(z)	  \\
 \cline{2-7}
\end{array}
$
\caption{Examples of interval-executions for safe-consensus}
\label{fig:interval-execution}
}
\end{figure*}

\section{Tasks}
\label{app:tasks}

\subsection{Basic definitions}
A  task is the basic distributed equivalent of a function,
defined by a set of inputs to the processes and for each (distributed) 
input to the processes, a set of legal (distributed) outputs of the processes, e.g.,~\cite{HKRbook}.
In an algorithm designed to solve a task, each process starts with a private input
value and has to eventually decide irrevocably on an output value. 
A process $p_i$ is initially not aware of the inputs  of other processes.
Consider an execution where only a subset of $k$ processes
participate; the others crash without taking any steps. 
%These processes have distinct identities $\{\id_1,\dots,\id_k\}$. 
A  set of pairs
$s=\{(\id_1,x_1),\dots,(\id_k,x_k)\}$ is used to denote the input values,
or output values, in the execution, where  $x_i$ 
denotes the value of the process with identity $\id_i$, either an input
value, or a output value. 

A set $s$ as above is called a \emph{simplex}, and
if the values are input values, it is an \emph{input simplex},
if they are output values, it is an \emph{output simplex}.
The elements of $s$ are called \emph{vertices}. An \emph{input vertex} $v=(\id_i,x_i)$
represents the initial state of process $\id_i$, while an \emph{output vertex}
represents its decision.
The \emph{dimension} of a simplex $s$ is $|s|-1$,
and it is  \emph{full} if it contains $n$ vertices, one for each process.
A subset of a simplex is called a \emph{face}.
 Since any number of processes may crash, simplexes of all dimensions
 are of interest, for taking into account executions
where only processes in the simplex participate. Therefore, the set of 
possible input simplexes forms a \emph{complex} because its sets are
closed under containment. Similarly, the set of possible output simplexes
also form a complex.

More generally, a \emph{complex}
$\m{K}$ is a set of vertices $V(\m{K})$, and a family of finite,
nonempty subsets of $V(\m{K})$, called \emph{simplexes}, satisfying:
(1) if $v\in V(\m{K})$ then $\{v\}$ is a simplex, and (2) if $s$ is a
simplex, so is every nonempty subset of $s$. 
The dimension of $\m{K}$ is the largest
dimension of its simplexes, and $\m{K}$ is \emph{pure} of dimension
$k$ if every simplex belongs to a $k$-dimensional simplex. In
distributed computing, the  simplexes (and complexes) are often
\emph{chromatic}, since each vertex $v$ of a simplex is  labeled with
a distinct process identity.  

\begin{definition}[Task]
A \emph{task} $T$ for $n$ processes is a triple
$(\m{I},\m{O},\Delta)$ where $\m{I}$ and $\m{O}$ are pure chromatic $(n-1)$-dimensional complexes, and
 $\Delta$  maps each simplex $s$ from $\m{I}$ to a subcomplex $\Delta(s)$ of
 $\m{O}$, satisfying:
\begin{enumerate}
\item $\Delta(s)$ is pure of dimension $s$
\item For every $t$ in $\Delta(s)$ of dimension $s$, $\ID(t) = \ID(s)$
\item If $s,s'$ are two simplexes in $\m{I}$ with $s'\subset s$ then $\Delta(s')\subset\Delta(s)$.
\end{enumerate}
\end{definition}
We say that $\Delta$ is a \emph{carrier map} from the input complex $\m{I}$
to the output complex $\m{O}$.

A task is a very compact way of specifying a distributed problem,
and indeed it is hard to understand what exactly is the problem being specified.
Intuitively, $\Delta$ specifies, for every
simplex $s\in\m{I}$, the valid outputs $\Delta(s)$ for the  processes in $\ID(s)$
assuming they run to completion, and the other processes crash initially,
and do not take any steps.
%We assume that $\Delta$ is (sequentially) computable.

The immediate snapshot task is depicted in Figure~\ref{fig:fig-is-task}.
On the left, the input simplex is depicted and, on the right,
the output complex appears.

\begin{figure*}
\begin{center}
	\epsfxsize=5.0in
%	\epsfbox[0 0 850 850]{fig-intAutomata.eps}%
	\epsfbox{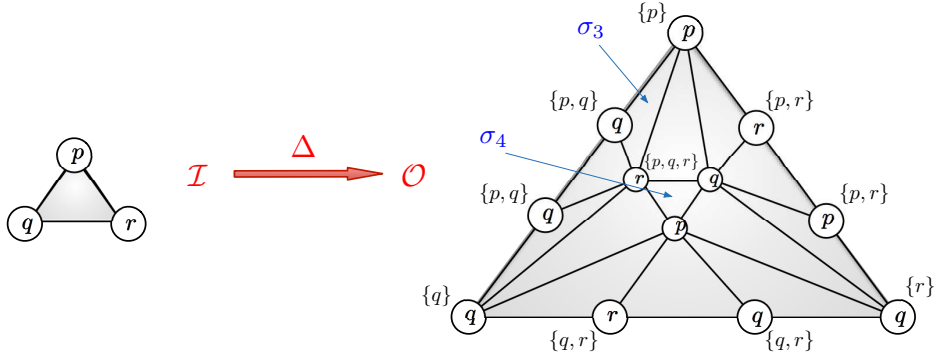}%
    \end{center}
\caption{Immediate snapshot task}
\label{fig:fig-is-task}
\end{figure*}

In figure~\ref{fig:fig-ws-task} one simplex $s$ is added to the output complex
of the immediate snapshot task of Figure~\ref{fig:fig-is-task},
where $s=\{(p,\{p,q\}),(q,\{p,q,r\}),(r,\{p,q,r\}) \}$.
This simplex $s$ corresponds to the execution of Figure~\ref{fig:simple-counterEx2}.
\begin{figure*}
\begin{center}
	\epsfxsize=2.7in
%	\epsfbox[0 0 850 850]{fig-intAutomata.eps}%
	\epsfbox{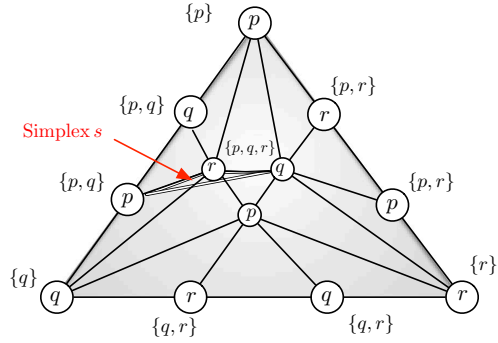}%
    \end{center}
\caption{Part of the write-snapshot output complex}
\label{fig:fig-ws-task}
\end{figure*}

%----------------------------------------------------
%\subsection{Notation} 
%----------------------------------------------------
\subsection{Validity as a task}
\label{app:validity}
Recall the validity  object is specified as an interval-sequential object
in Section~\ref{sec:valObj},
which is neither linearizable nor set-linearizable.
In the usual, informal style of specifying a task, the
definition would be very simple:
 an operation returns a value that has been proposed. 
 A bit more formally, in an execution where a set of
 processes participate with inputs $I$ (each $x\in I$ is
 proposed by at least one process), each participating
 process decides a value in $I$.
To illustrate why this informal style can be misleading, consider
the execution in Figure~\ref{fig-validityMislead}, 
where the three processes propose values $I=\{1,2,3\}$,
so according to the informal description it should
be ok that they decide values  $\{1,2,3\}$. However,
 for the detailed interleaving of the figure, it is not
 possible that $p$ and $q$ would have produced outputs
 that they have not yet seen.

To define validity formally as a task, the following notation
will be useful. It defines a complex that represents
all possible assignments of  (not necessarily distinct) values from
a set $U$ to the processes. In particular, all processes can
get the same value $x$, for any $x\in U$.
Given any finite
set $U$ and any integer $n\geq 
1$, we denote by $\complex(U,n)$ the $(n-1)$-dimensional \emph{pseudosphere}~
\cite{HKRbook}
%\cite{HerlihyRT98} 
complex induced by $U$: for each  $i\in [n]$ and each $x \in U$, there is a vertex
labeled $(i,x)$ in the vertex set of $\complex(U,n)$. Moreover, $u =
\{(\id_1,u_1),\ldots,(\id_k,u_k)\}$ is a simplex of $\complex(U,n)$
if and only if $u$ is properly colored with identities, that is $\id_i \neq
\id_j$ for every $1 \leq i < j \leq k$. In particular,
$\complex(\{0,1\},n)$ is (topologically equivalent) to the $(n-1)$-dimensional sphere.  For $u\in\complex(U,n)$, we denote by $\val(u)$ 
%$\val(u)\in \mathbb{N}^U$ 
the set formed of all the values in $U$ corresponding
to the processes in $u$. Similarly, for any set of processes $P$,
 $\complex(U,P)$ is the $|P-1|$-dimensional {pseudosphere}
 where each vertex is labeled with a process in $P$,
 and gets a value from $U$.

The \emph{validity task} over a set of values $U$ that
can be proposed, is $(\m{I},\m{O},\Delta)$, 
where $\m{I}= \m{O}=\complex(U,n)$.
The carrier map $\Delta$ is defined as follows.
For each simplex $s\in\m{I}$, $\Delta(s)=\complex(U',P')$,
where $P'$ is the set of processes appearing in $s$
and $U'$ is their proposed values.

\section{Proofs}
\label{app:completeness}

\begin{claim-repeat}{claim-acyclic}
The relation $\stackrel{S}{\longrightarrow}$ is acyclic.
\end{claim-repeat}

\begin{proofCl}
For the sake of contradiction, suppose that $\stackrel{S}{\longrightarrow}$ 
is not acyclic, namely, there is a cycle 
$C = S_1 \stackrel{S}{\longrightarrow} S_2 \stackrel{S}{\longrightarrow} \hdots
\stackrel{S}{\longrightarrow} S_{m-1} \stackrel{S}{\longrightarrow} S_m$ ,
 with $S_1 = S_m$.
We will show that the existence of $C$ implies that 
$\stackrel{S_X}{\longrightarrow}$ is
not acyclic, for some object $X$, which is a contradiction to our 
initial assumptions.

First note that it cannot be that each $S_i$ is a concurrency class 
of the same object
$X$, because if so then $C$ is a cycle of $\stackrel{S_X}{\longrightarrow}$, 
which contradicts that  $\stackrel{S_X}{\longrightarrow}$ is a total order.
Thus, in $C$ there are concurrency classes of several objects.

In what follows, by slight abuse of notation, we will write
$S' \stackrel{op}{\longrightarrow} S''$ if $S' $ and $S''$ are related
in $\widehat S$ because of the second case in the definition of 
$\stackrel{S}{\longrightarrow}$.

Note that in $C$ there is no sequence 
$S_1 \stackrel{op}{\longrightarrow} S_2 \stackrel{op}{\longrightarrow} S_3$
because in $\widehat S$ whenever $T' \stackrel{op}{\longrightarrow} T''$, 
we have that
$T'$ is a responding class and $T''$ is an invoking class, by definition.
Thus, in $C$ there must be a sequence 
$S_1 \stackrel{op}{\longrightarrow} S_2 \stackrel{S_X}{\longrightarrow} 
\hdots \stackrel{S_X}{\longrightarrow} S_t 
\stackrel{op}{\longrightarrow} S_{t+1} 
\stackrel{S_Y}{\longrightarrow} S_{t+2}$.
Observe that in $\widehat S$, we have $S_2 \stackrel{S_X}{\longrightarrow} S_t$
since $\stackrel{S_X}{\longrightarrow}$ is transitive, hence
the sequence can be shortened: 
$S_1 \stackrel{op}{\longrightarrow} S_2 \stackrel{S_X}{\longrightarrow} S_t 
\stackrel{op}{\longrightarrow} S_{t+1} \stackrel{S_Y}{\longrightarrow} S_{t+2}$.
Note that $S_1$ and $S_t$ are responding classes 
while $S_2$ and $S_{t+1}$ are invoking classes.

Now, since $S_1 \stackrel{op}{\longrightarrow} S_2$, there are operations 
$a, b \in OP$
such that $a \stackrel{op}{\longrightarrow} b$, $a \in S_1$ and $b \in S_2$.
Similarly, for $S_t \stackrel{op}{\longrightarrow} S_{t+1}$, there are
 $c, d \in OP$ such that $c \stackrel{op}{\longrightarrow} d$, $c \in S_t$ 
and $d \in S_{t+1}$.
This implies that $term(a) < init(b)$ and $term(c) < init(d)$.
Observe that if we show $a \stackrel{op}{\longrightarrow} d$ then,
by definition of $\widehat S$, we have
$S_1 \stackrel{op}{\longrightarrow} S_{t+1}$, if $S_1$ and $S_{t+1}$ are 
concurrent classes
of distinct objects, and $S_1 \stackrel{S_Y}{\longrightarrow} S_{t+1}$
otherwise. 
Hence, the sequence 
$S_1 \stackrel{op}{\longrightarrow} S_2 \stackrel{S_X}{\longrightarrow} S_t 
\stackrel{op}{\longrightarrow} S_{t+1} \stackrel{S_Y}{\longrightarrow} S_{t+2}$
can be shortened to 
$S_1 \stackrel{op}{\longrightarrow} S_{t+1} 
\stackrel{S_Y}{\longrightarrow} S_{t+2}$ or 
$S_1 \stackrel{S_Y}{\longrightarrow} S_{t+1} 
\stackrel{S_Y}{\longrightarrow} S_{t+2}$.
Repeating this enough times, in the end 
we get that there are concurrency classes $S_i, S_j$ in $C$ such that 
$S_i \stackrel{S_X}{\longrightarrow} S_j \stackrel{S_X}{\longrightarrow} S_i$,
which is a contradiction since $\stackrel{S_X}{\longrightarrow}$
 is acyclic, by hypothesis.

To complete the proof of the claim, we need to show that 
$a \stackrel{op}{\longrightarrow} d$, i.e., $term(a) < init(d)$.
We have four cases:
\begin{enumerate}
\vspace{-0.2cm}
\item if $b = c$, then $term(a) < init(b) < term(b) = term(c) < init(d)$, hence
$a \stackrel{op}{\longrightarrow} d$.
\vspace{-0.2cm}
\item If $b \stackrel{op}{\longrightarrow} c$, then 
$term(a) < init(b) < term(b) < init(c) < term(c) < init(d)$, hence
$a \stackrel{op}{\longrightarrow} d$.
\vspace{-0.2cm}
\item If $c \stackrel{op}{\longrightarrow} b$, then we have that 
$S_t \stackrel{S_X}{\longrightarrow} S_2$, 
because each $\widehat S|_X = (S_X, \stackrel{S_X}{\longrightarrow} )$
respects the real time order in $comp(\overline E|_X)$, by hypothesis.
But we also have that $S_2 \stackrel{S_X}{\longrightarrow} S_t$,
which contradicts that $\stackrel{S_X}{\longrightarrow}$ is a total order.
Thus this case cannot happen.
\vspace{-0.2cm}
\item If $b$ and $c$ are concurrent, i.e.,  $b \stackrel{op}{\nrightarrow} c$
and $c \stackrel{op}{\nrightarrow} b$, then
note that if $init(d) \leq term(a)$,
then $term(c) < init(d) \leq term(a) < init(b)$,
which implies that $c \stackrel{op}{\longrightarrow} b$ and hence $b$ 
and $c$ are not concurrent,
fro which follows that $term(a) < init(d)$.
\end{enumerate}
\vspace{-0.6cm}
\renewcommand{\toto}{claim-acyclic}
\end{proofCl}

\begin{theorem-repeat}{theo-non-blocking}
Let $E$ be an interval-linearizable execution in which there is a pending invocation $inv(op)$ 
of a total operation. Then, there is a response $res(op)$ such that $E \cdot res(op)$ is interval-linearizable.
\end{theorem-repeat}

\begin{proofT}
Since $E$ is interval-linearizable, there is an interval-linearization $\widehat S \in ISS(X)$ of it.
If $inv(op)$ appears in $\widehat S$, we are done, because $\widehat S$ contains only 
completed operations and actually  it is an interval-linearization of $E \cdot res(op)$,
where $res(op)$ is the response to $inv(op)$ in $\widehat S$.

Otherwise, since the operation is total, there is a responding concurrency class $S'$
such that $\widehat S \, \cdot \, \{inv(op)\} \, \cdot \, S' \in ISS(X)$,
which is an interval-linearization of $E \cdot res(op)$,
where $res(op)$ is the response in $S'$ matching $inv(op)$.
\renewcommand{\toto}{theo-non-blocking}
\end{proofT}

%========================================================================
\begin{figure}[ht]
\centering{
\fbox{
\begin{minipage}[t]{150mm}
\footnotesize
\renewcommand{\baselinestretch}{2.5}
\resetline
\begin{tabbing}
aa\=aaaa\=aaaa\=aaaa\=aaaa\=aaaa\=aaaa\=aaaa\=\kill %~\\

{\bf function} ${\sf sequences}$ ($E$) {\bf is} \\

\> $i \leftarrow 1$;
$e \leftarrow$ first event in $E$;
$F \leftarrow$ empty execution\\

\> $\sigma_0, \tau_0 \leftarrow \emptyset$;
$A \leftarrow \sigma_0$; $B \leftarrow \tau_0$;\\

\> \textbf{while} $F \neq E$ \textbf{do}\\

\> \> \textbf{if} \= ($e$ is an invocation) \\

\> \> \> \textbf{then} \=
 \textbf{if}  \=  ($\tau_i \setminus \tau_{i-1} = \emptyset$)\\

\> \> \>  \> \> 
 \textbf{then} \=  $\sigma_i \leftarrow \sigma_i \cup \{ e \}$\\

\> \> \>  \> \>   \textbf{else} \>
 $\sigma_{i+1} \leftarrow \sigma_i \cup \{ e \}$;
         $\tau_{i+1} \leftarrow \tau_i$;\\

\> \> \> \> \> \>   
 $A \leftarrow A \cdot \sigma_i$; $B \leftarrow B \cdot \tau_i$;
          $i \leftarrow i+1$\\

\> \> \> \> \textbf{end if}\\

\>\>\> \textbf{else} \>  $\tau_i \leftarrow \tau_i \cup \{ e \}$\\

\>\> \textbf{end if};\\

\> \>$F \leftarrow F \cdot e$; $e \leftarrow $ next event to $e$ in $E$\\

\> \textbf{end while};\\

\> $A \leftarrow A \cdot \sigma_i$; $B \leftarrow B \cdot \tau_i$;

 $\textsf{return}$ $(A, B)$.

\end{tabbing}
\normalsize
\end{minipage}
}
\caption{Producing a sequence of faces of the invocation simplex $\sigma$ and 
response simplex $\tau$ of an execution $E$.} 
\label{fig:sequence-faces}
}
\end{figure}
%=======================================================================

For the rest of the section we will often use the following notation.
Let $E$ be an execution. Then, $\sigma_E$ and $\tau_E$ denote
the sets containing all invocations and responses of $E$, respectively.

\begin{claim}
\label{claim-sequence}
For every execution $E$, the function ${\sf sequences}()$ in 
Figure~{\em \ref{fig:sequence-faces}} produces sequences 
$A = \sigma_0, \hdots, \sigma_k$ and $B = \tau_0, \hdots, \tau_m$ such that
\begin{enumerate}
\vspace{-0.2cm}
\item $k = m$.
\vspace{-0.2cm}
\item
$\sigma_0 = \emptyset \subset \sigma_1 \subset \hdots \subset \sigma_{m-1} \subset \sigma_m = \sigma_E$
and
$\tau_0 = \emptyset \subset \tau_1 \subset \hdots \subset \tau_{m-1} \subseteq \tau_m = \tau_E$.
\vspace{-0.2cm}
\item If $E$ has no pending invocations, then $\tau_{m-1} \subset \tau_m$,
otherwise $\tau_{m-1} = \tau_m$
\vspace{-0.2cm}
\item If $E$ satisfies a task with carrier map $\Delta$, then, that for each $i$, 
$\tau_i \in \Delta(\sigma_i)$.
\vspace{-0.2cm}
\item For every response $e$ and invocation $e'$ in $E$ such that
$e$ precedes $e'$ and they do not match each other, 
we have $i < j$, where $i$ is the smallest integer such that 
$e \in \tau_i$ and $j$ is the smallest integer such that $e' \in \sigma_j$.
\vspace{-0.2cm}
\item For every response $e$ of $E$ in $\tau_i \setminus \tau_{i-1}$,
$\sigma_i$ contains all invocations preceding $e$ in $E$.

%\item for every two operations $op_1, op_2$ in $\overline E$ such that $op_1$ precedes $op_2$ 
%(i.e., $op_1 \stackrel{op}{\longrightarrow} op_2$ in the partial order 
%$\widehat OP = (OP, \stackrel{op}{\longrightarrow} )$ associated to $\overline E$),
%the invocation of $op_1$ and the response of  $op_2$ appear in $\tau_i$ and $\tau_j$, with $i < j$.
\end{enumerate}
\end{claim}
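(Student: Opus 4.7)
The plan is to prove Claim~\ref{claim-sequence} by induction on the number of loop iterations of ${\sf sequences}()$, establishing a small set of structural invariants preserved by each branch of the loop body (adding an invocation to the current $\sigma$, triggering a transition, or adding a response to the current $\tau$). The key invariants I will maintain are: (a) $|A| = |B|$ at every point, because the ``else'' branch and the final append each add exactly one element to both; (b) for $j < i$, the values $\sigma_j$ and $\tau_j$ are frozen once the transition to index $j+1$ has been performed; (c) the current $\sigma_i$ equals the set of all invocations in the processed prefix $F$, and the current $\tau_i$ equals the set of all responses in $F$; and (d) the transition from $j$ to $j+1$ is triggered by processing an invocation $e$ and occurs only when $\tau_j \setminus \tau_{j-1} \neq \emptyset$.

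From these invariants, parts~1 and~2 are essentially immediate. Part~1 follows from (a). For part~2, the chain $\sigma_0 \subsetneq \cdots \subsetneq \sigma_m$ is strict because each transition sets $\sigma_{j+1} \leftarrow \sigma_j \cup \{e\}$ for a freshly processed invocation $e \notin \sigma_j$; the chain $\tau_0 \subseteq \cdots \subseteq \tau_m$ follows from the assignment $\tau_{j+1} \leftarrow \tau_j$ at each transition, and the intermediate strict inclusions come from~(d). Part~3 will follow by case analysis on whether the last processed event is a response (in which case it is added to $\tau_m$ strictly beyond $\tau_{m-1}$) or an invocation that triggers a transition (in which case $\tau_m$ is initialized to $\tau_{m-1}$ and never grows further).

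For part~4, I will identify, for each index $i \leq m$, a prefix $E^{(i)}$ of $E$ satisfying $\sigma_{E^{(i)}} = \sigma_i$ and $\tau_{E^{(i)}} = \tau_i$: namely, $E^{(i)}$ consists of all events processed before the event that triggers the transition to index $i+1$, with $E^{(m)} = E$. Invariant~(c) gives precisely this identification, and applying the hypothesis that $E$ satisfies the task to the prefix $E^{(i)}$ yields $\tau_i = \tau_{E^{(i)}} \in \Delta(\sigma_{E^{(i)}}) = \Delta(\sigma_i)$.

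Parts~5 and~6 follow from the processing order combined with the invariants. Part~6 is direct: when a response $e$ is added to $\tau_i$, every invocation preceding $e$ in $E$ has already been placed into some $\sigma_j$ with $j \leq i$, and is therefore in $\sigma_i$ since $\sigma_j \subseteq \sigma_i$. The main obstacle is part~5. Given a response $e$ preceding an unmatched invocation $e'$ in $E$, I must show $i_e < j_{e'}$, where $i_e$ and $j_{e'}$ are the least indices for which $e \in \tau_{i_e}$ and $e' \in \sigma_{j_{e'}}$. I would split on whether the processing of $e'$ triggers a transition: if it does, $e'$ lands in $\sigma_{i_{e'}+1}$, so $j_{e'} = i_{e'}+1 > i_e$; if it does not, the algorithm's guard $\tau_{i_{e'}} \setminus \tau_{i_{e'}-1} = \emptyset$ forces $e \in \tau_{i_{e'}-1}$, so $i_e \leq i_{e'}-1 < j_{e'}$.
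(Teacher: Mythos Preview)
Your proposal is correct and follows essentially the same line as the paper's proof: both read the properties off the loop's behavior, with your explicit invariants (a)--(d) being a more careful articulation of what the paper dismisses as ``follows directly from the code'' for items (1), (2), (6), and both handle item (4) by exhibiting, for each $i$, a prefix of $E$ whose invocation and response sets are exactly $\sigma_i$ and $\tau_i$. For item (5) the arguments are equivalent---the paper assumes $i=j$ and derives a contradiction by observing that $e\in\tau_i\setminus\tau_{i-1}$ forces $e'$ into $\sigma_{i+1}$, while you do the same computation as a direct case split on whether processing $e'$ triggers a transition.
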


\begin{proofCl}
Items (1), (2) and (6) follow directly from the code. 
For item (3), note that if $E$ has no pending invocations, 
it necessarily ends with a response,
which is added to $\tau_m$, and thus $\tau_{m-1} \subset \tau_m$.
For item (4), consider a pair $\sigma_i$ and $\tau_i$, 
and let $E'$ be the shortest prefix of $E$ that contains
each response in $\tau_i$. Note that the simplex containing all 
invocations in $E'$ is $\sigma_i$.
Since, by hypothesis, $E$ satisfy $T$,
it follows that $\tau_i \in \Delta(\sigma_i)$.

For item (5), consider such events $e$ and $e'$.
Since $e$ precedes $e'$, the procedures analyzes
first $e$, from which follows that it is necessarily true that $i \leq j$,
so we just need to prove that $i \neq j$.
Suppose, by contradiction, that $i = j$.
Consider the beginning of the while loop when $e'$ is analyzed. 
Note that $e' \notin \sigma_i$ at that moment.
Also, note that $e \in \tau_i \setminus \tau_{i-1}$
because $i$ is the smallest integer such that $e \in \tau_i$
and its was analyzed before $e'$. Thus,
when the procedure process $e'$, puts it in $\sigma_{i+1}$, 
which is a contradiction,
because in the final sequence of simplexes $e' \notin \sigma_i$.
\renewcommand{\toto}{claim-sequence}
\end{proofCl}

\begin{theorem-repeat}{theo-from-tasks-to-objects}
For every task $T$, there is an interval-sequential object $O_T$ such that 
any execution $E$ satisfies $T$ if and only if 
it is interval-linearizable with respect to $O_T$.
\end{theorem-repeat}

\begin{proofT}
The structure of the proof is the following. 
(1) First, we define $O_T$ using $T$,
(2) then, we show that every execution that satisfies $T$,
is interval-linearizable with respect to $O_T$, and
(3) finally, we prove that every execution that is interval-linearizable
with respect to $O_T$, satisfies $T$. 

\paragraph{Defining $O_T$:}
Let $T = \langle \m{I}, \m{O}, \Delta \rangle$.
To define $O_T$,  we first define its sets with invocations, response and states:
$Inv = \{ (id, x) : \{(id, x)\} \in \m{I} \ \}$,
$Res = \{ (id, y) : \{(id, y)\} \in \m{O} \ \}$ and 
$Q = \{ (\sigma, \tau) : \sigma \in \m{I} \wedge \tau \in \m{O} \}$.
The interval-sequential object $O_T$ has one initial state: $(\emptyset, \emptyset)$.
Then $O_T$ will have only one operation and so the name of it does not appear
in the invocation and responses. 

The transition function $\delta$ is defined as follows.
Consider an input simplex $\sigma$ of $T$ and let $E$ be an execution that satisfies
$T$ with $\sigma_E = \sigma$ and $\tau_E \in \Delta(\sigma_E).$
%and a $\dim(\sigma)$-dimensional
%output simplex $\tau \in \Delta(\sigma)$. 
%%For every face $\tau'$ of $\tau$,
%%we let $carr(\tau', \sigma)$ be the set containing each faces $\sigma'$ of $\sigma$
%%of minimal dimension such that $\tau' \in \Delta(\sigma')$.
Consider the sequences 
$\sigma_0 = \emptyset \subset \sigma_1 \subset \hdots \subset \sigma_m = \sigma_E$
and 
$\tau_0 = \emptyset \subset \tau_1 \subset \hdots \subset \tau_m = \tau_E$
that {\sf Sequences} in Figure~\ref{fig:sequence-faces} produces on $E$.
Then, for every $i = 1, \hdots, m$,  
$\delta((\sigma_{i-1}, \tau_{i-1}), \sigma_i \setminus \sigma_{i-1})$ contains
$((\sigma_i, \tau_i), \tau_i \setminus \tau_{i-1})$.
In other words, we use the sequences of faces to define 
an interval-sequential execution (informally, a grid) that will be accepted by $O_T$:
the execution has $2m$ concurrency classes, 
and for each $i = 1, \hdots, m$, the invocation $(p_j, -)$ (of process $p_j$) belongs to 
the $2i-1$-th concurrency class if $(p_j,-) \in \sigma_i \setminus \sigma_{i-1}$,
and the response to the invocation of appears in the $2i$-th concurrency class
if $(p_j, -) \in \tau_i \setminus \tau_{i-1}$.
%For each $i = 0, \hdots, m$, $O_T$ has the state $(\sigma_i, \tau_i)$.
%Note that $(\sigma_0, \tau_0)$ is the initial set of $O_T$.
We repeat the previos construction for every such $\sigma$ and $E$.
%We now turn our attention to interval-linearizability.

\paragraph{If $E$ satisfies $T$, it is interval-linearizable:}
Consider an execution $E$  that satisfies $T$. 
We prove that $E$ is interval-linearizable with respect to $O_T$.
Since $E$ satisfies $T$, we have that $\tau_E \in \Delta(\sigma_E)$.
By definition, $\Delta(\sigma)$ is $\dim(\sigma)$-dimensional pure,
then there is a $\dim(\sigma)$-dimensional  $\gamma \in \Delta(\sigma)$
such that $\tau_E$ is a face of $\gamma$.
Let $\overline E$ be an extension of $E$ in which the responses in $\gamma \setminus \tau_E$
are added in some order. Thus, there are no pending operation in~$\overline E$.
Consider the sequences of simplexes produced by {\sf Sequences} in Figure~\ref{fig:sequence-faces}
on $\overline E$. As we did when defined $O_T$,
the two sequence define an interval-sequential execution $\widehat S$.
We have the following:
(1) $\widehat S$ is an interval-sequential execution of $O_T$, by construction,
(2) for every $p$, $\widehat S |_p = \overline E |_p$, by construction, and
(3) Claim~\ref{claim-sequence}.5 implies that $\widehat S$
respect the real-time order of invocations and responses in $\overline E$: if
$op_1 \stackrel{op}{\longrightarrow} op_2$ in the partial order 
$\widehat OP = (OP, \stackrel{op}{\longrightarrow} )$ associated to $\overline E$,
then, by the claim, the response of $op_1$ appears for the first time the sequence in $\tau_i$
and the invocation of $op_2$ appears for the first in the sequence in $\sigma_j$,
with $i < j$, and hence, by construction, $op_1$ precedes $op_2$ in $\widehat S$.
We conclude that $\widehat S$ is an  interval-linearization of~$E$.

\paragraph{If $E$ is interval-linearizable, it satisfies $T$:}
Consider an execution $E$ that is interval-linearizable
with respect to $O_T$.  We will show that $E$ satisfies $T$.
There is an interval-sequential execution $\widehat S$ that is a linearization of $E$,
since $E$ is interval-linearizable.
Consider any prefix $E'$ of $E$ and let $\widehat S'$ be the shortest prefix of $\widehat S$
such that (1) it is an interval-sequential execution and (2) every completed invocation in $E'$ 
is completed in $\widehat S'$ (note that there might be pending invocations in $E'$ that does not
appear in $\widehat S'$). By construction, $\widehat S'$ defines two sequences of 
simplexes $\sigma_0 = \emptyset \subset \sigma_1 \subset \hdots \subset \sigma_m$
and 
$\tau_0 = \emptyset \subset \tau_1 \subset \hdots \subset \tau_m$
with $\tau_i \in \Delta(\sigma_i)$, for every $i = 1, \hdots, m$.
%Let $\sigma$ be the input simplex containing the invocations in $E'$ and 
%let $\tau$ be the output simplex containing the responses in $E'$.
Observe that $\sigma_{E'} = \sigma_m$ and $\tau_{E'} \subseteq \tau_m$,
and thus $\tau_{E'} \in \Delta(\sigma_{E'})$ because $\tau_m \in \Delta(\sigma_m)$.
What we have proved holds for every prefix $E'$ of $E$,
then we conclude that $E$ satisfies $T$.
\renewcommand{\toto}{theo-from-tasks-to-objects}
\end{proofT}

\begin{lemma-repeat}{lemm-queueNoTask}
There is a sequential one-shot object $O$ such that there is no
task $T_O$, satisfying that an execution $E$ is linearizable with respect to $O$
if and only if $E$ satisfies $T_O$ (for every $E$).
\end{lemma-repeat}

\begin{proofL}
Consider a restricted queue $O$ for three processes, $p$, $q$ and $r$,
in which, in every execution, $p$ and $q$ invoke $enq(1)$ and $enq(2)$, respectively,
and $r$ invokes $deq()$. If the queue is empty, $r$'s dequeue operation gets $\bot$.

Suppose, for the sake of contradiction, that there is a corresponding task 
$T_O=(\m{I},\m{O},\Delta)$, as required by the lemma.
The input complex $\m{I}$ consists of one vertex for
each possible operation by a process, namely, the set of
vertices is $\{(p,enq(1)),(q,enq(2)),(r,deq())\}$, and $\m{I}$ consists
of all subsets of this set. Similarly, the output complex $\m{O}$
contains one vertex for every possible response to a process,
so it consists of the set of vertices 
$\{(p,ok),(q,ok),(r,1),(r,2),(r,\bot)\}$. It should contain
a simplex $\sigma_x=\{(p,ok),(q,ok),(r,x)\}$ for each value of $x\in\{1,2,\bot\}$,
because there are executions where  $p,q,r$ get such values, respectively.
See Figure~\ref{fig:fig-queueCounterE}.
\begin{figure*}
\begin{center}
	\epsfxsize=6.0in
%	\epsfbox[0 0 850 850]{fig-intAutomata.eps}%
	\epsfbox{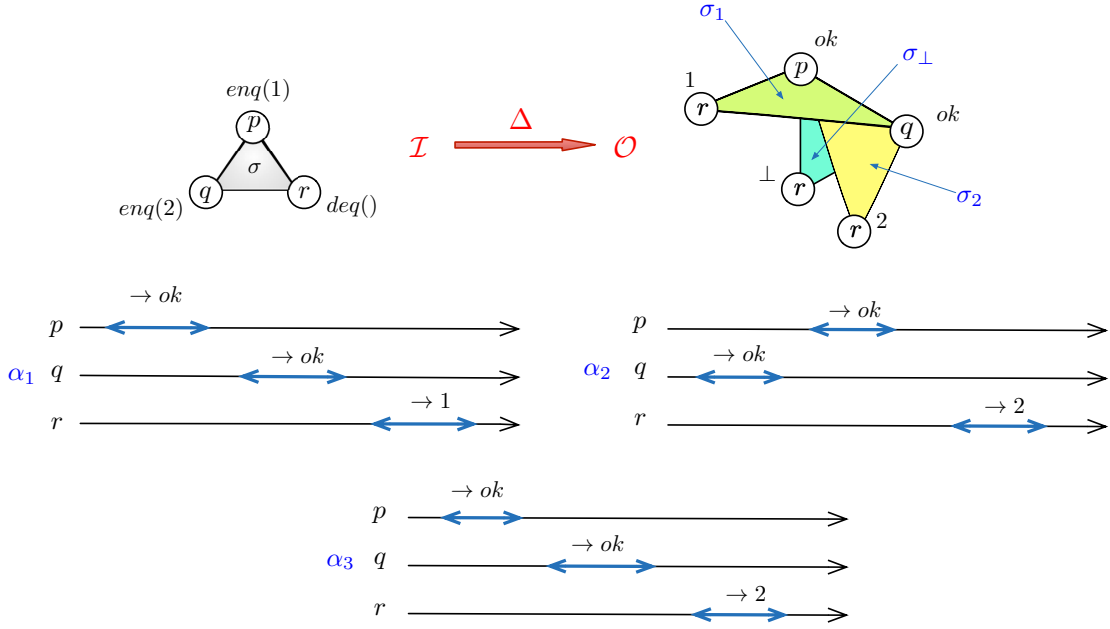}%
    \end{center}
\caption{Counterexample for a simple queue object}
\label{fig:fig-queueCounterE}
\end{figure*}

Now,  consider the  three sequential executions of the figure, $\alpha_1,\alpha_2$
and~$\alpha_\bot$. In $\alpha_1$ the process execute their operations
in the order $p,q,r$, while in $\alpha_2$ the order is $q,p,r$.
In $\alpha_1$ the response to $r$ is $1$, and if $\alpha_2$ it is $2$.
Given that these executions are linearizable for $O$,
they should be valid for $T_O$. This means that every  prefix of $\alpha_1$
 should be valid: 
\begin{align*}
\{ (p,ok)\} &= \Delta((p,enq(1)) \\ 
\{ (p,ok),(q,ok)\} &\in \Delta(\{ (p,enq(1),(q,enq(2)\}) \\
\sigma_1=\{ (p,ok),(q,ok),(r,1)\} &\in \Delta(\{ (p,enq(1),(q,enq(2),(r,deq())\})=\Delta(\sigma)
\end{align*}
Similarly from $\alpha_2$ we get that
$$
\sigma_2=\{ (p,ok),(q,ok),(r,2)\} \in \Delta(\sigma)
$$
%,(q,enq(2)),(r,deq()
But now consider $\alpha_3$, with the same sequential  order  $p,q,r$
of operations, but now $r$ gets back value $2$.
This execution is not linearizable for $O$, but is accepted by $T_O$
because each of the prefixes of $\alpha_3$ is valid.
More precisely,   the set of inputs and the set of outputs of $\alpha_2$
are identical to the sets of inputs and set of outputs of $\alpha_3$.
\renewcommand{\toto}{lemm-queueNoTask}
\end{proofL}

\begin{theorem-repeat}{theo-from-objects-to-tasks}
For every one-shot interval-sequential object $O$ with a single total operation,
there is a refined task $T_O$ such that any execution $E$ is interval-linearizable with respect to $O$
if and only if $E$ satisfies $T_O$.
\end{theorem-repeat}

\begin{proofL}
The structure of the proof is the following. 
(1) First, we define $T_O$ using $O$,
(2) then, we show that every execution that is interval-linearizable
with respect to $O$, satisfies $T_O$, and
(3) finally, we prove that every execution that satisfies $T_O$,
is interval-linearizable with respect to $O$.

\paragraph{Defining $T_O$:}
We define the refined task $T_O = (\m{I}, \m{O}, \Delta)$.
First, since $O$ has only one operation, we assume that its invocation
and responses have the form $inv(p_j, x_j)$ and $res(p_j, y_j)$.
Let $Inv$ and $Res$ be the sets with the invocations and responses of $O$.
Each subset $\sigma$ of $Inv$ containing invocations with different processes,
is a simplex of $\m{I}$. It is not hard to see that $\m{I}$ is a chromatic pure $(n-1)$-dimensional complex.
$\m{O}$ is defined similarly: each subset $\tau$ of $Res \times 2^{Inv}$ containing responses in the first entry 
with distinct processes, is a simplex of $\m{O}$.

We often use the following construction in the rest of the proof.
Let $E$ be an execution. 
Recall that $\sigma_E$ and $\tau_E$ are the simplexes (sets)
containing all invocations and responses in $E$.
Let $\tau_0 = \emptyset \subset \tau_1 \subset \hdots \subset \tau_m = \tau_E$
and 
$\sigma_0 = \emptyset \subset \sigma_1 \subset \hdots \subset \sigma_m = \sigma_E$
be the sequences of simplexes produced by {\sf Sequences} in Figure~\ref{fig:sequence-faces}
on $E$. 
These sequences define an output simplex $\{ (res, \sigma_i) :  res \in \tau_i \setminus \tau_{i-1}\}$ of $T_O$,
which will be denoted $\gamma_E$.

We define $\Delta$ as follows.
Consider an input simplex $\sigma \in \m{I}$.
Suppose that $E$ is an execution such that 
(1) $\sigma_E = \sigma$
(2) it has no pending invocations and
(3) it is interval-linearizable with respect to $O$.
Note that $\dim(\sigma) = \dim(\tau_E)$ and $ID(\sigma) = ID(\tau_E)$.
Consider the simplex $\gamma_E$ induced by $E$, as defined above.
Note that  $\dim(\sigma) = \dim(\gamma_E)$ and $ID(\sigma) = ID(\gamma_E)$.
Then, $\Delta(\sigma)$ contains $\gamma_E$ and all its faces.
We define $\Delta$ by repeating this for each such $\sigma$ and $E$.

Before proving that $T$ is a task, we observe the following.
Every $\dim(\sigma)$-simplex $\gamma_E \in \Delta(\sigma)$
is induced by an execution $E$ that is interval-linearizable.
Let $\widehat S$ be an interval-linearization of $E$.
Then, for any execution $F$ such that $\gamma_F = \gamma_E$ (namely, {\sf Sequences} produce
the same sequences on $E$ and on $F$), $\widehat S$ is an interval-linearization of $F$:
(1) by definition, $\widehat S$ is an interval-sequential execution of $O$,
(2) for every process $p$, $F|_p = \widehat S|_p$, and
(3) $\widehat S$ respects the real-time order of invocations and response in $F$
because it respects that order in $E$ and, as already said, they have the same sequence 
of simplexes and, by Claim~\ref{claim-sequence}.5, 
these sequences reflect the real-time order of invocations and responses.

We argue that $T$ is a refined task.
Clearly, for each $\sigma \in \m{I}$, $\Delta(\sigma)$ is a pure and chromatic complex of dimension $\dim(\sigma)$,
and for each $\dim(\sigma)$-dimensional $\tau \in \Delta(\sigma)$, $ID(\tau) = ID(\sigma)$.
Consider now a proper face $\sigma'$ of $\sigma$.
By definition, each $\dim(\sigma')$-dimensional simplex $\gamma_{E'} \in \Delta(\sigma')$
corresponds to an execution $E'$ whose set of invocations is $\sigma'$,
has no pending invocations and is interval-linearizable with respect to $O$.
Let $\widehat S'$ be an interval-linearization of $E'$ and let $s'$
be the state $O$ reaches after running $\widehat S'$.
Since the operation of $O$ is total, from the state $s'$, the invocations in $\sigma \setminus \sigma'$ can be invoked one by 
one until $O$ reaches a state $s$ in which all invocations in $\sigma$ have a matching response.
Let $\widehat S$ be the corresponding interval-sequential execution and let 
$E$ be an extension of $E'$ in which (1) every invocation in $\sigma \setminus \sigma'$ has the response 
in $\widehat S$ and (2) all invocations first are appended in some order and then all responses
are appended in some  (not necessarily the same) order.
Note that $\widehat S'$ is a prefix of $\widehat S$ and actually $\widehat S$ is
an interval-linearization of $E$.
Then, $\Delta(\sigma)$ contain the $\dim(\sigma)$-simplex $\gamma_E$ induced by $E$.
Observe that $\gamma_{E'} \subset \gamma_E$, hence, $\Delta(\sigma') \subset \Delta(\sigma)$.
Finally, for every $\dim(\sigma)$-simplex $\gamma_E \in \Delta(\sigma)$, 
for every vertex $(id_i, y_i, \sigma_i)$ of $\gamma_E$, we have that $\sigma_i \subseteq \sigma$
because $\gamma_E$ is defined through an interval-linearizable execution $E$.
Therefore, we conclude that $T_O$ is a task.

%We stress that, for each input simplex $\sigma$, 
%for each output vertex $(res(p_j, y_j), \sigma_j) \in \Delta(\sigma)$, $\sigma_j$
%is just a way to model distinct output vertexes in $\Delta(\sigma)$ 
%whose output values (in their first $res(p_j, y_j)$) are the same,
%then a process that outputs that vertex does not actually outputs $\sigma_j$.
%In fact, $\sigma_j$ corresponds to the set of invocations
%that precede $res(p_j, y_j)$ in a given execution.
%Then, this is also modeling the order of invocations and responses 
%in the executions that are interval-linearizable with respect to $O$
%(intuitively, the invocations that a process ``sees'' while computing its output value in $res$).
%Thus, in what follows, when we say that an execution $E$ satisfies $T_O$,
%we mean that, for each prefix $E'$ of $E$,
%$\gamma_{E'} \in \Delta(\sigma)$.
%Recall that $\gamma_{E'}$ is defined through the sequences
%$\tau_0 = \emptyset \subset \tau_1 \subset \hdots \subset \tau_m = \tau$ and
%$\sigma_0 = \emptyset \subset \sigma_1 \subset \hdots \subset \sigma_m = \sigma$
%produced by {\sf Sequences} in Figure~\ref{fig:sequence-faces} on $E'$.
%Claim~\ref{claim-sequence}.6 states that, for each $res \in \tau_i \setminus \tau_{i-1}$,
%$\sigma_i$ contains all invocations that precede $res$ in $E$.
%As already argued, there is no problem in considering this information in an execution in order
%to know if it satisfies a task, since we just want to know of an execution exhibits a behavior
%that agrees with $T_O$.

\paragraph{If $E$ is interval-linearizable, it satisfies $T_O$:}
Consider an execution $E$ that is interval-linearizable with respect to $O$.
We prove that $E$ satisfies $T_O$.
Since $E$ is interval-linearizable, 
there is an interval-linearization $\widehat S$ of it.
Let $E'$ be any prefix of $E$ and let 
$\widehat S'$ be the shortest prefix of $\widehat S$
such that (1) it is an interval-sequential execution and (2) every completed invocation in $E'$ 
is completed in $\widehat S'$ (note that there might be pending invocations in $E'$ that does not
appear in $\widehat S'$).
Note that $\widehat S'$ is an interval-linearization of $E'$.
Since interval-linearizability is non-blocking, Theorem~\ref{theo-non-blocking}, 
there is an interval-linearization $\widehat S''$ of $E'$ in which every invocation
in $E'$ has a response. Let $E''$ be an extension of $E'$ that contains
every response in $\widehat S''$ (all missing responses are appended at the end in some order).
%Let $\sigma$ be the input simplex containing all invocations of $E'$ (or equally $E''$).
Note that $\sigma_{E'} = \sigma_{E''}$ and $\tau_{E'} \subseteq \tau_{E''}$.
Consider the output simplexes $\gamma_{E'}$ and $\gamma_{E''}$ induced by $E'$ and $E'$.
Observe that $\gamma_{E'} \subseteq \gamma_{E''}$ (because $E'$ differs from $E''$
in some responses at the end).
By definition of $T_O$, $\gamma_{E''} \in \Delta(\sigma_{E'})$,
and hence $\gamma_{E'} \in \Delta(\sigma_{E'})$.
This holds for every prefix $E'$ of $E$, and thus $E$ satisfies~$T_O$.

\paragraph{If $E$ satisfies $T_O$, it is interval-linearizable:}
Consider an execution $E$ that satisfies $T_O$. We prove that $E$ is interval-linearizable
with respect to~$O$.
%Let $\sigma$ and $\tau$ be the simplexes containing the invocations and responses of $E$.
Let $\gamma_E$ be the output simplex induced by $E$.
Since $E$ satisfies $T_O$, $\gamma_E \in \Delta(\sigma_E)$.
Since $\Delta(\sigma_E)$ is a pure $\dim(\sigma_E)$-dimensional complex,
there is a $\dim(\sigma)$-simplex $\gamma_{\overline E} \in \Delta(\sigma_E)$ such that 
$\gamma_E \subseteq \gamma_{\overline E}$.
Observe that $\sigma_E = \sigma_{\overline E}$ while $\tau_E \subseteq \tau_{\overline E}$.
Therefore, $\overline E$ is an extension of $E$ in which the responses in 
$\tau_{\overline E} \setminus \tau_E$ are appended in some order at the end of $E$.
Now, by construction, $\gamma_{\overline E} \in \Delta(\sigma_E)$ because 
there is an interval-sequential execution $\widehat S$ of $O$ that is
an interval-linearization of $\overline E$.
We have that $\widehat S$ is an interval-linearization of $E$ as well because,
as already mentioned, $\overline E$ is an extension of $E$
in which the responses in 
$\tau_{\overline E} \setminus \tau_E$ are appended in some order at the end.

%Let $\tau_0 = \emptyset \subset \tau_1 \subset \hdots \subset \tau_k = \tau$ and
%$\sigma_0 = \emptyset \subset \sigma_1 \subset \hdots \subset \sigma_k = \sigma$,
%and $\tau'_0 = \emptyset \subset \tau'_1 \subset \hdots \subset \tau'_m = \rho$ and
%$\sigma'_0 = \emptyset \subset \sigma'_1 \subset \hdots \subset \sigma'_m = \sigma$ 
%be the sequences of simplexes produced by the procedure in Figure~\ref{fig:sequence-faces}
%on $E$ and $F$, respectively. Observe that $k \leq m$ because $E$ is possibly shorter than $F$,
%(they might just differ in the set of responses).
%We claim that for every $i = 1, \hdots, k-1$, $\tau_i = \tau'_i$ and $\sigma_i = \sigma'_i$.
%By contradiction, suppose that this is not true and let $i$ be the smallest integer 
%this does not hold.
%Let $(res,\sigma_i)$ for some $res \in \tau_i \setminus \tau_{i-1}$.
%We have that $(res, \sigma_i) \in \gamma_E$. Note that if $\sigma_i \neq \sigma'_i$,
%then $(res, \sigma_i) \in \gamma_F$, which contradicts that $\gamma_E \subseteq \gamma_F$.
%Thus, it must be that $\sigma_i \neq \sigma'_i$
\renewcommand{\toto}{theo-from-objects-to-tasks}
\end{proofL}

\end{document}